\numberwithin{equation}{section}
\DeclareFontFamily{U}{mathx}{}
\DeclareFontShape{U}{mathx}{m}{n}{<-> mathx10}{}
\DeclareSymbolFont{mathx}{U}{mathx}{m}{n}
\DeclareMathAccent{\widecheck}{0}{mathx}{"71}
\def\R{{\mathbb R}}
\def\N{{\mathbb N}}
\def\AA{\mathcal{A}}
\def\BB{\mathcal{B}}
\def\FF{\mathcal{F}}
\def\SS{\mathcal{S}}
\def\HH{\mathcal{H}}
\def\KK{\mathcal{K}}
\def\Tr{\mathrm{Tr}}
\def\Op{\operatorname{Op}}
\def\sigmaPc{{\sigma}_{\Pi_N}^{\hbar}}
\newtheorem{thm}{Theorem}
\numberwithin{thm}{section}
\newtheorem{cor}[thm]{Corollary}
\newtheorem{prop}[thm]{Proposition}
\newtheorem{lem}[thm]{Lemma}
\newtheorem{definition}{Definition}
\theoremstyle{definition}
\theoremstyle{remark}
\newtheorem{rem}{Remark}
\newcommand{\be}{\begin{equation}}
\newcommand{\ee}{\end{equation}}
\newcommand{\ben}{\begin{equation*}}
\newcommand{\een}{\end{equation*}}
\def\_#1{\def\next{#1}%
 \ifx\next\risingsign\expandafter\rising\else^{\underline{#1}}\fi}
\def\risingsign{^}
\def\rising#1{^{\overline{#1}}}
\title[Truncated quantum  observables  and their semiclassical limit]{Truncated quantum  observables \\and their semiclassical limit}
\author[F.~D.~Cunden]{Fabio Deelan Cunden}
\address{Dipartimento di Matematica, Univerisit\`a degli Studi di Bari, I-70125 Bari, Italy, and
INFN, Sezione di Bari, I-70126 Bari, Italy
}
\email[Corresponding author]{fabio.cunden@uniba.it}
\author[M.~Ligab\`o]{Marilena Ligab\`o}
\address{Dipartimento di Matematica, Univerisit\`a degli Studi di Bari, I-70125 Bari, Italy}
\email{marilena.ligabo@uniba.it}
\author[M.~C.~Susca]{Maria Caterina Susca}
\address{Dipartimento di Matematica, Univerisit\`a degli Studi di Bari, I-70125 Bari, Italy}
\email{m.susca11@studenti.uniba.it}
\begin{document}

\begin{abstract}  
For quantum observables $H$ truncated on the range of orthogonal projections $\Pi_N$ of rank $N$, we study the corresponding Weyl symbol in the phase space in the semiclassical limit of vanishing Planck constant $\hbar\to0$ and large quantum number $N\to\infty$, with $\hbar N$ fixed. Under certain  assumptions, we prove the $L^2$- convergence of the Weyl symbols to a symbol truncated (hence, in general discontinuous) on the classically allowed region in phase space. As an illustration of the general theorems we analyse truncated observables for the harmonic oscillator and for a free particle in a one-dimensional box. In the latter case, we also compute the microscopic pointwise limit of the symbols near the boundary of the classically allowed region.

\end{abstract}

  \subjclass[2020]{{81Q20, 81S30, 47A75, 33C45,46L87}}
\keywords{{Semiclassical analysis, Weyl quantization, Truncated operators, Spectral projections, Orthogonal polynomials, Non-commutative geometry}}

  \maketitle

  \thispagestyle{empty}

  \section{Introduction}
   \label{sec:Introduction}
Let $H$ be a self-adjoint operator on $L^2(\R)$, and  $\left(\Pi_N\right)_{N\geq1}$  a monotone family of orthogonal projections, $\operatorname{Ran} \Pi_1\subset\operatorname{Ran} \Pi_2\subset\operatorname{Ran} \Pi_3\subset\cdots$, with $\operatorname{rank}\Pi_N=N$ for all $N\geq1$. 
If $\operatorname{Ran} \Pi_N\subset D(H)$, it makes sense to consider the \emph{truncated quantum observables} 
$$H_N:=\Pi_NH\Pi_N,$$ 
and, under some integrability conditions, the associated Weyl symbols $\sigma_{H_N}^{\hbar}(x,p)$, where $\hbar>0$. The question addressed in this paper is the \emph{semiclassical  limit}  of  $\sigma_{H_N}^{\hbar}$, as $\hbar\to0$.

Truncated linear operators emerge in a variety of mathematical settings, and the study of their semiclassical limit is an interesting area in it own right, fueled mainly by applications in quantum theory. 
In fact, our main motivation for this study comes from the semiclassical limit of Zeno Hamiltonians and Zeno dynamics~\cite{chernoff,Facchi08,Exner07,Exner21,Facchi10,Matolcsi03,Misra}.  In the particular case $H=I$ (the identity operator), the semiclassical limit of the Weyl symbol is closely related to the scaling limits of determinantal point processes~\cite{Peres06,Bornemann16} often applied to active research topics that include eigenvalues of large random matrices, and number statistics of non-interacting fermions~\cite{Bettelheim11,Bettelheim12,Cunden18,Cunden19,Dean18,Dean19,DeBruyne21,Deift99,Deleporte21,Eisler13,Lacroix17,Nguyen22,Torquato08}. We also mention that spectral truncations (often interpreted energy cut-offs) of operator algebras have been recently studied in the development of the relation between non-commutative geometry and physics, see e.g.~\cite{Connes21,DAndrea14,Lizzi03}.

\par

The outline of the paper is as follows. In the next section we recall some some preliminary
background material. In Section~\ref{sec:main} we present a general discussion and the main results on the semiclassical limit of symbols of truncated observables in $L^2(\R)$ (Theorems~\ref{thm:main}-\ref{thm:main2}). All Theorems in Section~\ref{sec:main} can be extend to higher dimensions for truncated observables in $L^2(\R^d)$, $d\geq1$. Here we keep discussing the case $d=1$ for sake of simplicity. In Section~\ref{subs:harmonic} we illustrate the main theorems for the spectral projections of the harmonic oscillator introduced in~\cite{Cunden23} and discuss the semiclassical limit of a large class of truncated observables.
In Section~\ref{sec:box} we consider the spectral projections of a free particle in a one-dimensional box. 
The model is exactly solvable and provides a rich playground to prove the semiclassical limit of truncated observables by manipulation of explicit formulae. In addition to $L^2$-convergence we prove in this case the pointwise convergence of the symbols at microscopic scales.

\section{Preliminaries}
In this section we introduce some notation and recall definitions and basic results about Fourier transform, linear operators, Weyl symbols and Moyal product, see \cite{Ambrosio11,Figalli12,folland,Lions93,Pool, ReedSimon}.
\label{sec:preliminaries}
\paragraph{{\bf Fourier transform}} In this paper, $\FF$ and $\FF^{-1}$ denote the  Fourier transform and its inverse,
\be
\left(\FF \varphi\right)(x):=\int_{\R_p} \varphi(p)e^{-ipx}dp,\quad \left(\FF^{-1} \psi\right)(p):=\frac{1}{2\pi}\int_{\R_y} \psi(x)e^{ipx}dx.
\ee
Plancherel's theorem reads
\be
\int_{\R_p}\overline{\psi(p)}\varphi(p) dp=\frac{1}{2\pi}\int_{\R_y}\overline{\FF\psi(y)}\FF\varphi(y)dy.
\label{eq:Plancherel_f}
\ee
For functions of several variables, $\FF_j$ is the partial Fourier transform in the $j$-th variable, and  $\FF_j^{-1}$ is its inverse.
\paragraph{{\bf Linear operators and kernels}} 
For a linear  operator $T$ on $L^2(\R)$ we denote, if it exists, by $K_T(x,y)$ its integral kernel, $Tu(x)=\int_{\R_y} K_T(x,y)u(y)dy$. 
The \emph{conjugate kernel} $\widehat{K}_T:=\FF K_T\FF^{-1}$ is
\be
\widehat{K}_T\left(p,q\right)=\frac{1}{2\pi}\int_{\R_x\times\R_y} K_T(x,y)e^{-i(px-qy)}dxdy.
\ee
We have the identity
\be
\widehat{K}_T\left(\frac{1}{\hbar}p-\frac{q}{2},\frac{1}{\hbar}p+\frac{q}{2}\right)=\frac{1}{2\pi}\int_{\R_x\times\R_y}
 \hbar K_T\left(x-\frac{\hbar y}{2},x+\frac{\hbar y}{2}\right)e^{i(q x+py)}dxdy.
\ee
Moreover,  if $K_T\in L^2(\R_x\times\R_y)$ then $T$ is an Hilbert-Schmidt operator and 
$$
\|T\|_{HS}^2=\textrm{Tr}(T^\ast T)= \|K_T\|^2_{L^2(\R_x\times\R_y)}.
$$

\paragraph{{\bf Weyl symbol and Weyl quantisation}} Given $\hbar>0$, the \emph{Weyl symbol} of an operator $T$ 
is defined as
\be
\sigma_T^{\hbar}(x,p):=\int_{\R_y}\hbar K_T\left(x-\frac{\hbar y}{2},x+\frac{\hbar y}{2}\right)e^{ipy}dy.
\ee
{The above correspondence $K_T\mapsto \sigma_T$, maps $S(\R^2)$ into $S(\R_x\times\R_p)$ and extends to a map from $\mathcal{S}'(\R^2)$ into $\mathcal{S}'(\R_x\times\R_p)$. Moreover, it is unitary on $L^2(\R^2)$. (We use the standard notation $\mathcal{S}$ for the Schwarz space and $\mathcal{S}'$ for its continuous dual space i.e., the space of tempered distributions.)}   If $T$ is self-adjoint, then $K_T$ is an Hermitian kernel, i.e. $K_T(y,x)=\overline{K_T(x,y)}$, and the symbol $\sigma_T^{\hbar}$ is real-valued.

In terms of Fourier transform, the Weyl symbol $\sigma_T^{\hbar}$ is 
\be
\left(\FF_2 \sigma_T^{\hbar}\right)(x,y)=\left(2\pi\hbar\right) K_T\left(x-\frac{\hbar y}{2},x+\frac{\hbar y}{2}\right).
\ee
The relation with the conjugate kernel $\widehat{K}_T$ is
\be
\sigma_T^{\hbar}(x,p)=\frac{1}{2\pi}\int_{\R_q}
\widehat{K}_T\left(\frac{1}{\hbar}p-\frac{q}{2},\frac{1}{\hbar}p+\frac{q}{2}\right)e^{i q x}dq,
\ee
or equivalently
\be
\left(\FF_1^{-1} \sigma_T^{\hbar}\right)(q,p)=
\widehat{K}_T\left(\frac{1}{\hbar}p-\frac{q}{2},\frac{1}{\hbar}p+\frac{q}{2}\right).
\ee
From Plancherel's theorem, if $T$ is Hilbert-Schmidt, then
\be
\label{eq:useful_identity_norm}
\|\sigma^{\hbar}_T\|^2_{L^2(\R_x\times\R_p)}=2\pi\hbar \|K_T\|^2_{L^2(\R_x\times\R_y)}=2\pi\hbar \|\widehat{K}_T\|^2_{L^2(\R_q\times\R_p)}.
\ee
If $f=f(x,p)$ is a real-valued function, its Weyl quantisation $\Op^{\hbar}(f)$ is an operator acting (on a suitable subspace of  $L^2(\R)$) as
\be
\label{eq:Weyl_quantisation}  
\Op^{\hbar}(f)\psi(x)=\int_{\R_y\times\R_p} f\left(\frac{x+y}{2},p\right)e^{i\frac{p}{\hbar}(x-y)}\psi(y)\frac{dydp}{2\pi\hbar}.
\ee

\paragraph{{\bf Moyal product}} 
Given two linear operators $T_1$ and $T_2$ on $L^2(\R)$ such that $\operatorname{Ran}(T_2)\subset D(T_1)$, with Weyl symbols $\sigma_{T_1}^\hbar$ and $\sigma_{T_2}^\hbar$ respectively, the \emph{Moyal product} of $\sigma_{T_1}^\hbar$ and $\sigma_{T_2}^\hbar$, denoted $\sigma^{\hbar}_{T_1} \, \sharp \, \sigma^{\hbar}_{T_2}$ is the function defined by the equation
\be
\sigma^{\hbar}_{T_1} \, \sharp \, \sigma^{\hbar}_{T_2}(x,p)=\sigma^{\hbar}_{T_1T_2}(x,p).
\ee
Explicit formulae are:
\begin{align}
\label{eq:Moyal1}
\sigma^{\hbar}_{T_1} \, \sharp \, \sigma^{\hbar}_{T_2}(x,p)&=\int_{\R^4} e^{\frac{2i}{\hbar}[(x-x_1)(p-p_2)-(x-x_2)(p-p_1)]}\prod_{i=1,2}\sigma^{\hbar}_{T_i}(x_i,p_i)\frac{dx_i dp_i}{\pi \hbar}\\
\label{eq:Moyal2}
&=\int_{\R^4} e^{-i(k_1x_2-k_2x_1)}\prod_{i=1,2}\sigma^{\hbar}_{T_i}\left(x-x_i,p-\frac{\hbar k_i}{2}\right)\frac{dx_i dk_i}{2\pi }\\
\label{eq:Moyal3}
&=\int_{\R^4} e^{-i(p_1y_2-p_2y_1)}\prod_{i=1,2}\sigma^{\hbar}_{T_i}\left(x-\frac{\hbar y_i}{2},p-p_i\right)\frac{dy_i dp_i}{2\pi }.
\end{align}
\paragraph{{\bf Dirichlet and sine kernels}} 
The \emph{Dirichlet kernel} is defined as
\be
\label{eq:Dirichletkernel}
D_N(x):=\sum_{|k|\leq N}e^{i k x}=
\left(1+2\sum_{k=1}^N\cos(kx)\right)=
\begin{cases}
\dfrac{\sin\left((2N+1)\frac{x}{2}\right)}{\sin\left(\frac{x}{2}\right)}&\text{if $x\neq0$}\\
2N+1&\text{if $x=0$}
\end{cases}.
\ee
for all nonnegative integers $N$. The \emph{sine kernel} is defined as
\be
\label{eq:sinekernel}
S(x):=\int_{-\frac{1}{2}}^{\frac{1}{2}}e^{ikx}dk=\frac{\sin\frac{x}{2}}{\frac{x}{2}}.
\ee
(It is the Fourier transform of the indicator function $\chi_{\left[-\frac{1}{2},\frac{1}{2}\right]}$.)
\par

\section{General discussion and main results}
\label{sec:main}
To get an idea of what one can expect as a limit for $\sigma_{H_N}^{\hbar}$, we consider the simplest setting where  $H=I$ is the identity operator, and so $H_N=\Pi_N$. 
The orthogonal projections $\Pi_N$ can be represented as
\be
\Pi_{N}=\sum_{k=1}^N |u_k\rangle \langle u_k|,
\ee
for some orthonormal basis $(u_k)_{k\geq1}$ of an infinite dimensional subspace of $L^2(\R)$ (hereafter we use the Dirac notation $|u_k\rangle \langle u_k|$ to denote the projection onto $\textrm{span}\{u_k\}$). 
A first remark should be made about the symbol  of $\Pi_{N}$: 
\begin{equation}
\sigma_{\Pi_N}^{\hbar}(x,p)= \sum_{k=1}^N \sigma_{|u_k\rangle \langle u_k|}^{\hbar}(x,p) = \hbar \sum_{k=1}^N \int_{\R_y} \overline{u_k\left( x-\frac{\hbar y}{2}\right)}u_k\left( x+\frac{\hbar y}{2}\right) e^{-ipy} \, dy .
\end{equation}
{Hence, $\|\sigma_{\Pi_N}^{\hbar}\|_{L^{\infty}(\R_x\times\R_p)}\leq \hbar N$, from  the Cauchy-Schwarz inequality and the condition $\|u_k\|_{L^2(\R)}=1$ for all $k\geq1$. Therefore~\cite[Prop. (1.92)]{folland}, $\sigma_{\Pi_N}^{\hbar}\in L^2(\R_x\times\R_p)\cap  C_0(\R_x\times\R_p)$. Moreover, $\|\sigma_{\Pi_N}^{\hbar}\|_{L^2(\R_x \times \R_p)}=\sqrt{2\pi\hbar N}$.
Then, for all $\varphi\in L^2(\R_x\times\R_p)$, we have the basic estimate}
\begin{align}
\label{eq:basic_est}
\left|\int_{\R_x\times\R_p}\sigma_{\Pi_N}^{\hbar}(x,p)\varphi(x,p)dxdp\right|\leq \sqrt{2\pi \hbar N}\|\varphi\|_{L^2}.
\end{align} 
Therefore, $\sigma_{\Pi_N}^{\hbar}$ has a trivial weak limit, as $\hbar\to0$. In fact, the basic estimate~\eqref{eq:basic_est} suggests that something interesting can emerge in the simultaneous limit $\hbar\to0$, $N\to\infty$ with the product $\hbar N=\mu>0$ fixed. 

Indeed, when $\hbar N=\mu$, the family $(\sigma_{\Pi_N}^{\hbar})_{N\geq1}$ is bounded in $L^2(\R_x\times\R_p)$ and therefore has a limit $\sigma\in L^2(\R_x\times\R_p)$, upon extraction of a subsequence. If the symbols $(\sigma_{\Pi_N}^{\hbar})_{N\geq1}$ were regular, say $C^1(\R_x\times\R_p)$, then 
\be
\sigma_{\Pi_N}^{\hbar}=\sigma_{\Pi_N^2}^{\hbar}=\sigma_{\Pi_N}^{\hbar}\, \sharp\,\sigma_{\Pi_N}^{\hbar}=\left(\sigma_{\Pi_N}^{\hbar}\right)^2+O(\hbar)
\ee
as $\hbar\to0$, see \cite{folland}. Hence, the limit must satisfy $\sigma=\sigma^2$: the limit symbol can only take value $0$ or $1$, excluding Lebesgue negligible sets. We conclude that if the limit is not zero, then it must be the \emph{characteristic function} $\sigma=\chi_{\Omega}$ of a \emph{bounded} region $\Omega\in\R_x\times\R_p$ in the phase space. This informal discussion can be made precise when the $\Pi_N$'s are the spectral projections of Schr\"odinger operators: in that case we expect the symbols $\sigma_{\Pi_N}^{\hbar}$ to concentrate on the corresponding classically allowed region $\Omega$. (See also~\cite{Hernandez-Duenas15}.)

Consider the Schr\"odinger operator $A=\frac{1}{2}\hat{p}^2+V(\hat{x})$, where $\hat{p}=\frac{\hbar}{i}\frac{d}{dx}$ and $V(\hat{x})$ the multiplication operator by $V(x)$. We work in the convenient setting where $V\colon\R\to\R$ is $C^{\infty}(\R)$, bounded from below, and confining, $|V(x)|\to+\infty$, if $|x|\to+\infty$. Then, $A$ is essentially self-adjoint on  $D(A)=H^2(\R)\cap\{u\in L^2(\R)\colon Vu\in L^2(\R)\}$ and has compact
resolvent: it admits a non-decreasing sequence $(E_k)_{k\geq1}$ of eigenvalues tending to $+\infty$, and the associated eigenvectors $(u_k)_{k\geq1}$  form a basis in $L^2(\R)$.
Note that the eigenvalues and eigenfunctions of $A$ depend on $\hbar$,  in general. By an appropriate shift of the potential $V(x)$ we can assume without loss of generality that $A$ is positive.

The Schr\"odinger operator $A$ is the Weyl quantisation $A=\Op^{\hbar}(a)$ of the \emph{classical Hamiltonian function} 
\be
a(x,p)=\frac{1}{2}{p}^2+V({x}),
\ee
see~\eqref{eq:Weyl_quantisation}.
Let $\Pi_N=\chi_{(-\infty,E_N]}(A)$ be spectral projections of $A$. By the integrated Weyl law (see for instance ~\cite[Thm. 6.8]{Zworski12}), 
\be
\lim_{\substack{N\to\infty, \hbar\to0\\\hbar N=\mu}}E_N= E
\ee
with $E=E(\mu)$ being given by the solution of
\be
\mu=\frac{1}{\pi}\int_{\Omega}dxdp=\frac{1}{\pi}\int_{\R}\sqrt{2\left(E-V(x)\right)_+}dx,
\ee 
where 
\be
\Omega=\{(x,p)\in\R_x\times\R_p\colon a(x,p)\leq E\}
\ee 
is the \emph{classically allowed region} (we omit the dependence of $\Omega$ on $E$ or $\mu$). Then, in the simultaneous limit of small Planck's constant $\hbar$ and large quantum number $N$, one expects that $\sigmaPc$ converges to $\chi_{\Omega}$. Here is a precise statement.
\begin{thm} 
\label{thm:main}
Set $\mu>0$. Then, under the above hypotheses on  $V(x)$, 

 \begin{align}
\label{eq:asymp_sigma_P_<N}
\lim_{\substack{N\to\infty, \hbar\to0\\\hbar N=\mu}}\|\sigmaPc -\chi_{\Omega}\|_{L^2(\R_x\times\R_p)}=0.
\end{align}
\end{thm}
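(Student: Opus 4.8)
The plan is to deduce \eqref{eq:asymp_sigma_P_<N} from weak $L^2$-convergence together with convergence of $L^2$-norms, which in a Hilbert space forces strong convergence. The norms already match: by \eqref{eq:useful_identity_norm}, $\|\sigmaPc\|_{L^2}^2=2\pi\hbar\,\Tr\Pi_N=2\pi\hbar N=2\pi\mu$, while $\|\chi_{\Omega}\|_{L^2}^2=|\Omega|=2\pi\mu$ by the integrated Weyl law. So it remains to prove $\sigmaPc\rightharpoonup\chi_{\Omega}$ weakly; since $(\sigmaPc)$ is bounded in $L^2$ it suffices to test against $\varphi\in C_c^{\infty}(\R_x\times\R_p)$, and — splitting a real $\varphi$ into non-negative pieces after adding a suitable multiple of a cut-off — against $\varphi\ge0$. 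For such $\varphi$ the operator $B:=\Op^{\hbar}(\varphi)$ is self-adjoint and trace class with $\sigma^{\hbar}_{B}=\varphi$, so polarising \eqref{eq:useful_identity_norm} for the pair $(\Pi_N,B)$ yields the pairing identity
\ben
\int_{\R_x\times\R_p}\sigmaPc(x,p)\,\varphi(x,p)\,dx\,dp=2\pi\hbar\,\Tr\!\left(\Pi_N\,\Op^{\hbar}(\varphi)\right),
\een
and the whole problem collapses to the semiclassical trace asymptotics $2\pi\hbar\,\Tr\!\left(\Pi_N\Op^{\hbar}(\varphi)\right)\to\int_{\Omega}\varphi$ for the sharp spectral cut-off $\Pi_N=\chi_{(-\infty,E_N]}(A)$, i.e.\ a ``Weyl law with a symbol''.

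To handle the non-smooth cut-off I would sandwich $\Pi_N$ between smooth functions of $A$. Fix $\epsilon>0$ and choose $g_{\pm}\in C_c^{\infty}(\R)$ with $0\le g_-\le g_+\le1$, $g_-\equiv1$ on $(-\infty,E-\epsilon]$ and $\operatorname{supp}g_-\subset(-\infty,E-\epsilon/2)$, $g_+\equiv1$ on $(-\infty,E+\epsilon/2]$ and $\operatorname{supp}g_+\subset(-\infty,E+\epsilon)$. Since $E_N\to E$, for $\hbar$ small one has $g_-(A)\le\Pi_N\le g_+(A)$, all three operators of finite rank (as $A\ge0$ has compact resolvent and $g_\pm$ are compactly supported, while $\operatorname{rank}\Pi_N=N$). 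By the sharp Gårding inequality, $\Op^{\hbar}(\varphi)+C\hbar\ge0$ for a constant $C=C(\varphi)$; pairing the operator inequalities with this non-negative operator and taking traces gives
\ben
\Tr\!\left(g_-(A)(\Op^{\hbar}(\varphi)+C\hbar)\right)\le\Tr\!\left(\Pi_N(\Op^{\hbar}(\varphi)+C\hbar)\right)\le\Tr\!\left(g_+(A)(\Op^{\hbar}(\varphi)+C\hbar)\right).
\een
Multiplying by $2\pi\hbar$, the $C\hbar$-correction contributes $2\pi C\hbar^2N=2\pi C\hbar\mu$ in the middle and $2\pi C\hbar^2\Tr g_{\pm}(A)=O(\hbar)$ on the sides (the Weyl law gives $\Tr g_\pm(A)=O(\hbar^{-1})$), all $o(1)$ — this is exactly where $\hbar N=\mu$ is used. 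For the smooth $g_\pm$ the semiclassical functional calculus of Helffer--Robert (see also \cite{Zworski12}) writes $g_{\pm}(A)=\Op^{\hbar}\!\left(g_{\pm}\circ a+O(\hbar)\right)$ up to a trace-negligible remainder, so together with the exact trace identity $\Tr\!\left(\Op^{\hbar}(f)\Op^{\hbar}(h)\right)=\frac{1}{2\pi\hbar}\int fh\,dx\,dp$ (the polarised form of \eqref{eq:useful_identity_norm}) one obtains $2\pi\hbar\,\Tr\!\left(g_{\pm}(A)\Op^{\hbar}(\varphi)\right)\to\int\left(g_{\pm}\circ a\right)\varphi\,dx\,dp$ as $\hbar\to0$. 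Passing to the limit $\hbar\to0$ in the sandwich, then $\epsilon\to0$, and using that $g_{\pm}\circ a\to\chi_{\Omega}$ pointwise off the Lebesgue-null level set $\{a=E\}$ (with $\varphi$ bounded and compactly supported) via dominated convergence, gives $2\pi\hbar\,\Tr\!\left(\Pi_N\Op^{\hbar}(\varphi)\right)\to\int_{\Omega}\varphi$ — hence the weak convergence and, with the norm matching above, \eqref{eq:asymp_sigma_P_<N}.

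The main obstacle is precisely the sharp cut-off: $\Pi_N$ is not a smooth function of $A$, so its Weyl symbol cannot be compared to $\chi_{\Omega}$ by symbolic calculus directly, which forces the positivity-based sandwich above. Inside that argument the delicate points are the quantitative control of the two error mechanisms — the sharp Gårding defect $C\hbar$ and the $O(\hbar)$ remainder in the functional calculus — which vanish in the joint limit only thanks to $\hbar N=\mu$ (so that $\hbar^2N$ and $\hbar^2\Tr g_\pm(A)$ are $o(1)$), the bookkeeping of the iterated limit $\hbar\to0$ then $\epsilon\to0$, and the two uses of the integrated Weyl law (for the norm matching and for $\Tr g_\pm(A)=O(\hbar^{-1})$).
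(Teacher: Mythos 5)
Your argument is correct in outline and takes a genuinely different route from the paper. The paper's proof passes to the partial Fourier transform $\FF_2$ (an isometry up to a factor $2\pi$), which turns $\sigmaPc$ into the rescaled kernel $2\pi\hbar K_{\Pi_N}\big(x-\tfrac{\hbar y}{2},x+\tfrac{\hbar y}{2}\big)$, and then invokes the uniform-on-compacts convergence of this rescaled kernel to the sine kernel from Deleporte--Lambert (\cite[Sec.~3]{Deleporte21}); testing against $C_c(\R_x\times\R_y)$ gives weak convergence (Corollary~\ref{corollaryconvF}), and the matching $L^2$-norms upgrade it to strong convergence. You instead exploit the trace-pairing identity
$\int \sigmaPc\,\varphi = 2\pi\hbar\,\Tr\!\bigl(\Pi_N\Op^{\hbar}(\varphi)\bigr)$
and reduce the weak limit to a ``Weyl law with a symbol'' for the sharp spectral cutoff, which you prove by sandwiching $\Pi_N$ between $g_-(A)$ and $g_+(A)$ with smooth $g_\pm$, using the sharp G\aa rding inequality to make the trace pairing monotone and the Helffer--Robert semiclassical functional calculus to evaluate the smooth-cutoff traces; the joint limit $\hbar N=\mu$ is used exactly where you say, to kill the $O(\hbar)$-size defects against traces of size $O(\hbar^{-1})$.

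The comparison: your approach is more self-contained at the level of tools (it replaces the sine-kernel universality input by standard semiclassical symbolic calculus plus a positivity sandwich), and it makes transparent where $\hbar N=\mu$ enters. The paper's kernel approach, on the other hand, is quantitative --- the $O(\hbar)$ rate in Theorem~\ref{thm:convFcompact} is uniform on compacts and directly feeds the microscopic (Airy-scale) discussion in the subsequent remarks --- which the sandwich argument, with its iterated $\hbar\to0$ then $\epsilon\to0$ limit, does not give. Two small points you should make explicit if you flesh this out: (i) the reduction to $\varphi\ge0$ is harmless but should be stated as $\varphi=(\varphi+M\chi)-M\chi$ with $\chi\in C_c^\infty$, $\chi\ge1$ on $\operatorname{supp}\varphi$ and $M\ge\|\varphi\|_\infty$; (ii) the Helffer--Robert calculus with trace-class remainder control needs, in addition to the paper's stated hypotheses on $V$ (smooth, bounded below, confining), some polynomial control on the derivatives of $V$ so that $a$ sits in a good symbol class --- the paper is silent on this, and the same caveat applies to its own use of \cite{Deleporte21}, so this is not a defect of your argument but should be acknowledged.
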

The proof of Theorem~\ref{thm:main}  is postponed to Section \ref{sebsect:proof}.

\begin{rem} \label{rem:1} 
{Theorem~\ref{thm:main} does \emph{not} imply pointwise convergence of the symbols (namely, that $(\sigmaPc-\chi_{\Omega})(x,p)\to 0$ for all $(x,p)\in\R_x\times\R_p$), in general.} Suppose for instance that $V(x)$ is even. Then, the eigenfunctions of $A$ have defined parity $u_k(-x)=(-1)^{k-1}u_k(x)$, $k\geq1$. It is then easy to verify that $\sigmaPc(0,0)
=1+(-1)^{N+1}$. {See Fig.~\ref{fig:symbol}.} The strong $L^2$-convergence only implies the existence of a {subsequence} $\sigma^{\hbar}_{\Pi_{N_k}}$ that converges {almost everywhere} to $\chi_{\Omega}$.
\end{rem}
\begin{rem} The fact that the `macroscopic' limit of the symbols is a characteristic function is a universality result. It is the translation (in the phase space) of the universality of the \emph{sine kernel} for the integral kernel of spectral projections~\cite{Deleporte21}. The universality of the sine kernel is well-known for the \emph{Christoffel-Darboux kernels} of families of orthogonal polynomial on the real line~\cite{Simon08}. Theorem~\ref{thm:main} can be extend to dimension $d\geq1$ (with the sine kernel replaced by more general Bessel type kernels).
\end{rem}
\begin{rem} Note that the limit symbol $\chi_{\Omega}$ is a discontinuous function in $\R_x\times\R_p$. A natural question is what happens at the boundary $\partial\Omega$ in a `microscopic limit', i.e. if we consider the limit of the symbols $\sigma_{\Pi_N}^{\hbar}(x_{\hbar},y_{\hbar})$ with  sequences $(x_{\hbar},y_{\hbar})$ converging to the boundary $\partial\Omega$ as $\hbar\to0$. An analogue of Theorem~\ref{thm:main} can be stated in terms of rescaled Fourier transforms of the symbols. Suppose that $x_0$ is a point of inversion for the classical motion, i.e. $V(x_0)=E$. If $V'(x_0)\neq0$, then for all compact sets $W\Subset\R$, there exist constants $\hbar_0>0$ and $C>0$ such that
\begin{align}
\label{eq:estimate_conv_edge}
&\sup_{ y\in W}\left|\hbar^{-\frac{1}{3}}\FF_2\sigmaPc\left(x_0,\hbar^{-\frac{1}{3}}y\right)-\FF_2{\chi_{\Omega}^{(N)}}\left(y\right)\right|\leq C\hbar^{\frac{1}{3}}
\end{align}
for all $0<\hbar<\hbar_0$, where $\FF_2{\chi_{\Omega}^{(N)}}$ is the \emph{integrated Airy function}, see~\cite{Cunden23}. This universal result (adapted  from~\cite{Deleporte21}) should be compared to the key estimate~\eqref{eq:estimate_conv} underlying Theorem~\ref{thm:main}.  
We stress that the microscopic limit of the symbol of the projections at the boundary of the classically allowed region depends on the behaviour of the potential $V(x)$ at the points $x_0$ of inversion of classical motion. When $V$ is  regular with non-zero derivative at $x_0$,  it is  natural to expect the \emph{Airy scaling} and asymptotic symbols expressed in terms of Airy functions. Different behaviours of the potential $V(x)$ give rise to different scaling limits. See Theorem~\ref{thm:pointwise} in Section~\ref{sub:pointwise} for an explicit calculation of the microscopic limit of the symbols for a free particle confined in a box ($V(x)$ is formally infinite outside an interval). These  models with `hard wall potentials' were studied in detail in~\cite{DeBruyne21}.   
\end{rem}

If we now consider the truncated operators $H_N=\Pi_NH\Pi_N$, one expects the symbol  $\sigma_{H_N}^{\hbar}$  to converge to the restriction of the \emph{principal symbol} of $H$ to the classically allowed region, namely $\sigma_H\chi_\Omega$, where, rather informally, $\sigma_H=\lim_{\hbar\to0}\sigma_H^{\hbar}$. Let us consider $H=\Op^{\hbar}(f)$  the Weyl quantisation of  a real-valued function $f(x,p)$ on the phase space, with $\operatorname{Ran}(\Pi_N)\subset D(H)$. This choice simplifies the problem since the symbol of $\Op^{\hbar}(f)$ is just
\be
\sigma_{\Op^{\hbar}(f)}^{\hbar}(x,p)=f(x,p),
\ee
and thus coincides with its principal symbol (independent of $\hbar$). 
As special cases the reader can think of $H=f(\hat{x})$ being the multiplication of operator by $f(x)$ in the position representation or $H=f(\hat{p})$ the multiplication operator by $f(p)$ in the momentum representation. 

Note that in general $\Op^{\hbar}(f)$ and $\Pi_N$ do \emph{not} commute, and therefore 
$$
\sigma_{\Pi_N \Op^{\hbar}(f)\Pi_N}^{\hbar}\neq \sigmaPc\sigma_{\Op^{\hbar}(f)}^{\hbar}\sigmaPc.
$$
Nevertheless for $\hbar\to0$ one expects the algebra of operators to reduce to a commutative algebra, so that
\begin{align*}
\sigma_{\Pi_N\Op^{\hbar}(f)\Pi_N}^{\hbar}(x,p)&=\sigma_{\Op^{\hbar}(f)\Pi_N\Pi_N}^{\hbar}(x,p)+\sigma_{[\Pi_N,\Op^{\hbar}(f)]\Pi_N}^{\hbar}(x,p)\\
&\sim\sigma_{\Op^{\hbar}(f)\Pi_N}^{\hbar}(x,p)\\
&=\sigma_{\Op^{\hbar}(f)}^{\hbar}\, \sharp \,\sigma_{\Pi_N}^{\hbar}(x,p)\\ 
&=f\, \sharp \,\sigma_{\Pi_N}^{\hbar}(x,p)\\ 
&\sim f(x,p)\sigma_{\Pi_N}^{\hbar}(x,p).
\end{align*}
The first approximation symbol `$\sim$' amounts to say that  $\Op^{\hbar}(f)$ and $\Pi_N$ asymptotically commute on the range of $\Pi_N$. The second `$\sim$'  says that in the semiclassical limit, the noncommutative Moyal product of $f$ and the symbols of   $\sigma_{\Pi_N}^{\hbar}$ reduces to the commutative pointwise product.
\par

\par

The next Theorems provide sufficient conditions for those two `approximations' to hold (recall that here $\hbar\to0$ and $N\to\infty$ simultaneously) thus implying that, if the symbols of $\Pi_N$ have a weak limit, then do so the symbols of the truncated observables $\Pi_N^{}\Op^{\hbar}(f)\Pi_N$. 

In the following, $\left(\Pi_N\right)_{N\geq1}$ is a monotone family of orthogonal projections $\operatorname{Ran} \Pi_1\subset\operatorname{Ran} \Pi_2\subset\operatorname{Ran} \Pi_3\subset\cdots$, with $\operatorname{rank}\Pi_N=N$ for all $N\geq1$.  Denote by $\Pi_N^{\perp}:=1-\Pi_N$ the orthogonal projection onto $\operatorname{Ran}(\Pi_N)^{\perp}$. We assume that $H$ is a (densely defined) self-adjoint operator on $L^2(\R)$ with $\operatorname{Ran}(\Pi_N)\subset D(H)$,  for all $N\geq1$. 
\begin{definition} Let $\mu>0$, and set $\hbar=\mu/N$.  Define the following  Conditions on $\left(\Pi_N\right)_{N\geq1}$ and $H$:
\label{def:conditions}
\begin{enumerate}[label={(C\arabic*)},ref=C\arabic*]\addtocounter{enumi}{-1}
\item \label{item:cond1_gen} The sequence $\left(\sigmaPc\right)_{N\geq1}$ weakly converges to $\sigma\in L^2(\R_x\times\R_p)$, as $N\to\infty$, with $\hbar N=\mu$;
\item\label{item:cond2_gen} The sequence $\left(\sigma_{\Pi_N H\Pi_N}^{\hbar}\right)_{N\geq1}$ is  bounded in $L^2(\R_x\times\R_p)$; 
\item \label{item:cond3_gen} $\lim\limits_{\substack{N\to\infty, \hbar\to0\\\hbar N=\mu}}\|\sigma_{\Pi_N^{\perp}H \Pi_N}^{\hbar}\|_{L^2(\R_x\times\R_p)}=0$.
\end{enumerate}
\end{definition}

Theorem~\ref{thm:main} provides a large class of orthogonal projections satisfying Condition~\eqref{item:cond1_gen}. 
Clearly, all bounded operators $H$ satisfy~\eqref{item:cond2_gen}, and Condition~\eqref{item:cond3_gen} is trivially met if $H$ and $\Pi_N$ commute. However the class of operators obeying Conditions~\eqref{item:cond2_gen}-\eqref{item:cond3_gen} is much larger. See, e.g., the guiding example in Section~\ref{subs:harmonic}. Typically, condition~\eqref{item:cond3_gen} is  the harder to check. 

\begin{thm}
\label{thm:main2} Let $\mu>0$, and set $\hbar=\mu/N$. Let $f_1$ and $f_2$ be continuous real-valued function, and let 
$$H=\Op^{\hbar}(f),\quad f(x,p)=f_1(x)+f_2(p)$$ 
be defined by
\be
H\psi(x)=f_1(x)\psi(x)+\int_{\R_p}\left(\int_{\R_y}f_2(p)e^{i\frac{p}{\hbar}(x-y)}\psi(y)dy\right)\frac{dp}{2\pi\hbar}.
\ee 
If Conditions~\eqref{item:cond1_gen}-\eqref{item:cond2_gen}-\eqref{item:cond3_gen} hold, 
then $\sigma_{\Pi_N^{}\Op^{\hbar}(f)\Pi_N}^{\hbar}$ weakly converges to $f\sigma\in L^2(\R_x\times\R_p)$:
\be
\lim_{\substack{N\to\infty, \hbar\to0\\\hbar N=\mu}}\langle\sigma_{\Pi_N^{}\Op^{\hbar}(f)\Pi_N}^{\hbar}-f\sigma,\varphi\rangle_{L^2(\R_x\times\R_p)}=0,\quad\text{for all $\varphi\in L^2(\R_x\times\R_p)$}.
\ee
If one replaces~\eqref{item:cond2_gen} by
\begin{enumerate}[label={(C\arabic*')},ref=C\arabic*']
\item \label{item:condp2} $\|\sigma_{\Pi_N\Op^{\hbar}(f)\Pi_N}^{\hbar}\|_{L^2}\to \|f\sigma\|_{L^2}$,
\end{enumerate}
then, $\sigma_{\Pi_N^{}\Op^{\hbar}(f)\Pi_N}^{\hbar}$  strongly converges  to $f\sigma$ in $L^2(\R_x\times\R_p)$:
\be
\lim\limits_{\substack{N\to\infty, \hbar\to0\\\hbar N=\mu}}\|\sigma_{\Pi_N^{}\Op^{\hbar}(f)\Pi_N}^{\hbar}-f\sigma\|_{L^2(\R_x\times\R_p)}=0.
\ee
\end{thm}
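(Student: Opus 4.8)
\emph{Reduction.} The plan is to work first with $\Op^\hbar(f)\Pi_N$, whose Weyl symbol is by construction the Moyal product $f\sharp\sigmaPc$, and only at the end to pass to $\Pi_N\Op^\hbar(f)\Pi_N$. Since
\[
\Pi_N\Op^\hbar(f)\Pi_N=\Op^\hbar(f)\Pi_N-\Pi_N^{\perp}\Op^\hbar(f)\Pi_N ,
\]
the symbols $\sigma_{\Pi_N^{}\Op^{\hbar}(f)\Pi_N}^{\hbar}$ and $\sigma_{\Op^{\hbar}(f)\Pi_N}^{\hbar}=f\sharp\sigmaPc$ differ by the symbol of $\Pi_N^{\perp}\Op^\hbar(f)\Pi_N$, which tends to $0$ in $L^2$ by Condition~\eqref{item:cond3_gen} (applied to $H=\Op^\hbar(f)$). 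Hence it suffices to prove $f\sharp\sigmaPc\rightharpoonup f\sigma$. Note that this sequence is bounded in $L^2$ — by~\eqref{item:cond2_gen} (or~\eqref{item:condp2}) together with~\eqref{item:cond3_gen} — and that, by~\eqref{eq:useful_identity_norm}, $\|\sigmaPc\|_{L^2}=\sqrt{2\pi\hbar N}=\sqrt{2\pi\mu}$ is constant along the limit.

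\emph{Passing the test function to the other factor.} Polarising~\eqref{eq:useful_identity_norm} gives the isometry identity $\langle\sigma_{T_1}^{\hbar},\sigma_{T_2}^{\hbar}\rangle_{L^2}=2\pi\hbar\,\Tr(T_1^{\ast}T_2)$. Combined with $\sigma_{\Op^{\hbar}(\varphi)}^{\hbar}=\varphi$, the definition of $\sharp$ (so that $\Op^\hbar(f)\Op^\hbar(\varphi)=\Op^\hbar(f\sharp\varphi)$), cyclicity of the trace (legitimate since $\Pi_N$ is finite rank) and $\Op^\hbar(f)^{\ast}=\Op^\hbar(f)$ ($f$ real), one obtains, for every $\varphi$ in a dense subspace of $L^2(\R_x\times\R_p)$ — I take the Schwartz space $\SS(\R_x\times\R_p)$ —
\[
\langle f\sharp\sigmaPc,\varphi\rangle_{L^2}=\langle\sigmaPc,f\sharp\varphi\rangle_{L^2}=\langle\sigmaPc,f\sharp\varphi-f\varphi\rangle_{L^2}+\langle\sigmaPc,f\varphi\rangle_{L^2}.
\]
The last term converges to $\langle\sigma,f\varphi\rangle$ by~\eqref{item:cond1_gen}, provided $f\varphi\in L^2$; this holds for $\varphi\in\SS$ under the mild assumption that $f_1,f_2$ have at most polynomial growth, a restriction already implicit in asking $f\sigma\in L^2$ and satisfied in all the examples treated below. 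The middle term is bounded in absolute value by $\sqrt{2\pi\mu}\,\|f\sharp\varphi-f\varphi\|_{L^2}$.

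\emph{Semiclassical collapse of the Moyal product.} The heart of the proof is the estimate $\|f\sharp\varphi-f\varphi\|_{L^2}\to0$ as $\hbar\to0$, for $\varphi\in\SS$. Since $\Op^\hbar(f_1)$ is multiplication by $f_1(x)$, computing the integral kernel of $\Op^\hbar(f_1)\Op^\hbar(\varphi)$ and taking its Weyl symbol gives $\FF_2(f_1\sharp\varphi)(x,y)=f_1(x-\tfrac{\hbar y}{2})\,\FF_2\varphi(x,y)$, whereas $\FF_2(f_1\varphi)(x,y)=f_1(x)\,\FF_2\varphi(x,y)$; Plancherel in $y$ then yields
\[
\|f_1\sharp\varphi-f_1\varphi\|_{L^2}^2=\frac{1}{2\pi}\int_{\R_x\times\R_y}\Bigl|f_1\bigl(x-\tfrac{\hbar y}{2}\bigr)-f_1(x)\Bigr|^2\,\bigl|\FF_2\varphi(x,y)\bigr|^2\,dx\,dy .
\]
The integrand vanishes pointwise by continuity of $f_1$ and is dominated, uniformly for $0<\hbar\le\hbar_0$, by $C(1+|x|+|y|)^{k}\,|\FF_2\varphi(x,y)|^2\in L^1$ (polynomial bound on $f_1$, rapid decay of the Schwartz function $\FF_2\varphi$); dominated convergence finishes this piece. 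The term $f_2\sharp\varphi-f_2\varphi$ is handled identically after conjugating by the Fourier transform, so that $\Op^\hbar(f_2)$ becomes multiplication by $f_2(p)$ and $\FF_2$ is replaced by $\FF_1$. Adding the two, $\|f\sharp\varphi-f\varphi\|_{L^2}\to0$.

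\emph{Conclusion and the strong statement.} Combining the last two steps, $\langle f\sharp\sigmaPc,\varphi\rangle\to\langle\sigma,f\varphi\rangle=\langle f\sigma,\varphi\rangle$ for every $\varphi\in\SS(\R_x\times\R_p)$; since $\SS$ is dense in $L^2$ and the sequence is $L^2$-bounded, we conclude $f\sharp\sigmaPc\rightharpoonup f\sigma$ (in particular $f\sigma\in L^2$), and hence $\sigma_{\Pi_N^{}\Op^{\hbar}(f)\Pi_N}^{\hbar}\rightharpoonup f\sigma$ by the reduction in the first step. If moreover~\eqref{item:condp2} holds, then $\|\sigma_{\Pi_N^{}\Op^{\hbar}(f)\Pi_N}^{\hbar}\|_{L^2}\to\|f\sigma\|_{L^2}$, and the elementary Hilbert-space identity $\|u_N-u\|^2=\|u_N\|^2-2\,\mathrm{Re}\langle u_N,u\rangle+\|u\|^2$, together with the weak convergence just established, upgrades this to strong convergence in $L^2(\R_x\times\R_p)$. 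The main obstacle is the third step: because $f_1,f_2$ are only assumed continuous, the standard symbolic-calculus remainder estimates for $f\sharp\varphi-f\varphi$ are not available, and one must instead exploit the explicit oscillatory-integral form of the difference together with dominated convergence — which is exactly what forces the (harmless) polynomial-growth hypothesis on $f_1,f_2$.
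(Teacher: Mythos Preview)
Your overall architecture matches the paper's: reduce $\Pi_N\Op^\hbar(f)\Pi_N$ to $\Op^\hbar(f)\Pi_N$ via~\eqref{item:cond3_gen}, show that the Moyal product $f\sharp\sigmaPc$ collapses to $f\sigmaPc$ against test functions via a dominated-convergence argument on the ``shifted'' difference $f_1(x-\tfrac{\hbar y}{2})-f_1(x)$, and finish with~\eqref{item:cond1_gen}. The paper uses exactly this three-term decomposition (its $I_1,I_2,I_3$). Your organisational device of passing the Moyal product to the test function via the trace identity, $\langle f\sharp\sigmaPc,\varphi\rangle=\langle\sigmaPc,f\sharp\varphi\rangle$, is a pleasant variant: the paper instead bounds $\langle f\sharp\sigmaPc-f\sigmaPc,g\rangle$ directly, but the endpoint (the Plancherel/dominated-convergence computation you wrote) is the same.

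There is, however, a genuine gap. The theorem assumes only that $f_1,f_2$ are \emph{continuous}; no growth condition. Your argument needs $f\varphi\in L^2$ and $f\sharp\varphi\in L^2$ for $\varphi\in\SS$, which fails for, say, $f_1(x)=e^{e^x}$. Your claim that polynomial growth is ``implicit in asking $f\sigma\in L^2$'' is incorrect: the limit $\sigma$ (e.g.\ $\chi_\Omega$ in Theorem~\ref{thm:main}) typically has \emph{compact} support, so $f\sigma\in L^2$ imposes no growth on $f$ whatsoever. Thus as written you have proved a weaker statement than the one claimed.

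The paper sidesteps this by a smarter choice of dense test space: for the $f_1$ piece it takes $g\in C_c(\R_x;\FF^{-1}C_c(\R_y))$, so that $\FF_2 g$ is compactly supported in \emph{both} $x$ and $y$; then $f_1(x-\tfrac{\hbar y}{2})$ ranges only over a fixed compact set for small $\hbar$, and continuity alone gives the uniform bound needed for dominated convergence. Dually, for $f_2$ one takes $\FF_1^{-1}g$ compactly supported in $(q,p)$. Your duality trick works verbatim with these test spaces in place of $\SS$: $\FF_2(f_1\sharp\varphi)=f_1(x-\tfrac{\hbar y}{2})\FF_2\varphi$ is then compactly supported and bounded, and $f_1\varphi\in L^2$ because $\varphi$ has compact $x$-support. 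Swapping $\SS$ for these spaces removes your extra hypothesis and recovers the full statement.
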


We can extend the theorem to a class of operators $H=\mathrm{Op}^{\hbar}\left(f\right)$ that `mix' $\hat{x}$ and $\hat{p}$, under some regularity assumptions.
\par

\begin{thm}
\label{thm:main3}
 Let $\mu>0$ and set $\hbar=\mu/N$.  Let $f\in C^{\infty}(\R_x\times\R_p)$ be a real-valued polynomial in $p$ with polynomially bounded coefficients in $x$. 
 Consider
$$H=\mathrm{Op}^{\hbar}\left(f\right)$$ 
the Weyl quantisation of $f$ defined by~\eqref{eq:Weyl_quantisation}  
on its maximal domain $D(H)$. (Namely, $H$ is a differential operator with polynomially bounded coefficients.) 

 If Conditions~\eqref{item:cond1_gen}-\eqref{item:cond2_gen}-\eqref{item:cond3_gen} hold, then
$\sigma_{\Pi_N^{}\Op^{\hbar}(f)\Pi_N}^{\hbar}$ weakly converges to $f\sigma\in L^2(\R_x\times\R_p)$.

If one replaces~\eqref{item:cond2_gen} by
\begin{enumerate}[label={(C\arabic*')},ref=C\arabic*']
\item \label{item:condp2d} $\|\sigma_{\Pi_N\Op^{\hbar}(f)\Pi_N}^{\hbar}\|_{L^2(R_x\times\R_p)}\to \|f\sigma\|_{L^2(R_x\times\R_p)}$,
\end{enumerate}
then, $\sigma_{\Pi_N^{}\Op^{\hbar}(f)\Pi_N}^{\hbar}$  strongly converges  to $f\sigma$ in $L^2(\R_x\times\R_p)$.

The same conclusions hold if $f=f(x,p)$ is a real-valued polynomial in $x$ with polynomially bounded coefficients in $p$ and, in particular, if $f=f(x,p)$ is a real-valued polynomial in both $x$ and $p$.
\end{thm}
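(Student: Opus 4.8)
## Proof Proposal for Theorem~\ref{thm:main3}

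The plan is to reduce Theorem~\ref{thm:main3} to Theorem~\ref{thm:main2} by exploiting the structure of the Moyal product for symbols that are polynomial in one variable. First I would treat the case where $f(x,p)=\sum_{j=0}^{m} a_j(x)p^j$ is polynomial in $p$ with smooth, polynomially bounded coefficients $a_j$. The key algebraic observation is that the Moyal product $f\,\sharp\,\sigmaPc$ can be written explicitly: since $f$ is a finite-order polynomial in $p$, only finitely many $p$-derivatives appear, so from the expansion~\eqref{eq:Moyal2} (or \eqref{eq:Moyal3}) one gets a \emph{finite} sum
\be
f\,\sharp\,\sigmaPc(x,p)=\sum_{k=0}^{m}\frac{1}{k!}\left(\frac{\hbar}{2i}\right)^{k}\sum_{\ell} c_{k,\ell}(x)\,p^{\,\ell}\,\partial_x^{\,k-\ell}\partial_p^{\,?}\sigmaPc(x,p),
\ee
where the leading ($k=0$) term is exactly $f(x,p)\sigmaPc(x,p)$ and every higher term carries a positive power of $\hbar$ multiplied by derivatives of $\sigmaPc$ against polynomially bounded coefficients. (Here I am using that $H$ being a differential operator of order $m$ means $\Op^{\hbar}(f)\Pi_N$ has a symbol given by a truncated — hence exact — Moyal expansion.) The point is that each remainder term, when integrated against a test function $\varphi$ and after integrating by parts to move the $x$-derivatives onto the polynomially decaying half of the pairing, is $O(\hbar)$ times quantities controlled by the $L^2$-boundedness built into the hypotheses.

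Next I would run the same two-step argument as in Theorem~\ref{thm:main2}: decompose
\be
\sigma_{\Pi_N\Op^{\hbar}(f)\Pi_N}^{\hbar}=\sigma_{\Op^{\hbar}(f)\Pi_N}^{\hbar}+\sigma_{[\Pi_N,\Op^{\hbar}(f)]\Pi_N}^{\hbar}=f\,\sharp\,\sigmaPc+\sigma_{\Pi_N^{\perp}\Op^{\hbar}(f)\Pi_N}^{\hbar},
\ee
using $\Pi_N^2=\Pi_N$ and self-adjointness so that $[\Pi_N,\Op^{\hbar}(f)]\Pi_N=-\Pi_N^{\perp}\Op^{\hbar}(f)\Pi_N$ up to the adjoint piece. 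Condition~\eqref{item:cond3_gen} kills the second term in $L^2$, hence weakly. For the first term I use the Moyal expansion above: the $k=0$ term is $f\sigmaPc$, which converges weakly to $f\sigma$ because $\sigmaPc\rightharpoonup\sigma$ by~\eqref{item:cond1_gen} and multiplication by the continuous, polynomially bounded $f$ sends $\varphi$ to an admissible test function when paired against the $L^2$-bounded family (one must check $f\varphi$ is a legitimate pairing partner; for $\varphi$ in a dense class of Schwartz functions this is immediate, and the uniform $L^2$ bound from~\eqref{item:cond2_gen} extends it to all of $L^2$). The higher-order terms are $O(\hbar)$ in the weak sense by the integration-by-parts estimate. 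This gives weak convergence to $f\sigma$; replacing~\eqref{item:cond2_gen} by the norm-convergence hypothesis~\eqref{item:condp2d} upgrades weak to strong convergence by the standard Hilbert-space fact that weak convergence plus convergence of norms implies strong convergence. The case of $f$ polynomial in $x$ (with coefficients polynomially bounded in $p$) is handled identically using~\eqref{eq:Moyal3} in place of \eqref{eq:Moyal2}, and the joint-polynomial case follows since such $f$ is simultaneously of both types.

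The main obstacle I anticipate is controlling the Moyal remainder terms. Unlike in Theorem~\ref{thm:main2}, where $f=f_1(x)+f_2(p)$ makes the product $f\,\sharp\,\sigmaPc$ essentially a translation/modulation acting on $\sigmaPc$ with no derivatives produced, here genuine $x$-derivatives of $\sigmaPc$ appear with polynomially growing prefactors. A priori one only knows $\sigmaPc$ is $L^2$-bounded, not that its derivatives are controlled; so the integration by parts must be organized to land \emph{all} derivatives on the test function side, turning $\int p^{\ell}\,\partial_x^k\sigmaPc \cdot \varphi$ into $\pm\int \sigmaPc\,\partial_x^k(p^{\ell}\varphi)$, which is then $\leq \|\sigmaPc\|_{L^2}\|\partial_x^k(p^{\ell}\varphi)\|_{L^2}$ — finite for Schwartz $\varphi$, and the whole term carries its $\hbar^k$ with $k\geq1$. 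A subtlety is that $\Op^{\hbar}(f)\Pi_N$ maps into $L^2$ only because $\operatorname{Ran}\Pi_N\subset D(H)$, so one should verify the symbol $\sigma_{\Op^{\hbar}(f)\Pi_N}^{\hbar}$ is genuinely in $L^2$ and that the finite Moyal expansion is exact (no remainder integral) — this is where the polynomial-in-$p$ hypothesis is essential, since it truncates the asymptotic series. Once these bookkeeping points are settled the argument is a routine adaptation of the proof of Theorem~\ref{thm:main2}.
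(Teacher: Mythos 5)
Your proposal is correct and follows essentially the same route as the paper: decompose $\langle\sigma^{\hbar}_{\Pi_N\Op^{\hbar}(f)\Pi_N}-f\sigma,g\rangle$ into a commutator term (killed by Condition~(C2)), a Moyal--versus--pointwise--product term, and a term $\langle f\sigma^{\hbar}_{\Pi_N}-f\sigma,g\rangle$, test against Schwartz $g$, and exploit that $f$ polynomial in one variable makes the Moyal expansion a finite sum whose $k\geq1$ terms carry $\hbar^k$ after all derivatives are integrated by parts onto $g$. The paper carries this out explicitly — first for $f$ polynomial in both $x$ and $p$ via a full Taylor expansion producing terms $J_{\ell_x,\ell_p}=O(\hbar^{\ell_x+\ell_p})$, then for $f$ polynomial in $p$ only by expanding only in $p$, applying Leibniz to separate the $\partial_x^j a_s$ factors, and using the polynomial bound on $a_\ell$ together with $\FF_2(p^s\partial_x^{\ell-j}g)\in\SS$ to make the resulting integrals finite uniformly in $\hbar$ — but the key steps (finite Moyal expansion, integration by parts onto the test function, density of Schwartz, weak plus norm convergence gives strong) are exactly the ones you outline. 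One minor slip: your identity should read $\sigma^{\hbar}_{\Pi_N\Op^{\hbar}(f)\Pi_N}=f\,\sharp\,\sigma^{\hbar}_{\Pi_N}-\sigma^{\hbar}_{\Pi_N^{\perp}\Op^{\hbar}(f)\Pi_N}$ (minus sign), though this does not affect the argument since that term vanishes in $L^2$-norm by~(C3).
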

The proofs of Theorem~\ref{thm:main2} and Theorem~\ref{thm:main3}  are postponed to Section \ref{sebsect:proof}.

\subsection{Guiding example: the quantum harmonic oscillator}
\label{subs:harmonic}
We apply here the main theorems to a specific model that can be thought of as a guiding example. The model fits in the hypotheses of Theorems~\ref{thm:main}-~\ref{thm:main2}, and was studied  in~\cite{Cunden23} motivated by an experimental proposal of a quantum Zeno dynamics in QED cavity by Raimond et al.~\cite{Raimond10,Raimond12}.

\par

Let $(u_k)_{k\geq1}$ be the Hermite wavefunctions defined as the normalised eigenfunctions of the quantum harmonic oscillator $A=\frac{1}{2}\left(\hat{p}^2+\hat{x}^2\right)$:
$$
u_k(x)=\frac{1}{\sqrt{2^{k-1} (k-1)! }}\left( \frac{1}{\pi \hbar}\right)^{1/4}e^{-\frac{x^2}{2 \hbar}}H_{k-1}\left( \frac{x}{ \sqrt{\hbar}}\right), \quad k\geq 1,
$$
where $(H_j)_{j \in \N}$ are the Hermite polynomials
$$
H_j(y)=(-1)^j e^{y^2} \frac{d^j}{d y^j}(e^{-y^2}), \quad j \in \N,
$$ 
and the corresponding eigenvalues are $E_k=\hbar(k-1/2)$, $k \geq 1$.  The classical hamiltonian of the harmonic oscillator is $a(x,p)=\frac{1}{2}\left(p^2+x^2\right)$.  
Throughout this section we consider the nested sequence of orthogonal projections
\begin{equation}
\label{eq:proj_hermite}
\Pi_{N}:=\chi_{(-\infty,E_N]}(A)=\sum_{k=1}^N |u_k\rangle \langle u_k|,\quad N\geq1.
\end{equation}

If $\mu >0$, then in the  limit $\hbar\to0,N\to\infty$, with $\hbar N=\mu$, we have $E_N\to \mu$, therefore, the classically allowed region $\Omega=D$ is the disk of radius $\sqrt{2\mu}$,
\be
\label{eq:disk}
D=\left\{(x,p)\in\R_x\times\R_p\colon a(x,p)\leq \mu\right\}=\left\{(x,p)\in\R_x\times\R_p\colon x^2+p^2\leq 2\mu\right\}.
\ee

 As an example of truncated observables let us consider the truncation of $H=\Op^{\hbar}(f)$, with $f(x,p)=(ax+bp)^n$, with $a,b\in\R$ and $n$ nonnegative integer, on the range of $\Pi_N$.
  \begin{prop}\label{thm:convergenceAprime_Hermite2} 
Set $\mu>0$. 
  Then, 
 \begin{align}
\label{eq:asymp_sigma_H_N_ho2}
\lim_{\substack{N\to\infty, \hbar\to0\\\hbar N=\mu}}\|\sigma^{\hbar}_{\Pi_N\Op^{\hbar}(f)\Pi_N}-f\chi_{D}\|_{L^2(\R_x\times\R_p)}&=0,
\end{align}
where $\chi_{D}$ is the characteristic function of the disk $D$ in~\eqref{eq:disk}.
\end{prop}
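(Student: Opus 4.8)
The plan is to deduce Proposition~\ref{thm:convergenceAprime_Hermite2} from Theorem~\ref{thm:main2} after a metaplectic reduction of $f$ to a function of $x$ alone. If $(a,b)=(0,0)$ the statement is trivial ($f\equiv 0$ when $n\ge1$, and for $n=0$ it is Theorem~\ref{thm:main}), so put $r:=\sqrt{a^2+b^2}>0$. Let $U_\theta:=e^{-i\theta A/\hbar}$ be the fractional Fourier transform generated by the harmonic oscillator $A=\tfrac12(\hat p^2+\hat x^2)$; conjugation by $U_\theta$ acts on Weyl symbols as a rotation $R_\theta$ of the phase plane, $U_\theta\,\Op^\hbar(g)\,U_\theta^{-1}=\Op^\hbar(g\circ R_\theta^{-1})$, and $[U_\theta,A]=0$ since $U_\theta$ is a function of $A$, hence $[U_\theta,\Pi_N]=0$. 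For a suitable choice of $\theta$ the linear form $ax+bp$ is carried to $rx$, so that
\[
U_\theta\bigl(\Pi_N\Op^\hbar(f)\Pi_N\bigr)U_\theta^{-1}=\Pi_N\,\Op^\hbar\bigl(r^nx^n\bigr)\,\Pi_N=\Pi_N\,(r^n\hat x^n)\,\Pi_N .
\]
Since $R_\theta$ is a measure-preserving $L^2$-isometry that leaves the disk $D$ invariant, and $r^nx^n$ is carried back through $R_\theta$ to $(ax+bp)^n=f$, it suffices to prove~\eqref{eq:asymp_sigma_H_N_ho2} for $f(x,p)=x^n$, i.e.\ for $H=\hat x^n=\Op^\hbar(x^n)$; this $H$ has the form $\Op^\hbar(f_1(x)+f_2(p))$ with $f_1(x)=x^n$ and $f_2\equiv0$, so Theorem~\ref{thm:main2} applies once Conditions~\eqref{item:cond1_gen},~\eqref{item:condp2},~\eqref{item:cond3_gen} are verified for the Hermite projections $\Pi_N$. (Alternatively, since $f$ is a polynomial in $(x,p)$, one could bypass the reduction and apply Theorem~\ref{thm:main3} to $H=(a\hat x+b\hat p)^n$ directly, checking the same conditions with $a\hat x+b\hat p$ replacing $\hat x$.)

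Condition~\eqref{item:cond1_gen} holds with $\sigma=\chi_D$: by Theorem~\ref{thm:main} the symbols $\sigmaPc$ converge strongly in $L^2$ to $\chi_\Omega=\chi_D$, a fortiori weakly. For the other two conditions the key is the tridiagonal action of $\hat x$ in the Hermite basis: with the ladder operators $a u_k=\sqrt{k-1}\,u_{k-1}$ and $a^\dagger u_k=\sqrt{k}\,u_{k+1}$ (and $u_j:=0$ for $j\le0$) one has $\hat x=\sqrt{\hbar/2}\,(a+a^\dagger)$, hence $\hat x^n u_k=\sum_{|j-k|\le n}c^{(n)}_{k,j}\,u_j$ with $\bigl|c^{(n)}_{k,j}\bigr|\le 2^n\bigl(\hbar(k+n)/2\bigr)^{n/2}$, since each of the at most $2^n$ monomials in $(a+a^\dagger)^n$ applied to $u_k$ is a product of $n$ coefficients bounded by $\sqrt{\hbar(k+n)/2}$. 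Consequently, for $N\ge n$ and $\hbar\le\mu/n$ one has the $N$-uniform bound $\langle u_k,\hat x^{2n}u_k\rangle=\|\hat x^n u_k\|^2\le C_{n,\mu}$ for all $k\le N$.

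For Condition~\eqref{item:cond3_gen}: since $\hat x^n u_k\in\operatorname{span}\{u_{k-n},\dots,u_{k+n}\}$, we have $\Pi_N^\perp\hat x^n u_k=0$ unless $k>N-n$, so by~\eqref{eq:useful_identity_norm}
\[
\bigl\|\sigma^{\hbar}_{\Pi_N^\perp\hat x^n\Pi_N}\bigr\|_{L^2}^2=2\pi\hbar\,\|\Pi_N^\perp\hat x^n\Pi_N\|_{HS}^2=2\pi\hbar\!\!\sum_{N-n<k\le N}\!\!\|\Pi_N^\perp\hat x^n u_k\|^2\le 2\pi\hbar\,n\,C_{n,\mu}\longrightarrow 0 .
\]
For Condition~\eqref{item:condp2}: writing $\|\Pi_N\hat x^n u_k\|^2=\|\hat x^n u_k\|^2-\|\Pi_N^\perp\hat x^n u_k\|^2$ and using the same estimate on the (at most $n$) nonzero correction terms, \eqref{eq:useful_identity_norm} gives
\[
\bigl\|\sigma^{\hbar}_{\Pi_N\hat x^n\Pi_N}\bigr\|_{L^2}^2=2\pi\hbar\sum_{k=1}^N\langle u_k,\hat x^{2n}u_k\rangle+O(\hbar)=2\pi\hbar\,\Tr\bigl(\Pi_N\,\Op^\hbar(x^{2n})\bigr)+o(1) .
\]
Since $\Op^\hbar(x^{2n})=\hat x^{2n}$ and the level set $\{a\le\mu\}=D$ is compact, the integrated Weyl law applied to the polynomially bounded observable $x^{2n}$ (together with $E_N\to\mu$) gives $2\pi\hbar\,\Tr\bigl(\chi_{(-\infty,E_N]}(A)\,\Op^\hbar(x^{2n})\bigr)\to\int_D x^{2n}\,dx\,dp=\|x^n\chi_D\|_{L^2}^2$, which is Condition~\eqref{item:condp2}. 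Theorem~\ref{thm:main2} then yields $\sigma^{\hbar}_{\Pi_N\hat x^n\Pi_N}\to x^n\chi_D$ strongly in $L^2$, and transporting back by $R_\theta$ gives~\eqref{eq:asymp_sigma_H_N_ho2} in general.

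The only step I expect to require real care is the final semiclassical identity $2\pi\hbar\,\Tr\bigl(\Pi_N\,\Op^\hbar(x^{2n})\bigr)\to\int_D x^{2n}\,dx\,dp$ for the \emph{unbounded} observable $x^{2n}$. One can either invoke a version of Weyl's law valid for polynomially growing symbols --- legitimate here thanks to the rapid decay of the low-lying Hermite eigenfunctions away from $D$, which lets one cut $x^{2n}$ off to a compact neighbourhood of $D$ at negligible cost --- or, exploiting exact solvability, estimate the Hermite moment sum $\sum_{k\le N}\langle u_k,\hat x^{2n}u_k\rangle$ directly through Laguerre polynomials. Everything else (the metaplectic reduction, Condition~\eqref{item:cond1_gen}, Condition~\eqref{item:cond3_gen}, and the absorption of the truncation error in Condition~\eqref{item:condp2}) is routine bookkeeping resting on the tridiagonal structure of $\hat x$ in the Hermite basis and the isometry~\eqref{eq:useful_identity_norm}.
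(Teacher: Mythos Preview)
Your proposal is correct and reaches the same conclusion, but the route differs from the paper's in two respects worth noting.

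\textbf{Metaplectic reduction vs.\ direct attack.} You first rotate $(ax+bp)^n$ to $r^nx^n$ using that the harmonic oscillator propagator commutes with $\Pi_N$ and acts on symbols as a phase-space rotation fixing $D$; this is a clean simplification not in the paper. The paper instead works with the general $(a\hat x+b\hat p)^n$ from the start and applies Theorem~\ref{thm:main3} (which you mention as an alternative). Your reduction trades the $(a,b)$-bookkeeping for an appeal to symmetry; the paper's route is more pedestrian but entirely self-contained.

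\textbf{Verification of (C1$'$).} Here the two arguments genuinely diverge. The paper checks $\|\sigma^{\hbar}_{\Pi_N\Op^{\hbar}(f)\Pi_N}\|_{L^2}^2\to\|f\chi_D\|_{L^2}^2$ by explicit combinatorics: it writes $\langle u_\ell,(a\hat x+b\hat p)^n u_k\rangle$ as a weighted sum over $n$-step lattice paths, evaluates the leading asymptotics via $\#\Gamma_n(k,\ell)=\binom{n}{\ell-k}$ and Vandermonde's identity, and lands on $2\pi\mu^{n+1}\bigl(\tfrac{a^2+b^2}{2}\bigr)^n C_n$ with $C_n$ the Catalan number; the matching value of $\|f\chi_D\|_{L^2}^2$ is obtained by a short contour-integral computation of $\int_0^{2\pi}(a\cos\theta+b\sin\theta)^{2n}\,d\theta$. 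You instead reduce to $2\pi\hbar\,\Tr\bigl(\Pi_N\hat x^{2n}\bigr)\to\int_D x^{2n}\,dx\,dp$ and invoke a functional Weyl law. As you correctly flag, this is the one step needing real work for the unbounded symbol $x^{2n}$; the most direct justification --- computing $\sum_{k\le N}\langle u_k,\hat x^{2n}u_k\rangle$ from the tridiagonal structure of $\hat x$ --- is precisely the paper's path-counting computation specialised to $(a,b)=(1,0)$. So in the end both routes rest on the same combinatorial identity, with your version packaging it as a Weyl-law statement.

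The verification of (C2) is essentially identical in both: finite-band structure of $\hat x^n$ (resp.\ $(a\hat x+b\hat p)^n$) in the Hermite basis forces $\Pi_N^\perp H\Pi_N$ to have rank at most $n$ with bounded matrix elements, giving $\|\sigma^{\hbar}_{\Pi_N^\perp H\Pi_N}\|_{L^2}^2=O(\hbar)$.
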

 \begin{proof}
Since $f$ is a polynomial in $x$ and $p$,
to prove the limit~\eqref{eq:asymp_sigma_H_N_ho2} it is enough to check the conditions of Theorem~\ref{thm:main3}.  For the harmonic oscillator the confining potential is $V(x)=x^2/2$; hence from Theorem~\ref{thm:main} we have that the symbol $\sigma^{\hbar}_{\Pi_N}$ strongly converges in $L^2$ to the characteristic function $\chi_{D}$, 
\be
\label{eq:asymp_sigma_P_<N_ho_2}
\lim_{\substack{N\to\infty, \hbar\to0\\\hbar N=\mu}}\|\sigma^{\hbar}_{\Pi_N}-\chi_{D}\|_{L^2(\R_x\times\R_p)}=0.
\ee
A standard calculation from~\eqref{eq:Weyl_quantisation}  shows that the Weyl quantisation of $f$ is    
 $$\Op^{\hbar}(f)=(a\hat{x}+b\hat{p})^n.$$
Using 
\be
\label{eq:useful_Hemite}
\begin{aligned}
\hat{x}u_{k}&=\sqrt{\frac{\hbar}{2}}\left[\sqrt{k+1}u_{k+1} +\sqrt{k}u_{k-1} \right],
\\
\hat{p}u_{k}&=i\sqrt{\frac{\hbar}{2}}\left[\sqrt{k+1}u_{k+1} -\sqrt{k}u_{k-1} \right],
\end{aligned}
\ee
we can give a combinatorial formula for the matrix elements of $\Op^{\hbar}(f)$ in the basis $(u_k)_{k\geq1}$.
Let $k, l$ be two nonnegative integers. A $n$-steps path from $k$ to $l$ is a $n$-tuple
$$ 
\gamma=\left((j_0,j_1),(j_1,j_2),\ldots, (j_{n-1},j_n)\right) \in \underbrace{\N^2 \times \dots \times \N^2}_{ \textrm{$n$ times}},
$$ 
such that $j_0=k$, $j_n=l$, and $|j_{m}-j_{m+1}|=1$ for all $m=0,\ldots,n-1$. To a $n$-steps path  $\gamma$  we assign the weight 
$$
w(\gamma):=\prod_{m=0}^{n-1}\left(a+(j_{m+1}- j_{m})ib\right)\sqrt{\max(j_{m+1}, j_{m})}.
$$
Let $\Gamma_n(k,\ell)$ be the set of $n$-steps paths from $k$ to $\ell$. Note that $\Gamma_n(k,\ell)=\emptyset$ whenever $|k-\ell|>n$.
 Then, from~\eqref{eq:useful_Hemite} 
we can write   
\be \label{eqn:matrixelementi}
\langle{u_{\ell},\Op^{\hbar}(f)u_{k}\rangle_{L^{2}(\R)}}=
\begin{cases}
\displaystyle\left(\frac{\hbar}{2}\right)^{\frac{n}{2}}\sum_{\gamma\in\Gamma_n(k,\ell)}w(\gamma)&\text{if $|k-\ell|\leq n$}\\
0&\text{if $|k-\ell|> n$}
\end{cases}
\ee
We have that for $k\leq\ell$, with $\ell-k\leq n$,
\be
\sum_{\gamma\in\Gamma_n(k,\ell)}w(\gamma)=(a+ib)^{\ell-k}(a-ib)^{n-(\ell-k)}\#\Gamma_n(k,\ell)k^\frac{n}{2}\left[1+o(1)\right],
\ee 
as $k\to\infty$, where the total number of $n$-steps paths from $k$ to $\ell$ is
\be
\label{eq:number_paths}
\#\Gamma_n(k,\ell)=\binom{n}{\ell-k}
\ee
Hence, from~\eqref{eqn:matrixelementi} we see that $\Op^{\hbar}(f)$ is a finite-band operator in the basis $(u_k)_{k\geq1}$ of the eigenfunctions of $A$. This implies that $\Pi_N^{\perp}\Op^{\hbar}(f)\Pi_N$ has rank $n$ and 
\be
\begin{aligned}
\|\sigma_{\Pi_N^{\perp}\Op^{\hbar}(f)\Pi_N}^{\hbar}\|^2_{L^2(\R_x\times\R_p)}&=2\pi\hbar \left(\frac{\hbar}{2}\right)^{n}\sum_{k\leq N}\sum_{\ell=N+1}^{N+k}\left|\sum_{\gamma\in\Gamma_n(k,\ell)}w(\gamma)\right|^2\\
& \leq c (a^2+b^2)^{n} \hbar^{n+1} N^n,
\end{aligned}
\ee
where $c$ is a constant dependent on $n$ but not on $N$.
Hence Condition~\eqref{item:cond3_gen}  holds true. We now show that Condition~\eqref{item:condp2} is also met. 
We have
\begin{align*}
\|\sigma_{\Pi_N\Op^{\hbar}(f)\Pi_N}^{\hbar}\|^2_{L^2(\R_x\times\R_p)}&=2\pi\frac{\hbar^{n+1}}{2^n}\sum_{k,\ell\leq N}\left|\sum_{\gamma\in\Gamma_n(k,\ell)}w(\gamma)\right|^2\\
&=2\pi\frac{\hbar^{n+1}}{2^n}(a^2+b^2)^{n}\sum_{k\leq N}k^n\sum_{j=0}^n\binom{n}{j}^2[1+o(1)], \quad\text{as $N\to\infty$}.
\end{align*}
 We have 
$$\sum_{k\leq N}k^n=\frac{N^{n+1}}{n+1}[1+o(1)], \quad\text{as $N\to\infty$},$$ and  $$\sum_{j=0}^n\binom{n}{j}^2=\binom{2n}{n},$$
by Vandermonde's identity.
Therefore,
\be
\lim\limits_{\substack{N\to\infty, \hbar\to0\\\hbar N=\mu}}\|\sigma_{\Pi_N\Op^{\hbar}(f)\Pi_N}^{\hbar}\|^2_{L^2(\R_x\times\R_p)}=2 \pi   \mu ^{n+1}  \left(\frac{a^2+b^2}{2}\right)^nC_n, 
\ee
where $C_n=\frac{1}{n+1}\binom{2 n}{n}$ is the $n$-th Catalan number.
The limit is indeed equal to 
$$
\|f\sigma_D\|^2_{L^2(\R_x\times\R_p)}=\int_D(ax+bp)^{2n}dxdp=\int_0^{\sqrt{2\mu}}r^{2n+1}dr \int_0^{2\pi} (a\cos\theta+b\sin\theta)^{2n}d\theta,$$ 
see the Lemma below.
\end{proof}
\begin{lem}
\label{lem:integral}
\be
I=\int_0^{2\pi}(a\cos\theta+b\sin\theta)^{2n}d\theta=2 \pi\frac{ \left(a^2+b^2\right)^n}{2^{2 n}} \binom{2 n}{n}.
\ee
\end{lem}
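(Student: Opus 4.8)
The plan is to linearise the integrand by writing
$$a\cos\theta + b\sin\theta = \tfrac12\bigl((a-ib)e^{i\theta} + (a+ib)e^{-i\theta}\bigr),$$
which turns $(a\cos\theta+b\sin\theta)^{2n}$ into the $2n$-th power of a two-term exponential sum and makes the $\theta$-integration immediate. Setting $\alpha:=a-ib$, so that $\bar\alpha=a+ib$ and $\alpha\bar\alpha=a^2+b^2$, the binomial theorem gives
$$(a\cos\theta+b\sin\theta)^{2n} = \frac{1}{2^{2n}}\bigl(\alpha e^{i\theta} + \bar\alpha e^{-i\theta}\bigr)^{2n} = \frac{1}{2^{2n}}\sum_{k=0}^{2n}\binom{2n}{k}\alpha^k\bar\alpha^{2n-k}\, e^{i(2k-2n)\theta}.$$

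Next I would integrate term by term over $[0,2\pi]$. Since $\int_0^{2\pi} e^{im\theta}\,d\theta = 2\pi\,\delta_{m,0}$, only the term with $2k-2n=0$, i.e.\ $k=n$, survives; here it is essential that the exponent $2n$ is \emph{even}, so that such an integer $k$ lies in the summation range (for an odd power the integral would vanish identically, and no central binomial coefficient would appear). This leaves
$$I = \frac{2\pi}{2^{2n}}\binom{2n}{n}\alpha^n\bar\alpha^n = 2\pi\,\frac{(a^2+b^2)^n}{2^{2n}}\binom{2n}{n},$$
which is the claimed identity.

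There is essentially no obstacle: the computation is routine once the complex substitution is made. If one prefers to avoid complex exponentials, an equivalent route is to write $a\cos\theta+b\sin\theta = \sqrt{a^2+b^2}\,\cos(\theta-\phi)$ for a suitable phase $\phi$, use $2\pi$-periodicity to reduce $I$ to $(a^2+b^2)^n\int_0^{2\pi}\cos^{2n}\psi\,d\psi$, and then evaluate the Wallis-type integral $\int_0^{2\pi}\cos^{2n}\psi\,d\psi = 2\pi\binom{2n}{n}/2^{2n}$ by the same binomial expansion of $\cos^{2n}\psi = 2^{-2n}(e^{i\psi}+e^{-i\psi})^{2n}$. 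Either way, the only point to watch is the parity of the exponent, which is precisely what produces the central binomial coefficient $\binom{2n}{n}$.
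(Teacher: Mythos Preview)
Your proof is correct and follows essentially the same route as the paper: both write $a\cos\theta+b\sin\theta$ in terms of $e^{\pm i\theta}$ with coefficients $a\mp ib$, expand the $2n$-th power by the binomial theorem, and pick out the constant term. The paper phrases the last step as a contour integral over $|z|=1$ and invokes Cauchy's integral formula, whereas you use the orthogonality relation $\int_0^{2\pi}e^{im\theta}\,d\theta=2\pi\delta_{m,0}$ directly; these are the same computation in different clothing.
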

\begin{proof}
Setting $z=e^{i\theta}$ we get
\begin{align*}
I&=\oint_{|z|=1}\left(\frac{a}{2}\left(z+z^{-1}\right)+\frac{b}{2i}\left(z-z^{-1}\right)\right)^{2n}\frac{dz}{iz}\\
&=\frac{1}{2^{2n} i}\oint_{|z|=1}\frac{\left[(a-ib)z^2+(a+ib)\right]^{2n}}{z^{2n+1}}dz\\
&=2\pi\frac{1}{2^{2n}}\binom{2n}{n}(a-ib)^n(a+ib)^n,
\end{align*}
using Cauchy's integral formula.
\end{proof}

 \begin{rem} The observable $H=\Op^{\hbar}(f)$ with $f(x,p)=(ax+bp)^n$, when $n=1$ and $a=0$ corresponds to 
 $H=\hat{p}$, and  $H_N=\Pi_N \hat{p}\Pi_N$ is the \emph{truncated momentum operator}  on the subspace of the first $N$ eigenfunctions of the harmonic oscillator. This truncated observable corresponds to the Zeno Hamiltonian constructed in~\cite{Raimond10,Raimond12}, and studied in~\cite{Cunden23}. The convergence of the symbols $\sigma^{\hbar}_{\Pi_N}$ and $\sigma^{\hbar}_{\Pi_N\hat{p}\Pi_N}$ was proved in~\cite{Cunden23} in a different topology. 
More precisely, the following space of test functions introduced by Lions and Paul~\cite{Lions93},
\be
\AA=\left\{f\in C_0(\R_x\times\R_p)\colon \FF_2 f\in L^1\left(\R_y;C_0(\R_x)\right) \right\},
\ee
equipped with the norm
$\|f\|_{\AA}:=\int_{\R_y} \sup_x\left|\FF_2 f(x,y)\right| dy$ was considered. The normed space $\AA$ is a separable Banach algebra. Let $\AA'$ be the its dual. In~\cite{Cunden23} it was proved that the symbols $(\sigma^{\hbar}_{\Pi_N})_{N\geq1}\subset\AA'$, and $(\sigma^{\hbar}_{\Pi_N\hat{p}\Pi_N})_{N\geq1}\subset\AA'$ converge in the weak$^*$-topology, i.e., 
 \begin{equation}
 \label{eq:asymp_sigma_Pi_NAprime}
 \lim_{\substack{N\to\infty, \hbar\to0\\\hbar N=\mu}}\langle\sigma^{\hbar}_{\Pi_N}-\chi_{D},\varphi\rangle_{\AA', \AA}=0,\quad 
\lim_{\substack{N\to\infty, \hbar\to0\\\hbar N=\mu}}\langle\sigma^{\hbar}_{\Pi_N\hat{p}\Pi_N}-p\chi_{D},\varphi\rangle_{\AA', \AA}=0,
\end{equation}  
for all $\varphi\in\AA$, where $\langle \cdot, \cdot \rangle_{\AA', \AA}$ denotes the pairing between $\AA' $ and $\AA$.
When trying to extend these results to other models we faced the problem that, while $\AA$ and $\AA'$ are well-suited for some scopes, their are not so for the general problem of truncated quantum observables. In particular,  it is more natural to check $\|\sigma_{\Pi_N^{\perp}H\Pi_N}^{\hbar}\|_{L^2}\to 0$,
rather than $\|\sigma_{\Pi_N^{\perp}H\Pi_N}^{\hbar}\|_{\AA'}\to0$.

 \end{rem}

\subsection{Proofs of the main theorems}\label{sebsect:proof}
We will repeatedly use some classical facts about convergence in Hilbert spaces. In the following $\HH$ and $\KK$ denote separable Hilbert spaces with the corresponding scalar products $\langle \cdot, \cdot \rangle_{\mathcal{H}}, \langle \cdot, \cdot \rangle_{\mathcal{K}}$ and norms $\| \cdot \|_{\mathcal{H}}, \| \cdot \|_{\mathcal{K}}$, see for instance~\cite{ReedSimon}.
\begin{lem}
\label{lem:basic}
Let $(\sigma_N)_{N\geq1}\subset \HH$, $\sigma\in\HH$, and $T\colon\HH\to\KK$ an isometry.
\begin{enumerate}
\item  If $(\sigma_N)$ is a bounded sequence and $\langle \sigma_N,g\rangle_{\HH}\to\langle \sigma,g\rangle_{\HH}$ for all $g$ in a dense subspace $\BB$ of $\HH$, then $\sigma_N$ weakly converges to $\sigma$, i.e. $\langle \sigma_N,\varphi\rangle_{\HH}\to\langle \sigma,\varphi\rangle_{\HH}$  for all $\varphi\in \HH$;
\item If $\sigma_N$ weakly converges to $\sigma$, and $\|\sigma_N\|_{\HH}\to \|\sigma\|_{\HH}$, then $\sigma_N$ strongly converges to $\sigma$, i.e. $\|\sigma_N-\sigma\|_{\HH}\to0$;
\item $\sigma_N$ (weakly) converges to $\sigma$ in $\HH$ if and only if $T\sigma_N$ (weakly) converges to $T\sigma$ in $\KK$.
\end{enumerate}
\end{lem}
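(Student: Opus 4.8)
The plan is to dispatch the three items in turn, each being a standard Hilbert-space fact that follows quickly once the definitions are unwound. For item (1), I would fix $\varphi\in\HH$ and $\varepsilon>0$, use density of $\BB$ to choose $g\in\BB$ with $\|\varphi-g\|_{\HH}<\varepsilon$, and then estimate, via the triangle and Cauchy--Schwarz inequalities,
\[
|\langle\sigma_N-\sigma,\varphi\rangle_{\HH}|\le|\langle\sigma_N-\sigma,g\rangle_{\HH}|+\Bigl(\sup_{N}\|\sigma_N\|_{\HH}+\|\sigma\|_{\HH}\Bigr)\varepsilon .
\]
The first term on the right tends to $0$ by hypothesis, and the boundedness of $(\sigma_N)$ makes the coefficient of $\varepsilon$ a finite constant $C$ independent of $N$ and $\varepsilon$; hence $\limsup_N|\langle\sigma_N-\sigma,\varphi\rangle_{\HH}|\le C\varepsilon$, and letting $\varepsilon\to0$ gives weak convergence.

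For item (2), I would simply expand
\[
\|\sigma_N-\sigma\|_{\HH}^2=\|\sigma_N\|_{\HH}^2-2\operatorname{Re}\langle\sigma_N,\sigma\rangle_{\HH}+\|\sigma\|_{\HH}^2 .
\]
Weak convergence of $\sigma_N$ to $\sigma$ gives $\langle\sigma_N,\sigma\rangle_{\HH}\to\|\sigma\|_{\HH}^2$, while the hypothesis $\|\sigma_N\|_{\HH}\to\|\sigma\|_{\HH}$ handles the first term, so the right-hand side converges to $\|\sigma\|_{\HH}^2-2\|\sigma\|_{\HH}^2+\|\sigma\|_{\HH}^2=0$.

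For item (3), the strong-convergence equivalence is immediate: by linearity and the isometry property, $\|T\sigma_N-T\sigma\|_{\KK}=\|T(\sigma_N-\sigma)\|_{\KK}=\|\sigma_N-\sigma\|_{\HH}$, so one side tends to $0$ iff the other does. For weak convergence I would first note that an isometry preserves inner products (by the polarisation identity, if one only assumes norm preservation) and that $\operatorname{Ran}(T)$ is closed. Then, given $\sigma_N\rightharpoonup\sigma$ and $\psi\in\KK$, I decompose $\psi=T\eta+\psi^{\perp}$ with $\psi^{\perp}\perp\operatorname{Ran}(T)$, observe $\langle T\sigma_N,\psi^{\perp}\rangle_{\KK}=0$, and compute $\langle T\sigma_N,\psi\rangle_{\KK}=\langle T\sigma_N,T\eta\rangle_{\KK}=\langle\sigma_N,\eta\rangle_{\HH}\to\langle\sigma,\eta\rangle_{\HH}=\langle T\sigma,\psi\rangle_{\KK}$. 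The converse direction is easier: for $\eta\in\HH$, $\langle\sigma_N,\eta\rangle_{\HH}=\langle T\sigma_N,T\eta\rangle_{\KK}\to\langle T\sigma,T\eta\rangle_{\KK}=\langle\sigma,\eta\rangle_{\HH}$.

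The only point that needs a moment's attention — the ``main obstacle,'' though a very mild one — is the closedness of $\operatorname{Ran}(T)$ invoked in item (3): it holds because $T$ is an isometric embedding of a complete space, so its image is complete and therefore closed, which is exactly what permits the reduction of an arbitrary test vector in $\KK$ to one of the form $T\eta$. Everything else is routine.
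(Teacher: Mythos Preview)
Your proof is correct and complete. The paper itself does not supply a proof of this lemma, presenting it as a collection of classical Hilbert-space facts with a reference to Reed--Simon; your argument fills in the standard details in exactly the expected way, and the one subtlety you flag---closedness of $\operatorname{Ran}(T)$ needed for the orthogonal decomposition in item~(3)---is handled correctly.
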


Let $\HH=L^2(\R_x\times\R_p)$ and $\KK=L^2(\R_x\times\R_y)$. The partial Fourier transform $\FF_2\colon \HH\to \KK$ is (up to a factor $2\pi$) an isometry. Consider $\sigmaPc$ and $\chi_{\Omega}$ in $\HH$, and  their images $\FF_2 \sigmaPc$ and $\FF_2 \chi_{\Omega}$ in $\KK$.
We report their explicit formulae.
\begin{lem} For all $N\geq1$, $\hbar>0$, and $\mu>0$,
\begin{align}
\FF_2 \sigmaPc(x,y)&=\left(2\pi\hbar\right) K_{\Pi_N}\left(x-\frac{\hbar y}{2},x+\frac{\hbar y}{2}\right),\\
\FF_2 \chi_{\Omega}\left(x,y\right)&=2\sqrt{2\left(E(\mu)-V(x)\right)_+}S(2\sqrt{2\left(E(\mu)-V(x)\right)_+}y),
\end{align}
where $K_{\Pi_N}$ is the integral kernel of $\Pi_N$, and $S$ is the \emph{sine kernel} defined in~\eqref{eq:sinekernel}.
\end{lem}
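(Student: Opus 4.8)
The plan is to verify the two identities separately; each is a short direct computation, the lemma being essentially a bookkeeping step preparing the $L^2$-estimates behind Theorem~\ref{thm:main}.

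The first identity is nothing but the relation between a Weyl symbol and its integral kernel already recorded in Section~\ref{sec:preliminaries}, $\left(\FF_2\sigma_T^{\hbar}\right)(x,y)=(2\pi\hbar)\,K_T\!\left(x-\tfrac{\hbar y}{2},x+\tfrac{\hbar y}{2}\right)$, which itself follows from Fourier inversion applied to the definition of $\sigma_T^{\hbar}$ (interchange the two integrations and use $\int_{\R_p}e^{ip(y-y')}\,dp=2\pi\delta(y-y')$). One only has to apply it to $T=\Pi_N$: being finite-rank, $\Pi_N$ is Hilbert--Schmidt and has an integral kernel $K_{\Pi_N}\in L^2(\R_x\times\R_y)$, so the formula makes sense and holds verbatim.

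For the second identity I would first make the fibre of $\Omega$ over a fixed $x$ explicit. Since $a(x,p)=\tfrac12 p^2+V(x)$, the inequality $a(x,p)\leq E(\mu)$ reads $p^2\leq 2\big(E(\mu)-V(x)\big)$, so, setting $R(x):=\sqrt{2\big(E(\mu)-V(x)\big)_+}$, one has $\chi_{\Omega}(x,\cdot)=\chi_{[-R(x),R(x)]}$ for a.e.\ $x$, with the interval collapsing to the null set $\{0\}$ exactly on $\{x:V(x)\geq E(\mu)\}$. Taking the partial Fourier transform in $p$ gives $\FF_2\chi_{\Omega}(x,y)=\int_{-R(x)}^{R(x)}e^{-ipy}\,dp$, and the substitution $p=2R(x)k$, together with the evenness of $S$ and its definition $S(t)=\int_{-1/2}^{1/2}e^{ikt}\,dk$ from~\eqref{eq:sinekernel}, turns this into $2R(x)\,S\!\big(2R(x)y\big)$, which is the claimed expression; the degenerate case $R(x)=0$ is consistent since both sides vanish there. (That $\chi_\Omega\in L^2(\R_x\times\R_p)$, needed for it to be a legitimate element of the Hilbert space, follows from $\|\chi_\Omega\|^2=|\Omega|=\pi\mu<\infty$.)

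There is no genuine obstacle here: the content is elementary. The only points requiring a little care are keeping the normalisation and sign conventions for $\FF_2$ straight, so that the prefactor $2\pi\hbar$ and the argument $2R(x)y$ of the sine kernel come out correctly, and correctly handling the positive part $(\cdot)_+$, i.e.\ the set where the fibre of $\Omega$ is empty.
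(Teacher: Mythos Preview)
Your proof is correct; the paper does not actually supply a proof of this lemma (it simply ``reports'' the formulae as immediate from the preliminaries), and your two direct computations---invoking the kernel/symbol relation for the first identity and explicitly Fourier-transforming the indicator of the fibre $[-R(x),R(x)]$ for the second---are exactly the intended verification.
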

The key ingredient in the proof of Theorem~\ref{thm:main} is  the uniform convergence of $\FF_2 \sigmaPc$ to $\FF_2 \chi_{\Omega}$ on compact sets.

\begin{thm}\label{thm:convFcompact}
For any compact sets $U\Subset \R_x$,$V\Subset\R_y$, there exist constants $\hbar_0>0$ and $C>0$ (both depending on $U,V$), such that
\begin{align}
\label{eq:estimate_conv}
&\sup_{x\in U}\sup_{ y\in V}\left|\FF_2\sigmaPc(x,y)-\FF_2 \chi_{\Omega}\left(x,y\right)\right|\leq C\hbar,
\end{align}
for all $\hbar<\hbar_0$. 
\end{thm}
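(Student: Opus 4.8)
The plan is to pass to the integral kernel of $\Pi_N$ and read the estimate as a local (off‑diagonal) Weyl law, which I would then establish by inserting the semiclassical WKB form of the eigenfunctions $u_k$ of $A$. By the two displayed formulae of the preceding Lemma, and since the $u_k$ may be chosen real so that $K_{\Pi_N}(x',x'')=\sum_{k=1}^{N}u_k(x')u_k(x'')$ is real and symmetric, the assertion is equivalent to
\begin{align*}
\sup_{x\in U}\ \sup_{y\in V}\Bigl|\,2\pi\hbar\sum_{k=1}^{N}u_k\bigl(x-\tfrac{\hbar y}{2}\bigr)\,u_k\bigl(x+\tfrac{\hbar y}{2}\bigr)-\frac{2}{y}\sin\!\bigl(p_{E(\mu)}(x)\,y\bigr)\,\Bigr|\le C\hbar,
\end{align*}
where $p_E(x):=\sqrt{2(E-V(x))_+}$ is the classical momentum and I used $2c\,S(2cy)=\tfrac2y\sin(cy)$. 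I would treat separately $x\in\{V>E(\mu)\}$ (classically forbidden), $x\in\{V<E(\mu)\}$ (bulk), and $x\in\{V=E(\mu)\}$ (turning points), the bulk being the heart of the matter. In the forbidden case the right‑hand side vanishes, and since $E_N\le E(\mu)+O(\hbar)<V(x)$ for $\hbar$ small, the Agmon/Combes--Thomas exponential decay of a spectral projection below an energy lying in a spectral gap gives $|K_{\Pi_N}(x',x'')|\le C\hbar^{-1}e^{-c/\hbar}$ for $x',x''$ near $x$, i.e.\ an $O(\hbar^{\infty})$ bound.

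For the bulk, fix $U_0\Subset\{V<E(\mu)\}$; for $\hbar$ small and $y\in V$ the points $x\pm\hbar y/2$ stay in a slightly larger compact $U_1\Subset\{V<E(\mu)\}$. For indices $k$ with $E_k<V(x)$ one has $u_k(x\pm\hbar y/2)=O(\hbar^{\infty})$ by Agmon, so they are negligible. For indices $k$ with $E_k\ge V(x)+\delta$ (any fixed $\delta>0$) the point is at positive distance from the turning points of energy $E_k$, and I would use the standard semiclassical WKB expansion
\begin{align*}
u_k(\xi)=\Bigl(\tfrac{4}{T(E_k)\,p_{E_k}(\xi)}\Bigr)^{1/2}\cos\!\Bigl(\tfrac1\hbar\!\int_{x_-(E_k)}^{\xi}p_{E_k}(s)\,ds-\tfrac\pi4\Bigr)+O(\hbar),
\end{align*}
uniform over $\xi\in U_1$ and over these $k$, where $T(E)=\oint p_E^{-1}\,ds$ is the classical period and $x_-(E)$ a turning point. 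Substituting into $\sum_k u_k(x-\tfrac{\hbar y}{2})u_k(x+\tfrac{\hbar y}{2})$ and using $\cos\alpha\cos\beta=\tfrac12[\cos(\alpha-\beta)+\cos(\alpha+\beta)]$ splits the sum into a slowly varying and a rapidly oscillating part. In the slow part $\alpha-\beta=\tfrac1\hbar\int_{x+\hbar y/2}^{x-\hbar y/2}p_{E_k}=-y\,p_{E_k}(x)+O(\hbar^2)$ and the amplitude equals $\tfrac{4}{T(E_k)p_{E_k}(x)}+O(\hbar)$, so this part equals $2\pi\hbar\sum_k\tfrac{2}{T(E_k)p_{E_k}(x)}\cos\!\bigl(y\,p_{E_k}(x)\bigr)+O(\hbar)$; inserting the Bohr--Sommerfeld spacing $E_{k+1}-E_k=\tfrac{2\pi\hbar}{T(E_k)}(1+O(\hbar))$ converts this, with Riemann‑sum error $O(\hbar)$ (for fixed $\delta$), into $\int_{V(x)+\delta}^{E_N}\tfrac{2}{p_E(x)}\cos(y\,p_E(x))\,dE+O(\hbar)$, which the substitution $E=V(x)+p^2/2$ turns into $2\int_{\sqrt{2\delta}}^{p_{E_N}(x)}\cos(yp)\,dp$. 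In the oscillating part $\alpha+\beta=\tfrac2\hbar\int_{x_-(E_k)}^x p_{E_k}-\tfrac\pi2+O(\hbar^2)$ has discrete $k$‑derivative staying in a compact subinterval of $(0,2\pi)$ for $x$ in the bulk, so Abel summation (the slowly varying amplitude has bounded variation) bounds this part by $O_\delta(1)$, hence by $O(\hbar)$ after the prefactor $2\pi\hbar$.

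It remains to (i) account for the indices $k$ with $E_k\in(V(x),V(x)+\delta)$ and (ii) pass from $E_N$ to $E(\mu)$. For (ii): the sharp one‑dimensional Weyl law (Bohr--Sommerfeld quantization) gives $E_N=E(\mu)+O(\hbar)$, since $E\mapsto\oint p_E\,ds$ has nonvanishing derivative; as $p_E(x)$ is bounded below on $U_0$ this yields $2\int_0^{p_{E_N}(x)}\cos(yp)\,dp=\tfrac2y\sin(p_{E_N}(x)y)=\tfrac2y\sin(p_{E(\mu)}(x)y)+O(\hbar)$ uniformly in $y\in V$. Step (i) is the main obstacle: there the WKB amplitude degenerates and its remainder loses its $\hbar$‑gain, and a crude bound by absolute values of these $O(\delta/\hbar)$ terms is far larger than $O(\hbar)$; one must replace the WKB cosine by Langer's uniform Airy‑type approximation near the turning points and show that the resulting (Airy) Riemann sum produces exactly the remaining piece $2\int_0^{\sqrt{2\delta}}\cos(yp)\,dp$ of the integral up to $O(\hbar)$ — an Airy‑to‑sine matching of the spectral function. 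This is precisely the content of the local Weyl law of \cite{Deleporte21}, to which the whole statement reduces via $\FF_2\sigmaPc(x,y)=2\pi\hbar K_{\Pi_N}(x-\tfrac{\hbar y}{2},x+\tfrac{\hbar y}{2})$; combining the three cases gives the estimate. (Near an actual turning point $x_0$ one checks that the left‑hand side is of order $\hbar^{1/3}$ while the right‑hand side vanishes, so the rate $C\hbar$ genuinely holds only for $U$ at positive distance from $\{V=E(\mu)\}$; this is harmless for Theorem~\ref{thm:main}, where the finitely many turning points are surrounded by arbitrarily small neighbourhoods controlled by $\|\sigmaPc\|_{L^2}^2=2\pi\mu$.)
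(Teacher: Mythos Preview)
The paper's own proof is one sentence: it simply observes that, via $\FF_2\sigmaPc(x,y)=2\pi\hbar K_{\Pi_N}(x-\tfrac{\hbar y}{2},x+\tfrac{\hbar y}{2})$, the estimate is a restatement of the uniform-on-compacts sine-kernel convergence for rescaled spectral projections of $A$, and cites~\cite[Sec.~3]{Deleporte21} for that. Your proposal does far more: you sketch how one would actually prove the off-diagonal local Weyl law in dimension one (Agmon decay in the forbidden region; WKB plus Bohr--Sommerfeld in the bulk; Langer/Airy matching for energies near $V(x)$), and in the end you too invoke~\cite{Deleporte21} for the hard transition layer. So the two ``proofs'' are not really different in kind --- both rest on the same external input --- but yours is much more informative about what that input contains. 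Your bulk argument (WKB product, slow/fast splitting, Abel summation for the oscillatory piece, Riemann sum conversion via $E_{k+1}-E_k=2\pi\hbar/T(E_k)(1+O(\hbar))$) is the right skeleton; the step you flag as the ``main obstacle'' (indices with $E_k\in(V(x),V(x)+\delta)$) is indeed where all the analytic work sits, and you are honest that this is exactly what~\cite{Deleporte21} supplies.

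Your final parenthetical is the most interesting point, and it is correct. If $U$ contains a turning point $x_0$ (where $V(x_0)=E(\mu)$), then $\FF_2\chi_\Omega(x_0,y)=0$ while $2\pi\hbar K_{\Pi_N}(x_0,x_0)\asymp\hbar^{1/3}$ by the Airy scaling (cf.\ the paper's own remark around~\eqref{eq:estimate_conv_edge}), so the bound $C\hbar$ cannot hold uniformly on such $U$. The theorem as literally stated (``for any compact $U\Subset\R_x$'') is thus a slight overstatement of the rate; it holds with rate $C\hbar$ only for $U$ at positive distance from $\{V=E(\mu)\}$, and with rate $C\hbar^{1/3}$ in general. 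As you note, this is irrelevant for Corollary~\ref{corollaryconvF} and Theorem~\ref{thm:main}, which only need convergence to zero on compacts (or can excise finitely many turning points at negligible $L^2$-cost). So your proposal both recovers the paper's argument and catches a minor imprecision in the formulation.
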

\begin{proof}
The estimate~\eqref{eq:estimate_conv}, is an adapted restatement of the uniform convergence of the rescaled kernel $K_{\Pi_N}$ to the sine kernel on compact sets, proved in~\cite[Sec.~3]{Deleporte21}. 
\end{proof}
\begin{cor} \label{corollaryconvF} For all $\mu>0$, 
\be
\lim_{\substack{N\to\infty, \hbar\to0\\\hbar N=\mu}}\|\FF_2\sigmaPc-\FF_2 \chi_{\Omega}\|_{L^2(\R_x\times\R_y)}=0.
\ee
\end{cor}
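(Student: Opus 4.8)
The plan is to bootstrap the uniform-on-compacts estimate of Theorem~\ref{thm:convFcompact} up to $L^2$-convergence, by combining it with an exact identity for $L^2$-norms and the elementary Hilbert-space facts collected in Lemma~\ref{lem:basic} (applied with $\KK = L^2(\R_x\times\R_y)$). Note first that both $\FF_2\sigmaPc$ and $\FF_2\chi_\Omega$ are genuine elements of $\KK$ (the latter since $\Omega$ is bounded), and that the family $(\FF_2\sigmaPc)_{N\ge1}$ has constant norm: by~\eqref{eq:useful_identity_norm} and $\Tr\Pi_N = N$ one has $\|\sigmaPc\|_{L^2(\R_x\times\R_p)}^2 = 2\pi\hbar N = 2\pi\mu$, so Plancherel gives $\|\FF_2\sigmaPc\|_{\KK}^2 = 2\pi\|\sigmaPc\|_{L^2}^2 = (2\pi)^2\mu$ for every admissible pair with $\hbar N = \mu$. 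On the other hand, the defining relation for $E(\mu)$ (the integrated Weyl law) is precisely the statement $\|\chi_\Omega\|_{L^2}^2 = |\Omega| = 2\pi\mu$, whence $\|\FF_2\chi_\Omega\|_{\KK}^2 = (2\pi)^2\mu$ as well; in particular $\|\FF_2\sigmaPc\|_{\KK} = \|\FF_2\chi_\Omega\|_{\KK}$ for every $N$.

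Next I would establish the weak convergence $\FF_2\sigmaPc \rightharpoonup \FF_2\chi_\Omega$ in $\KK$ using Lemma~\ref{lem:basic}(1). Boundedness is already in hand, so it remains to test against a dense subspace, for which I take $\BB = C_c(\R_x\times\R_y)$. For $\varphi\in\BB$ with $\operatorname{supp}\varphi\subset U\times V$, $U\Subset\R_x$, $V\Subset\R_y$, Cauchy--Schwarz together with Theorem~\ref{thm:convFcompact} gives
\begin{align*}
\bigl|\langle\FF_2\sigmaPc - \FF_2\chi_\Omega,\,\varphi\rangle_{\KK}\bigr|
&\le \|\FF_2\sigmaPc - \FF_2\chi_\Omega\|_{L^\infty(U\times V)}\,\|\varphi\|_{L^1}\\
&\le C\,|U\times V|^{1/2}\,\|\varphi\|_{L^2}\,\hbar,
\end{align*}
which tends to $0$ as $\hbar\to0$; this is exactly the hypothesis of Lemma~\ref{lem:basic}(1), so weak convergence follows.

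Finally, having weak convergence and $\|\FF_2\sigmaPc\|_{\KK}\to\|\FF_2\chi_\Omega\|_{\KK}$ (in fact equality), Lemma~\ref{lem:basic}(2) upgrades it to strong convergence, i.e.\ $\|\FF_2\sigmaPc - \FF_2\chi_\Omega\|_{L^2(\R_x\times\R_y)}\to 0$, which is the corollary; Theorem~\ref{thm:main} then follows immediately by Lemma~\ref{lem:basic}(3) applied to the (rescaled) isometry $\FF_2$. The only genuinely substantive input in this chain is Theorem~\ref{thm:convFcompact} --- the uniform sine-kernel asymptotics imported from~\cite{Deleporte21} --- and everything downstream of it is soft. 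The point I would flag as the main obstacle is the implicit handling of escape of mass at infinity: no separate tightness estimate on the tails of $\FF_2\sigmaPc$ is needed, but only because the two families have \emph{exactly} equal $L^2$-norms, which in turn relies on the normalization of $E(\mu)$ yielding $|\Omega| = 2\pi\mu$; this is worth double-checking against the Weyl-law formula as stated.
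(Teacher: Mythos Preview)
Your proposal is correct and follows essentially the same approach as the paper: compute the $L^2$-norms exactly (both equal, since $\hbar N=\mu$ and $|\Omega|=2\pi\mu$), verify weak convergence by testing against $C_c(\R_x\times\R_y)$ using the uniform estimate of Theorem~\ref{thm:convFcompact}, and then invoke Lemma~\ref{lem:basic}(2) to upgrade to strong convergence. Your remark about the Weyl-law normalisation is well taken---the identity $|\Omega|=2\pi\mu$ is precisely what makes the norms match, and it is consistent with the harmonic-oscillator example (where $|D|=2\pi\mu$); the factor $\frac{1}{\pi}$ in the displayed Weyl formula of the paper should be read so that the integral over $\Omega$ equals $2\pi\mu$.
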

\begin{proof} We simply observe that for all $N$, $\|\FF_2\sigmaPc\|^2_{\KK}=\hbar N$ and $\|\FF_2\chi_{\Omega}\|^2_{\KK}=\mu$. So, with $\hbar=\mu/N$, we have that $(\FF_2\sigmaPc)_{N\geq1}$ is bounded in $\KK$ and $\|\FF_2\sigmaPc\|_{\KK}=\|\FF_2\chi_{\Omega}\|_{\KK}$. Therefore we only need to prove weak convergence.
Let $\BB=C_c(\R_x\times\R_y)\subset\KK$
be the subspace of continuous functions $g$ with compact support in $\R_x\times\R_y$. It is a standard fact that $\BB$ is dense in $\KK$.  The estimate~\eqref{eq:estimate_conv} with $\hbar=\mu/N$ implies that $\langle\FF_2 \sigmaPc,g\rangle_{\mathcal{K}}\to \langle\FF_2 \chi_{\Omega},g\rangle_{\mathcal{K}}$ for all $g\in \BB$. By density this proves weak convergence of $\FF_2 \sigmaPc$ to $\FF_2 \chi_{\Omega}$. 
\end{proof}
\begin{proof}[Proof of Theorem~\ref{thm:main}]  Since $\FF_2\colon \HH\to \KK$ is (up to a factor $2\pi$) an isometry, the convergence of $\sigmaPc$ to $\chi_{\Omega}$ in $\HH$ follows from the convergence of $\FF_2\sigmaPc$ to $\FF_2\chi_{\Omega}$ in $\KK$ obtained in Corollary~\ref{corollaryconvF}.  
\end{proof}
\begin{proof}[Proof of Theorem~\ref{thm:main2}] Let $f(x,p)=f_1(x)+f_2(x)$. 
By Condition~\eqref{item:cond2_gen}, 
it is enough to prove that $$\langle \sigma_{\Pi_N^{}\Op^{\hbar}(f)\Pi_N}^{\hbar}-f\chi_{\Omega},g\rangle_{\HH}\to 0,\quad\text{ for all $g\in\BB$},$$  where $\BB$ is some dense subspace of $\HH$.  We write, 
\begin{multline*}
\langle \sigma_{\Pi_N^{}\Op^{\hbar}(f)\Pi_N}^{\hbar}-f\chi_{\Omega},g\rangle_{\HH}=\\
\underbrace{\langle\sigma_{\Pi_N\Op^{\hbar}(f)\Pi_N}^{\hbar}-\sigma_{f(\hat{x},\hat{p})\Pi_N}^{\hbar},g\rangle_{\HH}}_{I_1(N)}
+\underbrace{\langle\sigma_{\Op^{\hbar}(f)\Pi_N}^{\hbar}-f\sigma_{\Pi_N}^{\hbar},g\rangle_{\HH}}_{I_2(N)}
+\underbrace{\langle f\sigma_{\Pi_N}^{\hbar}-f\chi_{\Omega},g\rangle_{\HH}}_{I_3(N)}.
\end{multline*}

The first term can be majorized as
\begin{align*}
\left|I_1(N)\right|&\leq \|\sigma_{\Op^{\hbar}(f)\Pi_N}^{\hbar}-\sigma_{\Pi_N\Op^{\hbar}(f)\Pi_N}^{\hbar}\|_{\HH}\|g\|_{\HH}\\
&=\|\sigma_{\Op^{\hbar}(f)\Pi_N-\Pi_N\Op^{\hbar}(f)\Pi_N}^{\hbar}\|_{\HH}\|g\|_{\HH}\\
&=\|\sigma_{\Pi_N^{\perp}\Op^{\hbar}(f)\Pi_N}^{\hbar}\|_{\HH}\|g\|_{\HH},
\end{align*}
and tends to zero by~\eqref{item:cond3_gen}.

We now proceed to analyse $I_2(N)$ and $I_3(N)$.
By linearity we can consider separately the cases $f_2 \equiv 0$ and $f_1 \equiv 0$. We start from the case $f_2 \equiv 0$, and we write $\Op^{\hbar}(f)=f_1(\hat{x})$. Choose $\BB_1=C_c(\R_x;\FF^{-1}(C_c(\R_y)))\subset\HH$. Let $g\in\BB_1$. The support of $\FF_2 g\in\KK$ is contained in $U\times V$ for some $U\Subset \R_x$, and $V\Subset \R_y$.

For $I_2(N)$ we first note that
\begin{align*}
\sigma_{\Op^{\hbar}(f)\Pi_N}^\hbar (x,p) &= \sigma_{f_1(\hat{x})\Pi_N}^{\hbar}(x,p)=\int_{\R_y}\hbar f_1\left(x-\frac{\hbar y}{2}\right)K_{\Pi_N}\left(x-\frac{\hbar y}{2},x+\frac{\hbar y}{2}\right)e^{ipy}dy,\\
f(x,p)\sigma_{\Pi_N}^{\hbar}(x,p)&= f_1(x)\sigma_{\Pi_N}^{\hbar}(x,p)\int_{\R_y}\hbar f_1\left(x\right)K_{\Pi_N}\left(x-\frac{\hbar y}{2},x+\frac{\hbar y}{2}\right)e^{ipy}dy.
\end{align*}
Using Cauchy-Schwarz, we get
\begin{align*}
|I_2(N)|
&\leq\int_{\R_x\times\R_y} \left|\left(f_1\left(x-\frac{\hbar y}{2}\right)-f_1\left(x\right)\right)\hbar K_{\Pi_N}\left(x-\frac{\hbar y}{2},x+\frac{\hbar y}{2}\right)\FF_2g(x,y)\right|dxdy
\\
&\leq\sqrt{\mu}\left(\int_{U\times V} \left|\left(f_1\left(x-\frac{\hbar y}{2}\right)-f_1\left(x\right)\right)\FF_2g(x,y)\right|^2dxdy\right)^{\frac{1}{2}}\to0,
\end{align*}
by dominated convergence.

The last term is easily bounded
$$
|I_3(N)|=\left|\langle f\sigma_{\Pi_N}^{\hbar}-f\sigma,g\rangle_{\HH}\right|=\left|\langle \sigma_{\Pi_N}^{\hbar}-\sigma,fg\rangle_{\HH}\right|\leq \sup_{x\in U}|f_1(x)|\left|\langle \sigma_{\Pi_N}^{\hbar}-\sigma,g\rangle_{\HH}\right|
$$
since $g(\cdot,p)$ has support contained in $U\Subset\R_x$. By Condition~\eqref{item:cond1_gen}, we have  $|I_3(N)|\to0$.
\par

We now consider the case  $f_1 \equiv 0$, hence $ \Op^{\hbar}(f) = f_2(\hat{p}) $ . Choose $\BB_2=\FF C_c(\R_q;C_c(\R_p))\subset\HH$. Let $g\in\BB_2$. The support of $\FF_1^{-1} g$ is contained in $U\times V$ for some $U\Subset \R_q$, and $V\Subset \R_p$.

We notice that
\begin{align*}
\sigma_{\Op^{\hbar}(f)\Pi_N}^\hbar (x,p) &=\sigma_{f_2(\hat{p})\Pi_N}^{\hbar}(x,p)=\frac{1}{2\pi}\int_{\R_y} f_2\left(p-\frac{\hbar q}{2}\right)\widehat{K}_{\Pi_N}\left(\frac{1}{\hbar}p-\frac{q}{2},\frac{1}{\hbar}p+\frac{q}{2}\right)e^{iqx}dx,\\
f(x,p)\sigma_{\Pi_N}^{\hbar}(x,p) &=f_2(p)\sigma_{\Pi_N}^{\hbar}(x,p)=\frac{1}{2\pi}\int_{\R_y} f_2\left(p\right)\widehat{K}_{\Pi_N}\left(\frac{1}{\hbar}p-\frac{q}{2},\frac{1}{\hbar}p+\frac{q}{2}\right)e^{iqx}dx.
\end{align*}
So, proceeding as above we get 
\begin{align*}
|I_2(N)|
&\leq2\pi \int_{\R_q\times\R_p} \left|\left(f_2\left(p-\frac{\hbar q}{2}\right)-f_2\left(p\right)\right) \widehat{K}_{\Pi_N}\left(\frac{1}{\hbar}p-\frac{q}{2},\frac{1}{\hbar}p+\frac{q}{2}\right)\FF_1^{-1}g(q,p)\right|dqdp
\\
&\leq\sqrt{\mu}\left(\int_{U\times V} \left|\left(f_2\left(p-\frac{\hbar q}{2}\right)-f_2\left(p\right)\right)\FF_1^{-1}g(q,p)\right|^2dqdp\right)^{\frac{1}{2}}\to0,
\end{align*}
by dominated convergence.

For $I_3(N)$ we proceed as above
$$
|I_3(N)|=\left|\langle f\sigma_{\Pi_N}^{\hbar}-f\chi_{\Omega},g\rangle_{\HH}\right|=\left|\langle \sigma_{\Pi_N}^{\hbar}-\chi_{\Omega},fg\rangle_{\HH}\right|\leq \sup_{p\in V}|f_2(p)|\left|\langle \sigma_{\Pi_N}^{\hbar}-\chi_{\Omega},g\rangle_{\HH}\right|
$$
since $g(x,\cdot)$ has support contained in $V\Subset\R_p$. Again by~\eqref{item:cond1_gen}, we have  $|I_3(N)|\to0$.

\end{proof}

\begin{proof}[Proof of Theorem~\ref{thm:main3}] The scheme of the proof is the same as for the proof of Theorem~\ref{thm:main2}. Let $\BB=\SS(\R_x\times\R_p)\subset\HH$ the space of Schwartz functions. Then, by Condition~\eqref{item:cond2_gen}, it is enough to show that for all $g\in \BB$, 
\begin{multline*}
\langle \sigma_{\Pi_N^{}\Op^{\hbar}(f)\Pi_N}^{\hbar}-f\chi_{\Omega},g\rangle_{\HH}=\\
\underbrace{\langle\sigma_{\Pi_N\Op^{\hbar}(f)\Pi_N}^{\hbar}-\sigma_{\Op^{\hbar}(f)\Pi_N}^{\hbar},g\rangle_{\HH}}_{I_1(N)}
+\underbrace{\langle\sigma_{\Op^{\hbar}(f)\Pi_N}^{\hbar}-f\sigma_{\Pi_N}^{\hbar},g\rangle_{\HH}}_{I_2(N)}
+\underbrace{\langle f\sigma_{\Pi_N}^{\hbar}-f\chi_{\Omega},g\rangle_{\HH}}_{I_3(N)}.
\end{multline*}
tends to $0$. Again, by Condition~\eqref{item:cond3_gen}, $I_1(N)\to0$. 

The last term is
\be
I_3(N)=\langle f\sigma_{\Pi_N}^{\hbar}-f\sigma,g\rangle_{\HH}=\langle \sigma_{\Pi_N}^{\hbar}-\sigma,fg\rangle_{\HH}.
\ee
Since $f$ has at most polynomial growth, $fg\in \HH$ and therefore $I_3(N)\to0$ 
by~\eqref{item:cond1_gen}.

It remains to bound $I_2(N)$. We have $\sigma^{\hbar}_{\Op^{\hbar}(f)\Pi_N}(x,p)=\sigma^{\hbar}_{\Op^{\hbar}(f)} \,\sharp \, \sigma^{\hbar}_{\Pi_N}(x,p)=f \,\sharp\,\sigma^{\hbar}_{\Pi_N}(x,p)$. Using the explicit formula for the Moyal product~\eqref{eq:Moyal2},
\begin{align}
\label{eq:repr1prime}
\sigma^{\hbar}_{\Op^{\hbar}(f)\Pi_N}(x,p)
&=\int_{\R^4} e^{-i(k_1x_2-k_2x_1)}f\left(x-x_1,p-\frac{\hbar k_1}{2}\right)\sigma^{\hbar}_{\Pi_N}\left(x-x_2,p-\frac{\hbar k_2}{2}\right)\frac{dx_1 dk_1}{2\pi }\frac{dx_2 dk_2}{2\pi }.
\end{align}

It is useful to first consider the case of $f(x,p)$ a real-valued polynomial in both $x$ and $p$.
In such a case, by Taylor's formula,
\be
f\left(x-x_1,p-\frac{\hbar k_1}{2}\right)=\sum_{0\leq \ell_x+\ell_p\leq d}\frac{1}{\ell_x!\ell_p!}\left[\partial^{\ell_x}_x\partial^{\ell_p}_p f\left(x,p\right)\right]\left(-x_1\right)^{\ell_x}\left(-\frac{\hbar k_1}{2}\right)^{\ell_p},
\ee
where $d\geq0$ is the (total) degree of $f$.
Therefore,
\be
\label{eq:Moyal_polypoly}
\langle\sigma^{\hbar}_{\Op^{\hbar}(f)\Pi_N},g\rangle_{\HH}=\sum_{0\leq \ell_x+\ell_p\leq d}\frac{(-1)^{\ell_x+\ell_p}}{\ell_x!\ell_p!}J_{\ell_x,\ell_p}, 
\ee
where, for all $0\leq \ell_x+\ell_p\leq d$, 
\begin{multline*}
J_{\ell_x,\ell_p}=\left(\frac{\hbar}{2}\right)^{\ell_p}\int_{\R^6}x_1^{\ell_x}k_1^{\ell_p}e^{-i(k_1x_2-k_2x_1)}\left[\partial^{\ell_x}_x\partial^{\ell_p}_p f\left(x,p\right)\right]\sigma^{\hbar}_{\Pi_N}\left(x-x_2,p-\frac{\hbar k_2}{2}\right)\\\times g(x,p)\frac{dx_1 dk_1}{2\pi }\frac{dx_2 dk_2}{2\pi }dxdp\\
=\left(\frac{\hbar}{2}\right)^{\ell_p}(-i)^{\ell_x}i^{\ell_p}\int_{\R^6}\partial_{k_2}^{\ell_x}\partial_{x_2}^{\ell_p}\left[e^{-i(k_1x_2-k_2x_1)}\right]\left[\partial^{\ell_x}_x\partial^{\ell_p}_p f\left(x,p\right)\right]\sigma^{\hbar}_{\Pi_N}\left(x-x_2,p-\frac{\hbar k_2}{2}\right)\\\times g(x,p)\frac{dx_1 dk_1}{2\pi }\frac{dx_2 dk_2}{2\pi }dxdp\\
=\left(\frac{\hbar}{2}\right)^{\ell_p}(-i)^{\ell_x}i^{\ell_p}\int_{\R^6}e^{-i(k_1x_2-k_2x_1)}\left[\partial^{\ell_x}_x\partial^{\ell_p}_p f\left(x,p\right)\right]\partial_{k_2}^{\ell_x}\partial_{x_2}^{\ell_p}\left[\sigma^{\hbar}_{\Pi_N}\left(x-x_2,p-\frac{\hbar k_2}{2}\right)\right]\\\times g(x,p)\frac{dx_1 dk_1}{2\pi }\frac{dx_2 dk_2}{2\pi }dxdp\\
=\left(\frac{\hbar}{2}\right)^{\ell_x+\ell_p}(-i)^{\ell_x}i^{\ell_p}\int_{\R_x\times\R_p}\sigma^{\hbar}_{\Pi_N}\left(x,p\right)\left[\partial_{p}^{\ell_x}\partial_{x}^{\ell_p}\left[\partial^{\ell_x}_x\partial^{\ell_p}_p f\left(x,p\right)\right]g(x,p)\right]dxdp. 
\end{multline*}
Hence,
\begin{align*} 
\left|J_{\ell_x,\ell_p}\right|&\leq \left(\frac{\hbar}{2}\right)^{\ell_x+\ell_p}\|\sigma^{\hbar}_{\Pi_N}\|_{\HH}\left\|\partial_{p}^{\ell_x}\partial_{x}^{\ell_p}\left(\partial^{\ell_x}_x\partial^{\ell_p}_p f\right)g\right\|_{\HH}\\&=\left(\frac{\hbar}{2}\right)^{\ell_x+\ell_p}\sqrt{2\pi\mu}\left\|\partial_{p}^{\ell_x}\partial_{x}^{\ell_p}\left(\partial^{\ell_x}_x\partial^{\ell_p}_p f\right)g\right\|_{\HH}.
\end{align*} 
We conclude that in~\eqref{eq:Moyal_polypoly}, the only summand that survives in the limit is  the one with $\ell_x=\ell_p=0$:
\begin{align}
 \lim_{\substack{N\to\infty, \hbar\to0\\\hbar N=\mu}}\langle\sigma^{\hbar}_{\Op^{\hbar}(f)\Pi_N},g\rangle_{\HH}
 =J_{0,0}=\langle f\sigma^{\hbar}_{\Pi_N},g\rangle_{\HH},
\end{align}
and so $I_2(N)\to0$. 
\par

We now consider the general case of $f$ polynomial in $p$,
\be
f\left(x,p\right)=\sum_{\ell=0}^d\frac{1}{\ell!}a_{\ell}(x)p^{\ell},
\ee
with polynomially bounded coefficients:
\be
\label{eq:bound_coeff_a}
\left|a_{\ell}(x)\right|\leq C_{\ell}x^{2\alpha_{\ell}},
\ee
for some constants $C_{\ell}\geq0$ and positive integers $\alpha_{\ell}\geq1$.
Then,
\be
\label{eq:Moyal_poly_p}
\langle\sigma^{\hbar}_{\Op^{\hbar}(f)\Pi_N},g\rangle_{\HH}=\sum_{0\leq \ell\leq d}\frac{1}{\ell!}J_{\ell}, 
\ee
where
\begin{align*}
J_{\ell}=&\left(-\frac{i\hbar}{2}\right)^{\ell}\int\limits_{\R_x\times\R_p\times\R_{x_1}}\int\limits_{\R_{k_2}}\left(e^{ik_2x_1}\sigma^{\hbar}_{\Pi_N}\left(x,p-\frac{\hbar k_2}{2}\right)\frac{dk_2}{2\pi }\right)\partial_x^{\ell}\partial_p^{\ell} f(x-x_1,p)g(x,p) \,dxdpdx_1
\end{align*}
From~\eqref{eq:Moyal_poly_p}, we have
\be
\partial_x^j\partial_p^k f(x,p)=\sum_{\ell=0}^{d-k} \frac{1}{\ell!}\left[\partial_x^ja_{\ell}(x)\right]p^{\ell},
\ee
and therefore, by Leibniz's rule, we have now
\be
J_{\ell}=\left(-\frac{i\hbar}{2}\right)^{\ell}\sum_{j=0}^{\ell}\sum_{s=0}^{d-\ell}\binom{\ell}{j}\frac{1}{s!}I_{\ell,s,j}
\ee
where
\begin{align*}
I_{\ell,s,j}&=\int\limits_{\R_x\times\R_p\times\R_{x_1}}\int\limits_{\R_{k_2}}\left(e^{ik_2x_1}\sigma^{\hbar}_{\Pi_N}\left(x,p-\frac{\hbar k_2}{2}\right)\frac{dk_2}{2\pi }\right)\left[\partial_x^ja_s(x-x_1)\right]p^{s}\partial_x^{\ell-j}g(x,p)dxdpdx_1\\
&=\int\limits_{\R_x\times\R_p\times\R_{y}}\int\limits_{\R_{k}}\left(e^{-iy(k-p)}\sigma^{\hbar}_{\Pi_N}\left(x,k\right)\frac{d k}{2\pi }\right)\left[\partial_x^ja_s\left(x-\frac{\hbar y}{2}\right)\right]p^{s}\partial_x^{\ell-j}g(x,p)dxdpdy\\
&=\int\limits_{\R_x\times\R_{y}}\hbar K_{\Pi_N}\left(x-\frac{\hbar y}{2},x+\frac{\hbar y}{2}\right)\left[\partial_x^ja_s\left(x-\frac{\hbar y}{2}\right)\right]\left(\int_{\R_p}p^{s}\partial_x^{\ell-j}g(x,p)e^{-iyp}dp\right)dxdy\\
\end{align*}
Let $F_{\ell,s,j}(x,p):=p^{s}\partial_x^{\ell-j}g(x,p)$. 
Note that if $g\in\SS(\R_x\times\R_p)$, then $\FF_2(F_{\ell,s,j})\in\SS(\R_x\times\R_y)$ for all $\ell,s,j$.
From~\eqref{eq:bound_coeff_a}, we have
$$\left|\partial_x^ja_s\left(x-\frac{\hbar y}{2}\right)\right|\leq C_{s,j}\left(x-\frac{\hbar y}{2}\right)^{2\alpha_{s}},$$
for some constant $C_{s,j}\geq0$.
Therefore 
\be
\left|I_{\ell,s,j}\right| \leq2\pi\hbar \|K_{\Pi_N}\|_{\KK}\left(\int_{\R_x\times\R_y}\left|\left[\partial_x^ja_s\left(x-\frac{\hbar y}{2}\right)\right]\FF_2(F_{\ell,s,j})(x,y)\right|^2dxdy\right)^{1/2}<\infty.
\ee
We deduce that in~\eqref{eq:Moyal_poly_p}, the only summand that survives in the limit is the one with $\ell=0$, and this concludes the proof.
\end{proof}

\section{Truncated observables for a free particle in a box} 
\label{sec:box}
We now study the family of spectral projections of the Schr\"odinger operator  $A=-\frac{\hbar^2}{2}\frac{d^2}{dx^2}$ acting as
\be
Af(x)=-\frac{\hbar^2}{2}f''(x),
\ee
for all $f\in D(A)=H^2_0([-L,L])$. The self-adjoint operator $A$ is the Hamiltonian of a free particle confined in the one-dimensional box $[-L,L]$ with  zero Dirichlet boundary conditions. The eigenvalues of $A$ are
\begin{equation}
\label{eq:eigenvalues}
E_k=\frac{\hbar^2}{2}\left( \frac{k\pi}{2L}\right)^2.
\end{equation}
The corresponding normalised eigenfunctions
\begin{equation}
u_k(x)=\frac{\chi_{[-L,L]}(x)}{\sqrt{L}}\sin\left( \frac{k\pi}{2L}(x+L)\right), \quad k \geq 1,
\end{equation}
 form an orthonormal basis in $L^2([-L,L])$. A useful formula is
\be
\label{eq:Cheby}
U_k(\cos x)=\frac{\sin(k+1)x}{\sin x},
\ee
where $U_k$'s are the Chebyshev polynomials of the second kind ($U_0(y)=1$, $U_1(y)=2y$ and $U_{k}(y)=2y U_{k-1}(y)-U_{k-2}(y)$, for all $k \geq 2$). Using~\eqref{eq:Cheby}, we can rewrite the eigenfunctions of  $A$ as
\be
u_k(x)=\frac{\chi_{[-L,L]}(x)}{\sqrt{L}}\sin\left(\frac{\pi}{2L}(x+L)\right)U_{k-1}\left(\cos\left(\frac{\pi}{2L}(x+L)\right)\right).
\ee
Throughout this section we consider the nested sequence of orthogonal projections
\begin{equation}
\label{eq:proj_box}
\Pi_{N}:=\chi_{(-\infty,E_N]}(A)=\sum_{k=1}^N |u_k\rangle \langle u_k|,\quad N\geq1.
\end{equation}

\par

Note that for the Sch\"odinger operator of a particle in a box, the confining potential $V(x)$ is formally infinite outside the box and thus this case is not covered by Theorem~\ref{thm:main}.

In this section we will repeatedly use the following identity (an application of trigonometric formulae and integration by parts).
\begin{lem}
\label{lem:symbol_ujuk} For all $j,k\geq1$, 
\begin{multline}
\label{eq:symbol_ujuk}
\sigma_{|u_j\rangle\langle u_k|}^{\hbar}(x,p):=\int_{\R_{y}}\hbar u_j\left(x-\frac{\hbar y}{2}\right)u_k\left(x+\frac{\hbar y}{2}\right)e^{i p y} dy\\
 =\chi_{[-L,L]}(x)\frac{\hbar}{2 L}
 \sum_{\epsilon_1,\epsilon_2=\pm 1}
\frac{\epsilon_1 e^{-i\epsilon_2\frac{\pi}{2L}(j-\epsilon_1k)(x+L)} }{\frac{\pi\hbar}{4L}(j+\epsilon_1k)+\epsilon_2p}\sin\left( \frac{2}{\hbar}(L-|x|)\left( \frac{\pi\hbar}{4L}(j+\epsilon_1k)+\epsilon_2p\right)\right).
\end{multline}
\end{lem}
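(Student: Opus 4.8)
The statement is a direct computation that uses nothing beyond the explicit product form of the box eigenfunctions, a product-to-sum trigonometric identity, and one elementary integral. The only genuinely content-bearing step is the determination of the $y$-range of integration.

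\textbf{Step 1: localise the $y$-integral.} Since each $u_k$ is supported in $[-L,L]$, the integrand $\hbar\,u_j(x-\tfrac{\hbar y}{2})u_k(x+\tfrac{\hbar y}{2})e^{ipy}$ is nonzero only when both $x-\tfrac{\hbar y}{2}$ and $x+\tfrac{\hbar y}{2}$ belong to $[-L,L]$. Imposing these four inequalities and intersecting, one finds that the integrand vanishes identically unless $x\in[-L,L]$, and that for $x\in[-L,L]$ the effective domain of integration in $y$ is the symmetric interval $[-R,R]$ with $R=\tfrac{2}{\hbar}(L-|x|)$ (the upper/lower bounds come out as $\min$/$\max$ of linear functions of $x$, producing $L-|x|$). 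This already explains the factor $\chi_{[-L,L]}(x)$ and the argument $\tfrac{2}{\hbar}(L-|x|)$ of the sine in~\eqref{eq:symbol_ujuk}.

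\textbf{Step 2: trigonometry and the elementary integral.} On $[-R,R]$ write the integrand as $\tfrac{\hbar}{L}\sin A\,\sin B\,e^{ipy}$, with $A=\tfrac{j\pi}{2L}\bigl(x-\tfrac{\hbar y}{2}+L\bigr)$ and $B=\tfrac{k\pi}{2L}\bigl(x+\tfrac{\hbar y}{2}+L\bigr)$. The identity $\sin A\sin B=\tfrac12[\cos(A-B)-\cos(A+B)]=\tfrac12\sum_{\epsilon_1=\pm1}\epsilon_1\cos(\phi_{\epsilon_1}-\omega_{\epsilon_1}y)$ holds, where $\phi_{\epsilon_1}$ and $-\omega_{\epsilon_1}$ are respectively the $y$-independent part and the coefficient of $y$ in $A-\epsilon_1 B$; a one-line check gives $\phi_{\epsilon_1}=\tfrac{\pi}{2L}(j-\epsilon_1 k)(x+L)$ and $\omega_{\epsilon_1}=\tfrac{\pi\hbar}{4L}(j+\epsilon_1 k)$, exactly the quantities in the statement. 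Expanding $\cos(\phi-\omega y)=\tfrac12\sum_{\epsilon_2=\pm1}e^{i\epsilon_2(\phi-\omega y)}$ reduces everything to $\int_{-R}^{R}e^{i\alpha y}\,dy=\tfrac{2\sin(\alpha R)}{\alpha}$ (equivalently, two integrations by parts of $\int_{-R}^R\cos(\phi-\omega y)e^{ipy}\,dy$), with $\alpha=p-\epsilon_2\omega_{\epsilon_1}$.

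\textbf{Step 3: reorganise the four terms.} Using that $\sin$ is odd and $\epsilon_2^2=1$, one has $\tfrac{\sin((p-\epsilon_2\omega)R)}{p-\epsilon_2\omega}=\tfrac{\sin((\omega-\epsilon_2 p)R)}{\omega-\epsilon_2 p}$; relabelling $\epsilon_2\mapsto-\epsilon_2$ in the double sum therefore converts the phase $e^{i\epsilon_2\phi_{\epsilon_1}}$ into $e^{-i\epsilon_2\phi_{\epsilon_1}}$ and the denominator into $\omega_{\epsilon_1}+\epsilon_2 p$. Collecting the scalar factors $\tfrac{\hbar}{L}\cdot\tfrac12\cdot\tfrac12\cdot2=\tfrac{\hbar}{2L}$ yields precisely~\eqref{eq:symbol_ujuk}. (The apparent poles at $\omega_{\epsilon_1}+\epsilon_2 p=0$ are removable, the value there being the corresponding $R=\tfrac2\hbar(L-|x|)$.)

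\textbf{Main obstacle.} There is no deep difficulty: the lemma is essentially bookkeeping. The only point requiring care is Step~1 — correctly extracting the interval $[-R,R]$ (and the constraint $x\in[-L,L]$) from the two support conditions — together with the sign/relabelling manipulations in Step~3 needed to cast the four $(\epsilon_1,\epsilon_2)$-terms into the symmetric form displayed; everything else is one trigonometric identity and one elementary integral.
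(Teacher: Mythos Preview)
Your proposal is correct and follows exactly the approach the paper indicates (``an application of trigonometric formulae and integration by parts''): you localise the $y$-integral via the support of the eigenfunctions, apply the product-to-sum identity $\sin A\sin B=\tfrac12\sum_{\epsilon_1}\epsilon_1\cos(A-\epsilon_1 B)$, and evaluate the resulting exponential integrals over $[-R,R]$. The paper gives no further details, so your write-up is simply a fleshed-out version of the same computation; the sign/relabelling manipulations in your Step~3 are precisely what is needed to match the displayed form.
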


The starting point of the analysis are the explicit formulae for the integral kernel and the symbol of $\Pi_N$.
\begin{prop}[Integral kernel and symbol of $\Pi_N$]
\begin{align}
K_{\Pi_N}(x,y)&=\sum_{k=1}^{N}u_k(x)u_k(y)\nonumber\\
&=\frac{{\chi_{[-L,L]^2}(x,y)}}{4L}\left[D_{N}\left(\frac{\pi}{2L}(x-y)\right)-D_{N}\left(\frac{\pi}{2L}(x+y+2L\right)\right].
\label{eq:K_N1}
\end{align}
\begin{align}
\label{eq:symbol_P_N}
    \sigma_{\Pi_N}^\hbar(x,p)=\chi_{[-L,L]}(x)\frac{\hbar}{2 L}
   \sum_{k=1}^N&\Biggl[\frac{\sin \left(\frac{2}{\hbar} (L-\left| x\right| ) \left(\frac{\hbar\pi  k}{2 L}+p\right)\right)}{\frac{\hbar\pi  k}{2 L}+p}+\frac{\sin \left(\frac{2}{\hbar} (L-\left| x\right| ) \left(\frac{\hbar\pi  k}{2 L}-p\right)\right)}{\frac{\hbar\pi  k}{2 L}-p}\nonumber \\
    &-2\frac{\cos \left(\frac{\pi  k}{L}\left(L+x\right)\right) \sin \left(\frac{2}{\hbar} (L-\left| x\right| )p\right)}{p}\Biggr].
    \end{align}
\end{prop}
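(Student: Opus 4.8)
The plan is to prove the two formulae separately: the integral kernel by a direct trigonometric computation, and the symbol either by summing Lemma~\ref{lem:symbol_ujuk} over the diagonal or, alternatively, by inserting \eqref{eq:K_N1} into the definition of the Weyl symbol. An agreeable feature of this model is that all integrals involved are over \emph{bounded} sets and all sums are \emph{finite}, so no convergence or Fubini issues arise.

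\textbf{Integral kernel.} Starting from $u_k(x)=\tfrac{\chi_{[-L,L]}(x)}{\sqrt L}\sin\bigl(\tfrac{k\pi}{2L}(x+L)\bigr)$ and the product-to-sum identity $\sin A\sin B=\tfrac12[\cos(A-B)-\cos(A+B)]$, one gets
\[
u_k(x)u_k(y)=\frac{\chi_{[-L,L]^2}(x,y)}{2L}\Bigl[\cos\Bigl(\tfrac{k\pi}{2L}(x-y)\Bigr)-\cos\Bigl(\tfrac{k\pi}{2L}(x+y+2L)\Bigr)\Bigr].
\]
Summing over $1\le k\le N$ and using $\sum_{k=1}^N\cos(k\theta)=\tfrac12\bigl(D_N(\theta)-1\bigr)$, which is just a rewriting of \eqref{eq:Dirichletkernel}, the two additive constants $-\tfrac12$ cancel against each other and \eqref{eq:K_N1} follows at once. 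This step is routine.

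\textbf{Symbol, via Lemma~\ref{lem:symbol_ujuk}.} Since $\sigma_{\Pi_N}^{\hbar}=\sum_{k=1}^N\sigma_{|u_k\rangle\langle u_k|}^{\hbar}$, I would specialise \eqref{eq:symbol_ujuk} to $j=k$. Then the four summands indexed by $(\epsilon_1,\epsilon_2)$ degenerate: for $\epsilon_1=+1$ the factor $(j-\epsilon_1 k)$ vanishes, so the exponential is $1$ and one obtains the two ``$\tfrac{\hbar\pi k}{2L}\pm p$'' terms; for $\epsilon_1=-1$ the factor $(j+\epsilon_1 k)$ vanishes, so the denominator is just $\epsilon_2 p$ and, after using that $t\mapsto\sin(at)/t$ is even, the two values of $\epsilon_2$ combine through $e^{i\theta}+e^{-i\theta}=2\cos\theta$ into the single $\cos\bigl(\tfrac{\pi k}{L}(L+x)\bigr)$ term carrying the overall minus sign. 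Summing over $k$ yields \eqref{eq:symbol_P_N}.

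\textbf{Symbol, direct route.} Alternatively one inserts \eqref{eq:K_N1} into $\sigma_{\Pi_N}^{\hbar}(x,p)=\int_{\R_y}\hbar K_{\Pi_N}\bigl(x-\tfrac{\hbar y}{2},x+\tfrac{\hbar y}{2}\bigr)e^{ipy}\,dy$. The first point to check is the integration domain: since $\max\bigl(|x-\tfrac{\hbar y}{2}|,|x+\tfrac{\hbar y}{2}|\bigr)=|x|+\tfrac{\hbar|y|}{2}$, one has $\chi_{[-L,L]^2}\bigl(x-\tfrac{\hbar y}{2},x+\tfrac{\hbar y}{2}\bigr)=\chi_{[-L,L]}(x)\,\chi_{\{|y|\le\frac{2}{\hbar}(L-|x|)\}}$, so the $y$-integral runs over a bounded symmetric interval whose length depends on $x$. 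Writing $D_N$ as the finite exponential sum of \eqref{eq:Dirichletkernel}, one finds $D_N\bigl(\tfrac{\pi}{2L}((x-\tfrac{\hbar y}{2})-(x+\tfrac{\hbar y}{2}))\bigr)=\sum_{|k|\le N}e^{-ik\pi\hbar y/2L}$ and that $D_N\bigl(\tfrac{\pi}{2L}((x-\tfrac{\hbar y}{2})+(x+\tfrac{\hbar y}{2})+2L)\bigr)=D_N\bigl(\tfrac{\pi}{L}(x+L)\bigr)$ is $y$-independent. Integrating term by term with $\int_{-a}^{a}e^{ict}\,dt=\tfrac{2\sin(ac)}{c}$, $a=\tfrac{2}{\hbar}(L-|x|)$, gives a sum over $|k|\le N$; splitting off $k=0$, pairing $k$ with $-k$ (using oddness of the sine to match the stated form), and observing that the $k=0$ contribution from the first $D_N$ exactly cancels the constant $1$ inside $D_N\bigl(\tfrac{\pi}{L}(x+L)\bigr)=1+2\sum_{k=1}^N\cos\bigl(\tfrac{k\pi}{L}(x+L)\bigr)$, reorganises everything into \eqref{eq:symbol_P_N}.

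The main obstacle in either route is pure bookkeeping rather than any new idea: in the first route, correctly tracking the degeneracies and the sign changes coming from oddness of the sine when $\epsilon_1$ or $\epsilon_2$ equals $-1$; in the second, handling the $x$-dependent integration window, the removable singularities at $p=\pm\tfrac{\hbar\pi k}{2L}$ (where $\sin(at)/t$ extends continuously), and the cancellation of the $k=0$ term. Everything else reduces to \eqref{eq:Dirichletkernel} and Lemma~\ref{lem:symbol_ujuk}.
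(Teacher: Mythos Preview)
Your proposal is correct and follows essentially the same approach as the paper: for the kernel, the product-to-sum identity plus the Dirichlet kernel is equivalent to the paper's ``complex exponential form of $\sin$ plus geometric sum,'' and for the symbol your first route (specialising Lemma~\ref{lem:symbol_ujuk} to $j=k$ and summing) is exactly what the paper does. Your second, direct route via inserting~\eqref{eq:K_N1} into the Weyl symbol is a valid alternative that the paper does not give; it bypasses the lemma at the cost of tracking the $x$-dependent integration window, but yields the same bookkeeping in the end.
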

\begin{proof}
Formula \eqref{eq:K_N1} can be obtained  by writing the complex exponential form of the $\sin(\cdot)$ function, and then using the geometric sum. 
Formula \eqref{eq:symbol_P_N} follows from Lemma~\ref{lem:symbol_ujuk} upon observing that $\sigma_{\Pi_N}^\hbar=  \sum_{k=1}^N \sigma_{|u_k\rangle\langle u_k|}^\hbar$.
\end{proof}

\subsection{Semiclassical macroscopic limit of $ \sigma_{\Pi_N}^\hbar$}

In the semiclassical limit $N\to\infty$, $\hbar\to0$ with the product $\hbar N=\mu$ kept fixed, we again expect the symbol to concentrate in the classically allowed region. A natural candidate as classical counterpart of the quantum Hamiltonian $A=-\frac{\hbar^2}{2}\frac{d^2}{dx^2}$, $D(A)=H^2_0([-L,L])$, is the function
$$
a(x,p)=\frac{1}{2}p^2,\quad (x,p)\in [-L,L]\times \R_p
$$ 
In the limit, the classically allowed region corresponding to the  subspace $\operatorname{Ran}\Pi_N$ with $\Pi_{N}=\chi_{(-\infty,E_N]}(A)$, and $E_N=\frac{\hbar^2N^2}{2}\left( \frac{\pi}{2L}\right)^2$ is the rectangle $\Omega=R$ given by
\be
\begin{aligned}
\label{eq:rectangle}
R&=\left\{(x,p)\in\R_x\times \R_p\colon x\in [-L,L]\quad\text{and}\quad a(x,p) \leq  \frac{1}{2}\frac{\mu^2\pi^2}{4L^2}\right\}\\
&=\left\{(x,p)\in\R_x\times \R_p\colon -L\leq x\leq L\quad\text{and}\quad -\frac{\mu\pi}{2L}\leq p\leq  \frac{\mu\pi}{2L}\right\},
\end{aligned}
\ee
whose boundary is 
$$\partial R:=\left\{(x,p) \in \R_x\times \R_p : |x|=L\right\}\cap \left\{(x,p) \in \R_x\times \R_p : |p|=\frac{\mu \pi}{{2L}}\right\}.$$

The first  result of this Section is the identification of the limit of the Weyl symbols $\sigmaPc(x,p)$  of the projection operator $\Pi_{N}$.  Let 
\be
\label{eq:chiOmega}
\chi_{R}(x,p)=
\chi_{[-L,L]}(x)\chi_{\left[-\frac{\mu \pi}{{2L}},\frac{\mu \pi}{{2L}}\right]}(p)
\ee 
be the characteristic function of the rectangle~\eqref{eq:rectangle}.

 \begin{thm}[$L^2$-convergence of the symbol of $\Pi_N$]\label{thm:convergenceL2_R} Set $\mu>0$. Then,
 \begin{align}
\label{eq:asymp_sigma_P_<N}
\lim_{\substack{N\to\infty, \hbar\to0\\\hbar N=\mu}}\|\sigmaPc-\chi_{R}\|_{L^2(\R_x\times\R_p)}=0, 
\end{align}
with $\chi_{R}$ as in~\eqref{eq:chiOmega}.
\end{thm}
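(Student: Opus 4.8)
The plan is to follow the same template that was used for Theorem~\ref{thm:main}: match the $L^2$-norms, reduce strong convergence to weak convergence, transport the problem to the Fourier side via the (rescaled) isometry $\FF_2$, and there extract the limit directly from the explicit kernel formula~\eqref{eq:K_N1}. Since the hard-wall potential is not covered by Theorem~\ref{thm:main}, nothing can be cited wholesale; instead the explicit formulae available in the box case will do the work.

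First I would do the norm bookkeeping. Because $\Pi_N$ has rank $N$, $\|K_{\Pi_N}\|_{L^2(\R_x\times\R_y)}^2=\Tr\Pi_N=N$, so by~\eqref{eq:useful_identity_norm} $\|\sigmaPc\|_{L^2(\R_x\times\R_p)}^2=2\pi\hbar N=2\pi\mu$; on the other hand $\|\chi_R\|_{L^2}^2=|R|=2L\cdot\frac{\mu\pi}{L}=2\pi\mu$. Thus $(\sigmaPc)$ is bounded in $\HH=L^2(\R_x\times\R_p)$ and $\|\sigmaPc\|_{\HH}\to\|\chi_R\|_{\HH}$, so by Lemma~\ref{lem:basic}(2) it suffices to prove weak convergence, and since $\FF_2\colon\HH\to\KK=L^2(\R_x\times\R_y)$ is (up to a constant) an isometry, by Lemma~\ref{lem:basic}(3) it suffices to prove $\FF_2\sigmaPc\rightharpoonup\FF_2\chi_R$ in $\KK$. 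Next I would compute the Fourier-side kernel: inserting~\eqref{eq:K_N1} into $\FF_2\sigmaPc(x,y)=2\pi\hbar K_{\Pi_N}(x-\tfrac{\hbar y}{2},x+\tfrac{\hbar y}{2})$ gives
\be
\FF_2\sigmaPc(x,y)=\frac{\pi\hbar}{2L}\,\chi_{[-L,L]}\!\Big(x-\tfrac{\hbar y}{2}\Big)\chi_{[-L,L]}\!\Big(x+\tfrac{\hbar y}{2}\Big)\Big[D_N\!\Big(\tfrac{\pi\hbar y}{2L}\Big)-D_N\!\Big(\tfrac{\pi}{L}(x+L)\Big)\Big].
\ee
Using $D_N(\theta)=\sin((2N+1)\theta/2)/\sin(\theta/2)$ from~\eqref{eq:Dirichletkernel} and $\hbar N=\mu$, one checks $\tfrac{\pi\hbar}{2L}D_N(\tfrac{\pi\hbar y}{2L})\to\tfrac{2}{y}\sin(\tfrac{\mu\pi y}{2L})$ uniformly for $y$ in compact sets; for $x$ in a compact $K\Subset(-L,L)$ one has $|D_N(\tfrac{\pi}{L}(x+L))|\le C_K$, so the second term, carrying the prefactor $\hbar\to0$, tends to $0$ uniformly on $K$; and for $\hbar$ small the two indicator factors equal $1$ on $K\times\{|y|\le M\}$. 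Since $\FF_2\chi_R(x,y)=\chi_{[-L,L]}(x)\tfrac{2}{y}\sin(\tfrac{\mu\pi y}{2L})$ (this is the rectangle's analogue of the Lemma computing $\FF_2\chi_\Omega$), we get $\FF_2\sigmaPc\to\FF_2\chi_R$ uniformly on $K\times K'$ for all $K\Subset(-L,L)$, $K'\Subset\R_y$; and both $\FF_2\sigmaPc$ (for every $N$) and $\FF_2\chi_R$ are supported in $\{|x|\le L\}$.

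To upgrade uniform-on-compacts to weak convergence I would use a two-part dense subspace adapted to the slab. Let $\BB$ be the linear span of products $g(x,y)=h(x)k(y)$ with $k\in C_c(\R_y)$ and $h\in C_c\big((-L,L)\big)\cup C_c\big(\R\setminus[-L,L]\big)$. Functions of $L^2(\R_x)$ that vanish near $\{-L,L\}$ are $L^2$-dense and split as a piece supported in a compact subset of $(-L,L)$ plus a piece supported in $\R\setminus[-L,L]$, so $\mathrm{span}\big(C_c((-L,L))\cup C_c(\R\setminus[-L,L])\big)$ is dense in $L^2(\R_x)$, whence $\BB$ is dense in $\KK$. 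For $g=h\otimes k$ with $\mathrm{supp}\,h\subset\R\setminus[-L,L]$, both $\langle\FF_2\sigmaPc,g\rangle_{\KK}$ and $\langle\FF_2\chi_R,g\rangle_{\KK}$ vanish identically, since both kernels are supported in $\{|x|\le L\}$. For $\mathrm{supp}\,h=:K\Subset(-L,L)$, the uniform convergence on $K\times\mathrm{supp}\,k$ established above passes the limit under the integral. By linearity $\langle\FF_2\sigmaPc,g\rangle_{\KK}\to\langle\FF_2\chi_R,g\rangle_{\KK}$ for all $g\in\BB$, so by Lemma~\ref{lem:basic}(1) $\FF_2\sigmaPc\rightharpoonup\FF_2\chi_R$; combined with the norm equality this yields strong convergence and hence the theorem.

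The main obstacle, and the reason the argument is not a verbatim copy of the proof of Theorem~\ref{thm:main}, is the second Dirichlet term $D_N(\tfrac{\pi}{L}(x+L))$, which blows up as $x\to\pm L$: one cannot run a plain uniform-on-compacts argument on a full neighbourhood of $\mathrm{supp}\,\chi_R$ in the $x$-variable. The resolution is precisely the two-part dense subspace above, which exploits that both the symbols and the limit live in $\{|x|\le L\}$, so test functions may be taken either strictly inside the slab (where the singular term is controlled and killed by the factor $\hbar$) or strictly outside it (where the pairings are trivially zero). A secondary routine point is verifying the uniform limit $\tfrac{\pi\hbar}{2L}D_N(\tfrac{\pi\hbar y}{2L})\to\tfrac{2}{y}\sin(\tfrac{\mu\pi y}{2L})$ with the $\hbar N=\mu$ scaling, which is the box analogue of the sine-kernel convergence underlying Theorem~\ref{thm:main}.
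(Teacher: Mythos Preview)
Your proposal is correct and follows essentially the same route as the paper: match the $L^2$-norms, pass to the Fourier side via $\FF_2$, and use the explicit kernel formula~\eqref{eq:K_N1} to obtain uniform convergence on compacts (the paper packages this as Proposition~\ref{prop:limitKbulk}, with the same bounds on the two Dirichlet pieces and the indicator factors). The one place you are actually more careful than the paper is the density argument: Proposition~\ref{prop:limitKbulk} only yields uniform convergence for $x$ in compacts of the \emph{open} interval $(-L,L)$, and the paper simply says the proof scheme is the same as for Theorem~\ref{thm:main} without spelling out how to handle test functions whose $x$-support meets $\{\pm L\}$; your two-part dense subspace $\BB$ (splitting into $h$ supported strictly inside and $h$ supported strictly outside, exploiting that both symbols are supported in $\{|x|\le L\}$) makes this step explicit and closes that small gap.
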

For a numerical illustration, see Figure~\ref{fig:symbol}. The proof scheme is the same as for Theorem~\ref{thm:main}. The only ingredient missing (that does not immediately follows from the general results of~\cite{Deleporte21}) is the uniform convergence of $\FF_2\sigmaPc$ to $\FF_2\chi_{R}$ on compact subsets of $\R_x\times\R_y$.  For completeness we report here the precise result and its proof.
\begin{prop}
\label{prop:limitKbulk}
Fix $\mu>0$ and let $\hbar=\mu/N$. Let $U\Subset (-L,L)$ and $V\Subset\R$ compact sets, and set $C_U=\max_{x\in U}|x|$, and  $C_V=\max_{y\in V}|y|$.
 Let $\hbar_0$ such that $0<\hbar_0<(L-C_U)/C_V$. Then, for all $\hbar <\hbar_0$ we have that:
 \be
 \label{eq:limitKbulk}
 \sup_{x\in U}\sup_{y\in V}\left|2\pi\hbar K_{\Pi_N}\left(x- \frac{\hbar y}{2},x+   \frac{\hbar y}{2}\right)-\chi_{[-L,L]}(x)\frac{\pi\mu}{L}S\left(\frac{\pi\mu}{L} y\right)\right|\leq C \hbar ,
 \ee
where  $S(\cdot)$ is the sine kernel of Eq.~\eqref{eq:sinekernel}, and
\be
\label{eq:constant_C_bound}
C=\frac{\pi}{2L}\left[\frac{L}{L-C_U}+\frac{\pi\mu}{2L} C_V+1\right].
\ee
\end{prop}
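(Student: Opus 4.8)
The plan is to derive \eqref{eq:limitKbulk} directly from the closed formula \eqref{eq:K_N1} for $K_{\Pi_N}$, by evaluating it at the rescaled arguments and isolating a Riemann‑sum term whose limit is the sine kernel, plus two explicitly bounded remainders.

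First I would check that, for $x\in U$, $y\in V$ and $0<\hbar<\hbar_0$, both spatial arguments $x\pm\frac{\hbar y}{2}$ lie in the open interval $(-L,L)$: indeed $|x\pm\frac{\hbar y}{2}|\le C_U+\frac{\hbar_0}{2}C_V<C_U+\frac{L-C_U}{2}<L$. Hence the indicator $\chi_{[-L,L]^2}$ in \eqref{eq:K_N1} is identically $1$ on the region of interest, and so is $\chi_{[-L,L]}(x)$ in \eqref{eq:limitKbulk}. Plugging $x-\frac{\hbar y}{2}$ and $x+\frac{\hbar y}{2}$ into \eqref{eq:K_N1}, and using $\bigl(x-\frac{\hbar y}{2}\bigr)-\bigl(x+\frac{\hbar y}{2}\bigr)=-\hbar y$, $\bigl(x-\frac{\hbar y}{2}\bigr)+\bigl(x+\frac{\hbar y}{2}\bigr)+2L=2(x+L)$, together with the evenness of $D_N$, one gets
\[
2\pi\hbar\,K_{\Pi_N}\!\Bigl(x-\tfrac{\hbar y}{2},x+\tfrac{\hbar y}{2}\Bigr)=\frac{\pi\hbar}{2L}\,D_N\!\Bigl(\frac{\pi\hbar y}{2L}\Bigr)-\frac{\pi\hbar}{2L}\,D_N\!\Bigl(\frac{\pi(x+L)}{L}\Bigr).
\]
Then, using $D_N(\theta)=1+2\sum_{k=1}^N\cos(k\theta)$ and $\hbar=\mu/N$, the first term becomes
\[
\frac{\pi\hbar}{2L}\,D_N\!\Bigl(\frac{\pi\hbar y}{2L}\Bigr)=\frac{\pi\hbar}{2L}+\frac{\pi\mu}{L}\cdot\frac1N\sum_{k=1}^N\cos\!\Bigl(\frac{\pi\mu y}{2L}\cdot\frac kN\Bigr),
\]
and since $\int_0^1\cos\bigl(\frac{\pi\mu y}{2L}t\bigr)dt=S\bigl(\frac{\pi\mu}{L}y\bigr)$, the sum is a right‑endpoint Riemann approximation of $\frac{\pi\mu}{L}S\bigl(\frac{\pi\mu}{L}y\bigr)$.

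It then remains to bound three contributions. (i) The constant $\frac{\pi\hbar}{2L}$ is already $O(\hbar)$ and accounts for the ``$\frac{\pi}{2L}$'' in $C$. (ii) The Riemann‑sum error is handled by the standard estimate $\bigl|\frac1N\sum_{k=1}^Ng(k/N)-\int_0^1g\bigr|\le\frac1{2N}\sup_{[0,1]}|g'|$ applied to $g(t)=\cos\bigl(\frac{\pi\mu y}{2L}t\bigr)$, for which $\sup|g'|\le\frac{\pi\mu}{2L}|y|\le\frac{\pi\mu}{2L}C_V$; multiplying by $\frac{\pi\mu}{L}$ and using $\frac1N=\frac{\hbar}{\mu}$ gives the term $\frac{\pi^2\mu C_V}{4L^2}\hbar=\frac{\pi}{2L}\cdot\frac{\pi\mu}{2L}C_V\,\hbar$. (iii) For the off‑diagonal Dirichlet kernel $\frac{\pi\hbar}{2L}D_N\bigl(\frac{\pi(x+L)}{L}\bigr)$, I would note that for $x\in U$ the argument $\frac{\pi(x+L)}{2L}$ ranges over an interval symmetric about $\frac{\pi}{2}$ and contained in $(0,\pi)$, so $\bigl|\sin\frac{\pi(x+L)}{2L}\bigr|\ge\sin\frac{\pi(L-C_U)}{2L}\ge\frac{L-C_U}{L}$ by Jordan's inequality, whence $\bigl|D_N\bigl(\frac{\pi(x+L)}{L}\bigr)\bigr|\le\frac{L}{L-C_U}$ uniformly in $N$ and $x\in U$, contributing $\frac{\pi\hbar}{2(L-C_U)}=\frac{\pi}{2L}\cdot\frac{L}{L-C_U}\,\hbar$. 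Summing (i)–(iii) yields exactly $C\hbar$ with $C$ as in \eqref{eq:constant_C_bound}.

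The only genuinely delicate point is step (iii): $D_N\bigl(\frac{\pi(x+L)}{L}\bigr)$ has poles at $x=\pm L$, so it can be controlled uniformly only away from the endpoints — this is what forces the hypothesis $U\Subset(-L,L)$ and produces the $L/(L-C_U)$ factor in $C$. Everything else is elementary bookkeeping of the Riemann‑sum remainder, and the proof of Proposition~\ref{prop:limitKbulk} reduces to assembling these three estimates.
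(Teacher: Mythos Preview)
Your proof is correct and follows essentially the same strategy as the paper's: reduce to the two Dirichlet terms via \eqref{eq:K_N1}, bound the off-diagonal term $D_N\bigl(\tfrac{\pi(x+L)}{L}\bigr)$ using $\sin\frac{\pi(x+L)}{2L}\ge\frac{L-|x|}{L}$ (your Jordan inequality is this same estimate), and control the main term by a Riemann-sum error bound of Lagrange type. The only cosmetic difference is that the paper writes $D_N$ in complex-exponential form and takes the Riemann sum over $[-\mu,\mu]$ with $2N$ nodes (splitting off one extra term to account for the ``$+1$''), whereas you use the cosine form and a sum over $[0,1]$; the resulting constants agree exactly.
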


\begin{rem}
If $V=\{0\}$, we have $y=0$, and the claim is true for all $\hbar=\mu/N$:
$$
\left|2\pi\hbar K_{\Pi_N}\left(x,x\right)-\frac{\pi\mu}{L}S\left(0\right)\right|=\frac{\pi}{2L}\hbar.
$$
(So we formally set $\hbar_0=+\infty$.)
\end{rem}
\begin{proof}[Proof of Proposition~\ref{prop:limitKbulk}] We proceed to show~\eqref{eq:limitKbulk}. For all $x\in(-L,L)$, we denote by $I_{x,\hbar}$ the  interval
$$
I_{x,\hbar}:=\left\{y\in\R\colon\text{ $\left|x-\hbar \frac{y}{2}\right|\leq L$ and $\left|x+\hbar \frac{y}{2}\right|\leq L$}\right\}. 
$$
With this notation we have that the rescaled kernel $2\pi\hbar K_{\Pi_N}\left(x-\hbar y/2,x+ \hbar y/2\right)$ is zero whenever $y\notin I_{x,\hbar}$.  If $V=\{0\}$, the claim is immediate for all $\hbar=\mu/N$.
Suppose $V\neq\{0\}$.
 We have
$$
2\pi\hbar K_{\Pi_N}\left(x+\hbar \frac{y}{2},x+ \hbar  \frac{y}{2}\right)=
\frac{\pi\hbar}{2L}\chi_{I_{x,\hbar}}(y)\left[D_{N}\left(\frac{\hbar \pi y}{2L}\right)-D_{N}\left(\frac{\pi}{L}(x+L)\right)\right].
$$
Note that, for $x\in U$, 
$$
\left|x\pm\hbar\frac{y}{2}\right|\leq \left|x\right|+\hbar\left|\frac{y}{2}\right|\leq  \max_{x\in U}\left|x\right|+\frac{\hbar}{2}\max_{y\in V}\left|y\right|=  C_U+\frac{\hbar}{2}C_V\leq L,
$$
since $\hbar<\hbar_0$. So, we have $\chi_{I_{x,\hbar}}(y)=1$ for all $y\in V$. Moreover, by the elementary inequality $\sin\left(\frac{\pi}{2}\frac{L+x}{L}\right)\geq \frac{L-|x|}{L}$, for $|x|\leq L$, we get
\begin{multline*}
\left|
D_{N}\left(\frac{\pi}{L}(x+L)\right)
\right|=
\left|
\frac{\sin\left((2N+1)\frac{\pi}{2}\frac{L+x}{L}\right)}{\sin\left(\frac{\pi}{2}\frac{L+x}{L}\right)}\right|
\leq  
\frac{\left|\sin\left((2N+1)\frac{\pi}{2}\frac{L+x}{L}\right)\right|}{\frac{L-|x|}{L}}\leq \frac{L}{L-C_U}
\end{multline*}
Therefore,
\begin{multline*}
\left|2\pi\hbar K_{\Pi_N}\left(x+\hbar \frac{y}{2},x+ \hbar  \frac{y}{2}\right)-\frac{\pi\mu}{L}S\left(\frac{\pi\mu}{L} y\right)\right|
\leq \frac{ \pi \hbar }{2L}\frac{L}{L-C_U}
+\left| \frac{ \pi \hbar }{2L}D_{N}\left(\frac{\hbar \pi y}{2L}\right)-\frac{\pi\mu}{L}S\left(\frac{\pi\mu}{L} y\right)\right|.
\end{multline*}
Recall the following corollary of the Lagrange mean value theorem: if $f$ is differentiable in $(a,b)$, and we set $x_j=a+j\frac{(b-a)}{n}$, $j=0, \dots, n-1$, then,
\be
\left|\frac{(b-a)}{n}\sum_{j=0}^{n-1}f(x_j)-\int_{a}^bf(x)dx\right|\leq\frac{M_1(b-a)^2}{2n},\quad M_1=\sup_{x\in(a,b)}\left|f'(x)\right|.
\ee
Therefore we have
\begin{multline*}
\left| \frac{ \pi \hbar }{2L}D_{N}\left(\frac{\hbar \pi y}{2L}\right)-\frac{\pi\mu}{L}S\left(\frac{\pi\mu}{L} y\right)\right|=\left| \frac{ \pi \hbar }{2L}\sum_{k=-N}^{N}e^{i\frac{\hbar \pi  k y}{2L}}- \frac{ \pi  }{2L}\int_{-\mu}^{\mu}e^{i\frac{ \pi  k y}{2L}}dk\right|\\
\leq \frac{ \pi  }{2L}
\left(\left| \hbar \sum_{k=-N}^{N-1}e^{i\frac{\hbar \pi  k y}{2L}}- \int_{-\mu}^{\mu}e^{i\frac{ \pi  k y}{2L}}dk\right|+\hbar\right)\leq \frac{ \pi  }{2L}\left(\frac{\pi}{2L}\frac{|y|\mu^2}{N}+\hbar\right)\leq \frac{ \pi  }{2L}\left(\frac{\pi}{2L}\mu C_V+1\right)\hbar.
\end{multline*}

Putting everything together, we get the claim.

\end{proof}

\begin{figure}[t]
	\centering
	\includegraphics[width=.90\textwidth]{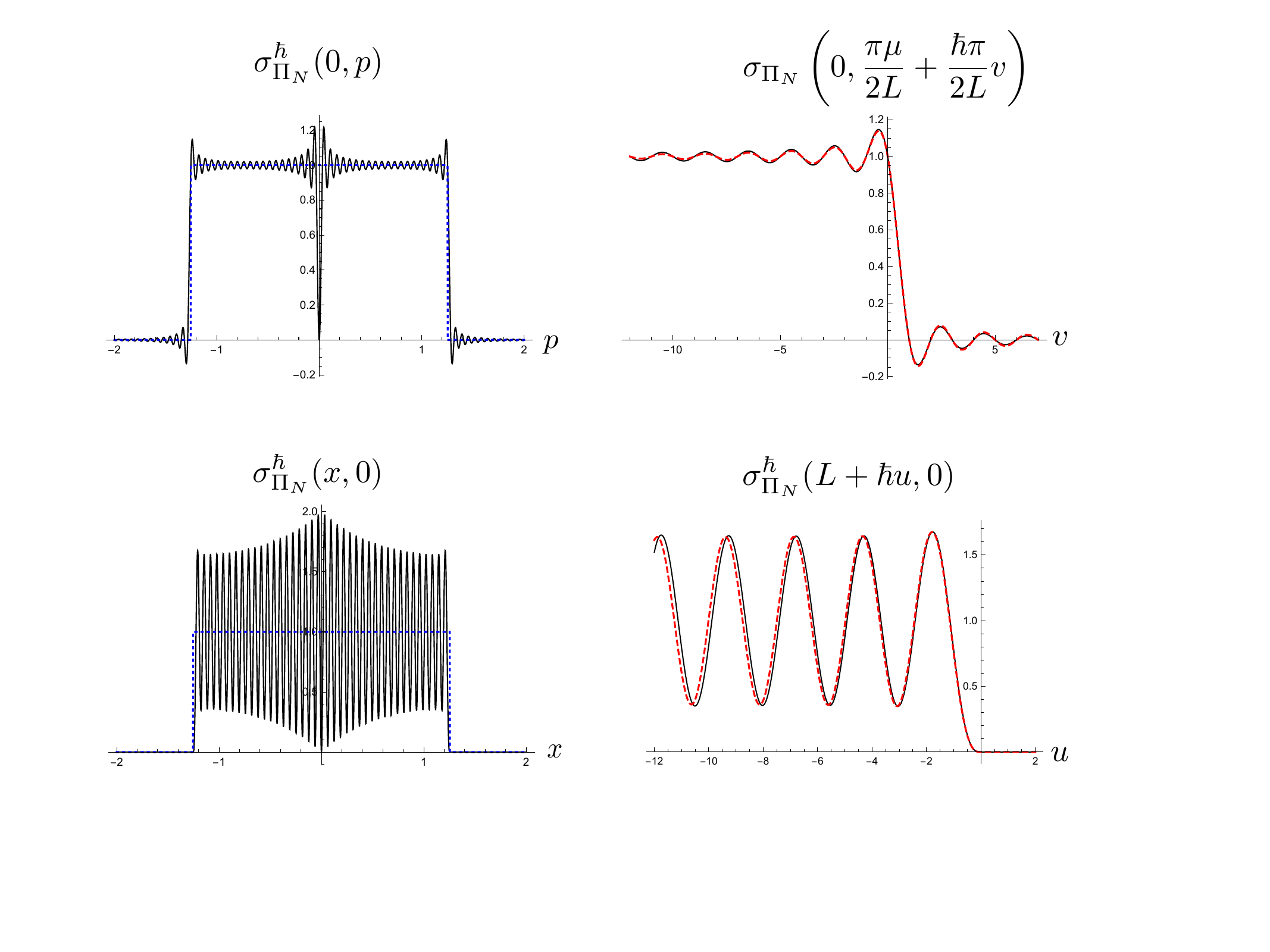} 
	\caption{Comparison between $\sigmaPc$ (solid black line), its $L^2$-limit $\chi_R$ in Eq.~\eqref{eq:chiOmega} (dotted blue line), and the microscopic pointwise limits in Eqs.~\eqref{eq:pointwise1}-\eqref{eq:pointwise2} (dashed red line).  Here $N=40$, $\mu=1$, and $L=\sqrt{\pi/2}$. }
	\label{fig:symbol}
\end{figure}

\subsection{Semiclassical pointwise limits of $\sigmaPc$}
\label{sub:pointwise}
At the boundary $\partial R$  of the classically allowed region, the symbol $\sigmaPc$  develops a jump, for large $N$. The second main result of this section concerns the \emph{pointwise} asymptotics of $\sigmaPc$near $\partial R$. By symmetry, it is enough to consider the left sides ($x=L$) and the upper side ($p=\frac{\pi\mu}{2L}$) of $\partial R$.

\begin{thm}[Pointwise convergence at microscopic scale near the boundary] 
\label{thm:pointwise}
The following  limits hold pointwise:
\begin{multline}
\label{eq:pointwise1}
\lim_{\substack{N\to\infty, \hbar\to0\\\hbar N=\mu}}\sigmaPc\left(L-\hbar u,p\right)=\\
\displaystyle \frac{\chi_{\R_+}(u)}{\pi} \left[\operatorname{Si}\left(2 u \left(p+\frac{\pi  \mu }{2 L}\right)\right)-\operatorname{Si}\left(2 u \left(p-\frac{\pi  \mu }{2 L}\right)\right)-\frac{\sin (2 p u)}{p u} \sin \left(\frac{\pi  \mu }{L} u\right)\right],
\end{multline}
where $\operatorname{Si}(\cdot)$ is the Sine integral function,
$$
\operatorname{Si}(x):=\int_0^x\frac{\sin t}{t}dt,
$$
and
\be
\label{eq:pointwise2}
\lim_{\substack{N\to\infty, \hbar\to0\\\hbar N=\mu}}\sigmaPc\left(x,\frac{\pi \mu}{2L}+\frac{\hbar\pi}{2L} {v}\right)= \displaystyle\frac{\chi_{[-L,L]}(x)}{2 L}\sum_{j=0}^{\infty}\frac{\sin \left( 2 (L-\left| x\right| )\frac{\pi }{2 L} (j+{v})\right)}{\frac{\pi }{2 L} (j+{v})},
\ee

\end{thm}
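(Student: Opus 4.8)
The plan is to work directly from the explicit formula~\eqref{eq:symbol_P_N} for $\sigmaPc(x,p)$, inserting the scaled arguments and identifying the limit of each sum as a Riemann sum or a convergent series. For~\eqref{eq:pointwise1}, I would substitute $x = L - \hbar u$ (so $L - |x| = \hbar u$ when $u \geq 0$, while the prefactor $\chi_{[-L,L]}(L-\hbar u)$ forces the answer to vanish for $u < 0$, explaining the factor $\chi_{\R_+}(u)$). Then $L - |x| = \hbar u$ cancels the $1/\hbar$ inside the sine arguments, giving terms of the shape $\sin\bigl(2u(\frac{\hbar\pi k}{2L} \pm p)\bigr)$. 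With $\hbar = \mu/N$, the three sums over $k = 1,\dots,N$ become, after multiplying by the prefactor $\frac{\hbar}{2L} = \frac{\mu}{2LN}$, Riemann sums for integrals over the variable $t = \frac{\hbar\pi k}{2L} \in (0, \frac{\pi\mu}{2L}]$. Concretely the first two sums converge to $\frac{1}{\pi}\int_0^{\pi\mu/(2L)} \frac{\sin(2u(t+p)) + \sin(2u(t-p))}{\cdots}\,dt$-type expressions; a change of variables turns these into $\operatorname{Si}$ differences (since $\operatorname{Si}(b) - \operatorname{Si}(a) = \int_a^b \frac{\sin s}{s}\,ds$), and the third sum, whose summand $\cos(\pi k(L+x)/L)\,\sin(\tfrac{2}{\hbar}(L-|x|)p)/p$ has a fast-oscillating factor $\cos(\pi k (2 - \hbar u)) = \cos(2\pi k - \pi k \hbar u)$, contributes the remaining $\frac{\sin(2pu)}{pu}\sin(\frac{\pi\mu}{L}u)$ term after one recognizes $\cos(\pi k \hbar u)$ as the oscillatory part of yet another Riemann sum over the same grid. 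I would make the Riemann-sum convergence quantitative using the Lagrange mean value bound already invoked in the proof of Proposition~\ref{prop:limitKbulk}.

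For~\eqref{eq:pointwise2}, I would substitute $p = \frac{\pi\mu}{2L} + \frac{\hbar\pi}{2L}q = \frac{\pi}{2L}(\hbar N + \hbar q) = \frac{\hbar\pi}{2L}(N + q)$ into~\eqref{eq:symbol_P_N}. The key observation is that the denominators $\frac{\hbar\pi k}{2L} - p = \frac{\hbar\pi}{2L}(k - N - q)$ become small precisely when $k$ is near $N$; reindexing by $j = N - k \geq 0$, the term $\frac{\sin(\frac{2}{\hbar}(L-|x|)(\frac{\hbar\pi k}{2L}-p))}{\frac{\hbar\pi k}{2L}-p}$ becomes $\frac{\sin(\frac{2}{\hbar}(L-|x|)\cdot(-\frac{\hbar\pi}{2L})(j+q))}{-\frac{\hbar\pi}{2L}(j+q)} = \frac{\sin(2(L-|x|)\frac{\pi}{2L}(j+q))}{\frac{\pi}{2L}(j+q)}\cdot\frac{1}{(\text{no }\hbar)}$, and after multiplying by the prefactor $\frac{\hbar}{2L}$ the factor $\frac{\hbar}{2L}\cdot\frac{2L}{\hbar\pi} = \frac{1}{\pi}$... wait, I need to track this carefully: the prefactor $\frac{\hbar}{2L}$ times $\frac{1}{\frac{\hbar\pi}{2L}(j+q)} = \frac{1}{\pi(j+q)}$, but the stated answer has $\frac{1}{\frac{\pi}{2L}(j+q)}$, so in fact the prefactor must combine differently — I would recheck that $\frac{\hbar}{2L} \cdot \frac{\sin(\cdots)}{\frac{\hbar\pi k}{2L} - p}$ with the substitution gives exactly $\frac{1}{2L}\cdot\frac{\sin(\cdots)}{\frac{\pi}{2L}(j+q)}$, which works since one power of $\hbar$ cancels between numerator-scaling and denominator. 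The other two terms in~\eqref{eq:symbol_P_N} (the $\frac{\hbar\pi k}{2L}+p$ denominator, which stays of order $1$ and whose sum of $N$ terms each $O(\hbar)$ vanishes, and the $\cos/p$ term, which also vanishes since $1/p \to 2L/(\pi\mu)$ is bounded and the sum of $N$ bounded oscillating terms times $\hbar$ vanishes by an Abel/Dirichlet-summation argument) do not contribute, leaving the infinite series.

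The main obstacle I anticipate is the rigorous control of the surviving series in~\eqref{eq:pointwise2}: the series $\sum_{j\geq 0} \frac{\sin(2(L-|x|)\frac{\pi}{2L}(j+q))}{\frac{\pi}{2L}(j+q)}$ converges only conditionally, so exchanging the limit $N\to\infty$ with the summation requires a uniform tail estimate — e.g. summation by parts on the partial sums of $\sin(2(L-|x|)\frac{\pi}{2L}(j+q))$ (which are bounded uniformly in the tail since $(L-|x|)\frac{\pi}{2L} \in (0,\pi)$ is bounded away from $0$ and $\pi$ for $x$ fixed in the open box) combined with the monotone decay of $1/(j+q)$. A secondary subtlety is handling the boundary/degenerate cases $x = \pm L$ in~\eqref{eq:pointwise2} and the overlap region where $u \geq 0$ but $\hbar u$ is not small compared to the box size in~\eqref{eq:pointwise1}; but since the limits are asserted only pointwise in $u$ and $x$ respectively, for fixed $u$ and fixed $x \in (-L,L)$ the constraints $\hbar u < L$ etc. hold for all small $\hbar$, so these are not genuine obstructions. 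I would also need to verify that the fast-oscillating $\cos(\pi k \hbar u)$ contributions in~\eqref{eq:pointwise1} genuinely produce the $\frac{\sin(2pu)}{pu}\sin(\frac{\pi\mu}{L}u)$ term rather than averaging to zero — this is where treating $\cos(2\pi k - \pi\hbar k u) = \cos(\pi\hbar k u)$ as a slowly-varying (in $\hbar k$) Riemann-sum integrand $\cos(\pi t u)$ over $t \in [0,\mu]$ is essential, and I would make that step explicit.
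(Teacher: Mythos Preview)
Your plan for~\eqref{eq:pointwise1} is essentially identical to the paper's: split~\eqref{eq:symbol_P_N} into its three pieces, substitute $x=L-\hbar u$, and recognize each sum as a Riemann sum over $t=\hbar k\in[0,\mu]$; the handling of $\cos\bigl(\tfrac{\pi k}{L}(L+x)\bigr)=\cos(2\pi k-\tfrac{\pi\hbar k}{L}u)=\cos(\tfrac{\pi\hbar k}{L}u)$ as a slowly varying Riemann integrand is exactly what the paper does.

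For~\eqref{eq:pointwise2} your organization differs from the paper's (which reindexes \emph{both} the ``$+p$'' and ``$-p$'' sums into a single partial sum $\sum_{j=0}^{2N}$ and then isolates a remainder $R_N\to0$), but the substance is the same, with one genuine gap. Your claim that the ``$\tfrac{\hbar\pi k}{2L}+p$'' contribution vanishes because it is a ``sum of $N$ terms each $O(\hbar)$'' is wrong as stated: after your own substitution $p=\tfrac{\hbar\pi}{2L}(N+q)$ the $\hbar$ in the prefactor cancels against the $\hbar$ in the denominator, and that piece equals exactly
\[
\frac{1}{2L}\sum_{j=N+1}^{2N}\frac{\sin\bigl(2(L-|x|)\tfrac{\pi}{2L}(j+q)\bigr)}{\tfrac{\pi}{2L}(j+q)},
\]
whose crude bound is only $O(\log 2)$, not $o(1)$. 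It \emph{does} tend to zero, but for the right reason: it is the tail segment $\sum_{N+1}^{2N}$ of the same conditionally convergent series you obtain from the ``$-p$'' piece, and it is killed by precisely the summation-by-parts tail estimate you already flag as needed for the main series. Once you apply that estimate here as well, your argument goes through --- and in fact collapses to the paper's, since then $B+A$ is nothing but the partial sum $\sum_{j=0}^{2N}$.
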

\begin{proof}
We begin by writing the symbol of $\Pi_N$ as
$$
\sigmaPc(x,p)=\chi_{[-L,L]}(x)\left(A+B-2C\right),
$$
where
\begin{align*}
A=   \frac{\hbar}{2 L}\sum_{k=1}^N&\frac{\sin \left(\frac{2}{\hbar} (L-\left| x\right| ) \left(\frac{\hbar\pi  k}{2 L}+p\right)\right)}{\frac{\hbar\pi  k}{2 L}+p},\quad
   B= \frac{\hbar}{2 L}  \sum_{k=1}^N\frac{\sin \left(\frac{2}{\hbar} (L-\left| x\right| ) \left(\frac{\hbar\pi  k}{2 L}-p\right)\right)}{\frac{\hbar\pi  k}{2 L}-p},\\
   C&= \frac{\hbar}{2 L}  \sum_{k=1}^N
\frac{\cos \left(\frac{\pi  k}{L}\left(L+x\right)\right) \sin \left(\frac{2p}{\hbar} (L-\left| x\right| )\right)}{p}.
\end{align*}
We start by considering the case $x=L-\hbar{u}$. Of course $\chi_{[-L,L]}(L-\hbar{u})\to \chi_{\R_+}(u)$ pointwise, as $\hbar\to0$. If we replace  $x=L-\hbar{u}$ with $u\geq0$ in $A$, we obtain a Riemann sum ($\hbar=\mu/N$) of a continuous function, and hence,
$$
A= \frac{\hbar}{2 L}\sum_{k=1}^N\frac{\sin \left(2u\left(\frac{\hbar\pi  k}{2 L}+p\right)\right)}{\frac{\hbar\pi  k}{2 L}+p}\to\frac{1}{2L}\int_0^{\mu}\frac{\sin\left(2u\left(\frac{\pi\xi}{2L}+p\right)\right)}{2u\left(\frac{\pi\xi}{2L}+p\right)}d\xi=\frac{1}{\pi}\int_0^{2u\left(\frac{\pi\mu}{2L}+p\right)}\frac{\sin\xi}{\xi}d\xi.
$$
Similarly, for $B$. We now analyse $C$, for $x=L-\hbar u$,
\begin{align*}
C=\frac{\sin \left(2pu \right)}{p}\frac{\hbar}{2L}\sum_{k=1}^N\cos \left(\frac{\pi  \hbar k}{L}u\right)&\to \frac{\sin \left(2pu\right)}{p}\frac{1}{2L}\int_0^{\mu}\cos \left(\frac{\pi \xi}{L}u\right)d\xi\\
&=\frac{\sin \left(2pu\right)}{p}\frac{1}{2\pi}\sin\left(\frac{\pi\mu}{L}u\right).
\end{align*}

We now consider the case $p=\frac{\pi \mu}{2L}+\frac{\hbar\pi}{2L} {v}$. In this case we get
\begin{align*}
A+B&= \frac{1}{2 L}\sum_{k=1}^N\frac{\sin \left(2(L-|x|)\frac{\pi}{2L}\left(k+N+{v}\right)\right)}{\frac{\pi}{2L}\left(k+N+{v}\right)}+ \frac{1}{2 L}\sum_{k=1}^N\frac{\sin \left(2(L-|x|)\frac{\pi}{2L}\left(k-N-{v}\right)\right)}{\frac{\pi}{2L}\left(k-N-{v}\right)}\\
&= \frac{1}{2 L}\sum_{j={N+1}}^{2N}\frac{\sin \left(2(L-|x|)\frac{\pi}{2L}\left(j+{v}\right)\right)}{\frac{\pi}{2L}\left(j+{v}\right)}+ \frac{1}{2 L}\sum_{j=0}^{N-1}\frac{\sin \left(2(L-|x|)\frac{\pi}{2L}\left(j+{v}\right)\right)}{\frac{\pi}{2L}\left(j+{v}\right)}\\
&= \frac{1}{2 L}\sum_{j={0}}^{2N}\frac{\sin \left(2(L-|x|)\frac{\pi}{2L}\left(j+{v}\right)\right)}{\frac{\pi}{2L}\left(j+{v}\right)}- \frac{1}{2 L}\frac{\sin \left(2(L-|x|)\frac{\pi}{2L}\left(N+{v}\right)\right)}{\frac{\pi}{2L}\left(N+{v}\right)},
\end{align*}
and
\begin{align*}
-2C&= \frac{1}{2 L}\frac{\sin\left(2(L-|x|)\frac{\pi}{2L}\left(N+{v}\right)\right)
}{\frac{\pi}{2L}(N+{v})}\left(-2\sum_{k=1}^N\cos\left(\frac{\pi k}{L}(L+x)\right)\right)\\
&= \frac{1}{2 L}\frac{\sin\left(2(L-|x|)\frac{\pi}{2L}\left(N+{v}\right)\right)
}{\frac{\pi}{2L}(N+{v})}\left(1-\frac{\sin\left((2N+1)\frac{\pi}{4L}(L+x)\right)}{\sin\left(\frac{\pi}{4L}(L+x)\right)})\right).
\end{align*}
Hence,
$$
A+B-2C= \frac{1}{2 L}\sum_{j={0}}^{2N}\frac{\sin \left(2(L-|x|)\frac{\pi}{2L}\left(j+{v}\right)\right)}{\frac{\pi}{2L}\left(j+{v}\right)}+R_N
$$
where
$$R_N=
-\frac{1}{2L}\frac{\sin\left(2(L-|x|)\frac{\pi}{2L}\left(N+{v}\right)\right)
}{\frac{\pi}{2L}(N+{v})}\frac{\sin\left((2N+1)\frac{\pi}{4L}(L+x)\right)}{\sin\left(\frac{\pi}{4L}(L+x)\right)}.
$$
Note that $R_N=0$ for $|x|=L$. When $|x|<L$, we have $|R_N|\to0$ as $N\to\infty$. To see that the series in~\eqref{eq:pointwise2} is convergent, we sandwich the sequence $\left(\frac{\sin \left(2(L-|x|)\frac{\pi}{2L}\left(j+{v}\right)\right)}{\frac{\pi}{2L}\left(j+{v}\right)}\right)_{j\geq0}$ by two alternate sequences that satisfy the Leibniz criterion. The claim is proved.
\end{proof}
For a numerical illustration, see Figure~\ref{fig:symbol}.
\subsection{Semiclassical limit of $ \sigma_{\Pi_N H\Pi_N}^\hbar$}

We now address the problem of genuinely truncated quantum observables $\Pi_N H\Pi_N$, with $\Pi_N$  the spectral projection~\eqref{eq:proj_box} onto the first $N$ levels of a free particle in a box, $\operatorname{Ran}\Pi_N\subset D(H)$ and $[H,\Pi_N]\neq0$. 

We present here two explicit examples that can be thought of as  natural analogues of the truncated momentum operator in the harmonic oscillator eigenbasis of Section~\ref{subs:harmonic}.

\subsubsection{Multiplication by a tridiagonal operator}
Define $H=f(\hat{x})$ as a multiplication operator on $D(f(\hat{x}))=L^2(\R)$ where
\be
\label{eq:special_g}
f(x)=\frac{1}{\sqrt{L}}\sin\left( \frac{\pi}{2L}x\right).
\ee
Note that $f(\hat{x})$ does not commute with the spectral projections $\Pi_N$. We have instead,
\be
\label{eq:tridiag}
f(\hat{x})u_k=-\left(\frac{1}{2\sqrt{L}}u_{k-1}+\frac{1}{2\sqrt{L}}u_{k+1}\right).
\ee
We remark that $f(\hat{x})$ is tridiagonal in the basis $\{u_k\}_{k\geq1}$, and in this sense it is the analogue of the operator $\hat{p}$ for the Hermite wavefunctions considered in Section~\ref{subs:harmonic}. 
From~\eqref{eq:tridiag}  we get an explicit formula for the integral kernel of the truncated observable 
\begin{align}
K_{\Pi_Nf(\hat{x})\Pi_N}(x,y)&=-\frac{1}{2\sqrt{L}}\sum_{k=1}^{N-1}\left[u_k(x)u_{k+1}(y)+u_{k+1}(x)u_{k}(y)\right].
\label{eq:K_N2}
\end{align}
Equivalently, the matrix representation of $\Pi_Nf(\hat{x})\Pi_N$ in the basis $\left\{ u_k \right\}_{k \geq 1}$ is the $N\times N$ real symmetric matrix
\be
\label{eq:matrixm}
-\frac{1}{2\sqrt{L}}
\begin{bmatrix}
0 & 1& 0& 0& \ldots &&\\
1& 0& 1&0& & &\\
0 &1&0 &1&&&\\
0&0&1 &0&& \\ 
\vdots&& && \ddots& \\ 
 & & && & 0&1 \\
 &   &&&&1 &0
\end{bmatrix}.
\ee
We can check the Conditions of Theorem~\ref{thm:main2}:
\begin{enumerate}[label={(C\arabic*)},ref=C\arabic*]
\item $\|\sigma_{\Pi_Nf(\hat{x})\Pi_N}^{\hbar}\|_{L^2}$ is  bounded, and in fact it converges to $\|f\chi_R\|_{L^2}$. Indeed (see the matrix representation~\eqref{eq:matrixm}),
\begin{align*}
\|\sigma_{\Pi_Nf(\hat{x})\Pi_N}^{\hbar}\|^2_{L^2}&=2\pi\hbar \|K_{\Pi_Nf(\hat{x})\Pi_N}^{\hbar}\|^2_{L^2}\\
&=2\pi\hbar\Tr (\Pi_Nf(\hat{x})\Pi_N)^2=
2\pi\hbar \frac{1}{4L}2\sum_{k=1}^{N-1}1^2\to \frac{\pi\mu}{L}.
\end{align*}
On the other hand,
$$
\|f\chi_R\|_{L^2}^2=\int_{R} f(x)^2dxdp=\int_{-\frac{\pi \mu}{2L}}^{\frac{\pi \mu}{2L}}dp\int_{-L}^L \frac{1}{L}\sin^2\left( \frac{\pi}{2L}x\right)dx= \frac{\pi\mu}{L};$$
\item $\sigmaPc$ strongly converges  to $\chi_{R}\in L^2(R_x\times\R_p)$ (see Theorem~\ref{thm:convergenceL2_R});
\item $\lim\limits_{\substack{N\to\infty, \hbar\to0\\\hbar N=\mu}}\|\sigma_{\Pi_N^{\perp}f(\hat{x})\Pi_N}^{\hbar}\|_{L^2}=0$ by the tridiagonal structure of $\Pi_Nf(\hat{x})\Pi_N$:
\begin{align*}
\|\sigma_{\Pi_N^{\perp}f(\hat{x})\Pi_N}\|^2_{L^2}&=\|\sigma_{\Pi_Nf(\hat{x})\Pi_N}^{\hbar}(x,p)-\sigma_{m(\hat{x})\Pi_N}^{\hbar}\|^2_{L^2}\\
&=
2\pi\hbar\|K_{\Pi_Nf(\hat{x})\Pi_N}-K_{f(\hat{x})\Pi_N}\|^2_{L^2}\\
&=2\pi\hbar\frac{1}{4L}\|u_{N+1}\|^2_{L^2}\|u_{N}\|^2_{L^2}=\frac{\pi\hbar}{2L}\to0;
\end{align*}
\end{enumerate}
We conclude that
\be
\lim\limits_{\substack{N\to\infty, \hbar\to0\\\hbar N=\mu}}\|\sigma_{\Pi_N^{}f(\hat{x})\Pi_N}^{\hbar}-f\chi_{R}\|_{L^2(\R_x\times\R_p)}=0.
\ee

\subsubsection{Truncated momentum}

We now consider the operator $H=\hat{p}$ with domain $D(\hat{p})=H^1(\R)$, acting as
$\hat{p}f(x)=-i\hbar f'(x)$.
Note that $\operatorname{Ran}\Pi_N\subset D(\hat{p})$, but $[\Pi_N,\hat{p}]\neq0$. In fact, in contrast to the previous example, the commutator $[\Pi_N,\hat{p}]$ is \emph{not} a finite rank operator. 
The integral kernel of the truncated momentum $\Pi_N\hat{p}\Pi_N$ is
\begin{align}\label{kernelHN}
K_{\Pi_N\hat{p}\Pi_N}(x,y)&=\int_{\R}K_{\Pi_N}(x,z)\left(-i\hbar\frac{\partial}{\partial z}\right)K_{\Pi_N}(z,y)dz=\sum_{j,k=1}^N C^{\hbar}_{j,k}u_j(x)u_k(y),
\end{align}
where
\begin{equation}
C^{\hbar}_{j,k}=\langle u_j|\hat{p}|u_k\rangle=-\frac{i\hbar }{L}\left[1-(-1)^{j+k}\right]\frac{jk}{j^2-k^2}.
\end{equation}

The matrix representation of $\Pi_N\hat{p}\Pi_N$ in the basis $\left\{ u_k \right\}_{k \geq 1}$ is the $N\times N$ complex hermitian matrix
\be
\label{eq:matrixp}
-\frac{i \hbar}{L}
\begin{bmatrix}
0 & -\frac{4}{3}& 0&-\frac{8}{15} & \ldots &\\
\frac{4}{3}& 0& -\frac{12}{5}&0& & \vdots \\
0 &\frac{12}{5} &0 &- \frac{24}{7}&\\ 
\frac{8}{15} &0& \frac{24}{7} &\ddots& \\ \\
\vdots & & & & 0&-\frac{2N(N-1)}{2N-1} \\
 &\cdots   &&&\frac{2N(N-1)}{2N-1} &0
\end{bmatrix}.
\ee

The Weyl symbol of $\Pi_N\hat{p}\Pi_N$ is
\begin{multline}
\label{eq:symbol_HN}
\sigma_{\Pi_N\hat{p}\Pi_N}^{\hbar}(x,p)=\chi_{[-L,L]}(x)\frac{\hbar}{2L}\sum_{1\leq k<j\leq N} C^{\hbar}_{j,k}\times \\
\sum_{\epsilon_1,\epsilon_2=\pm1}
\frac{\epsilon_1\epsilon_2}{\frac{\hbar \pi}{4L}(j+\epsilon_1k)+\epsilon_2p}\sin\left( \frac{\pi}{2L}(j-\epsilon_1k)(x+L)\right) \sin\left(\frac{2}{\hbar}(L-|x|)\left( \frac{\hbar\pi}{4L}(j+\epsilon_1k)+\epsilon_2 p\right) \right).
\end{multline}
To derive~\eqref{eq:symbol_HN} we used~\eqref{eq:symbol_ujuk} and the facts that  $C_{j,k}$ is purely imaginary, $C_{j,k}=\overline{C_{k,j}}$, and $\sigma_{|u_j\rangle\langle u_k|}^{\hbar}=\overline{\sigma_{|u_j\rangle\langle u_k|}^{\hbar}}$.

This operator is the analogue of the truncated momentum of Sec.~\ref{subs:harmonic}. Here we replace the 
 harmonic oscillator spectral projections with the spectral projections of the free particle in a box. This apparently innocuous replacement destroys the tridiagonal structure of the truncated momentum.
 
We now proceed to show that $\sigma_{\Pi_N^{}\hat{p}\Pi_N}^{\hbar}$ does fit in the hypotheses of Theorem~\ref{thm:main2}:
\begin{enumerate}[label={(C\arabic*)},ref=C\arabic*]
\item $\|\sigma_{\Pi_N\hat{p}\Pi_N}^{\hbar}\|_{L^2}$ is  bounded. Indeed,
\begin{align*}
\|\sigma_{\Pi_N\hat{p}\Pi_N}^{\hbar}\|^2_{L^2}=2\pi\hbar \|K_{\Pi_N\hat{p}\Pi_N}^{\hbar}\|^2_{L^2}&=2\pi\hbar\sum_{j,k=1}^N|C_{j,k}^{\hbar}|^2\\
&=\frac{2\pi\hbar^3}{L^2} \sum_{j,k=1}^N\left[1-(-1)^{j+k}\right]^2\left(\frac{jk}{j^2-k^2}\right)^2\\
\end{align*}
We change variables $m=j+k$, $n=j-k$, and bound the sum from above by
\begin{align*}
\frac{2\pi\hbar^3}{L^2} \sum_{\substack{m=2}}^{2N}\sum_{\substack{n=|m-N-1|-(N-1)\\ n\neq0}}^{|m-N-1|+(N-1)}4\left(\frac{m}{n}-\frac{n}{m}\right)^2
&\leq\frac{2\pi\hbar^3}{L^2} \sum_{m=1}^{2N}\sum_{\substack{n=-N\\ n\neq0}}^{N}8\left(\frac{m^2}{n^2}+\frac{n^2}{m^2}\right)\\
&\leq\frac{16\pi\hbar^3}{L^2} 2\left(2\sum_{n=1}^{N}\frac{1}{n^2}\right)\left(\sum_{m=1}^{2N}m^2\right)\\
&\leq \frac{2^9\pi\hbar^3}{L^2}\zeta(2)N^3= \frac{2^9\pi^3\mu^2}{6L^2}.
\end{align*}

\item $\sigmaPc$ converges strongly to $\chi_{R}\in L^2(R_x\times\R_p)$ (see Theorem~\ref{thm:convergenceL2_R});
\item We show that $\lim\limits_{\substack{N\to\infty, \hbar\to0\\\hbar N=\mu}}\|\sigma_{\Pi_N^{\perp}\hat{p}\Pi_N}^{\hbar}\|_{L^2}=0$. Let  $B_{\hbar,N}:=\|\sigma_{\Pi_N^{\perp}\hat{p}\Pi_N}^{\hbar}\|_{L^2}$.  Then, 
\begin{align}
\label{eq:BhN}
B_{\hbar,N}=2\pi\hbar\sum_{\substack{k\leq N\\ j>N}}|C_{j,k}|^2
&\leq\frac{8\pi\hbar^3}{L^2} \sum_{\substack{k\leq N\\ j>N}}\left(\frac{jk}{j^2-k^2}\right)^2=\frac{8\pi\mu^3}{L^2} \frac{1}{N}\frac{1}{N^2}\sum_{\substack{k\leq N\\ j>N}}\left(\frac{jk}{j^2-k^2}\right)^2.
\end{align}
We now write, for $\epsilon=\frac{1}{N}$,
$$
\frac{1}{N^2}\sum_{\substack{k\leq N\\ j>N}}\left(\frac{jk}{j^2-k^2}\right)^2=\frac{1}{N^2}\sum_{\substack{k\leq N-\epsilon N\\ j>N+\epsilon N}}\left(\frac{jk}{j^2-k^2}\right)^2=\int_{0}^{1-\epsilon}{dv}\int_{1+\epsilon}^{\infty}{du}\left(\frac{uv}{u^2-v^2}\right)^2+R_{N}(\epsilon)
$$
for some remainder $R_{N}(\epsilon)$. With the change variables $x=u+v/2$, $y=u-v/2$, the integral simplifies and we can show that
$$
\int_{0}^{1-\epsilon}{dv}\int_{1+\epsilon}^{\infty}{du}\left(\frac{uv}{u^2-v^2}\right)^2\leq C\log\frac{1}{\epsilon}=C\log N,\quad |R_{N}(\epsilon)|\leq C
$$
for some constant $C$ independent of $N$. Plugging these estimates in~\eqref{eq:BhN} we see that $B_{\hbar,N}\to0$.
\end{enumerate}
The latter estimates can be used in fact  to show the stronger condition
\begin{enumerate}[label={(C\arabic*')},ref=C\arabic*']
\item $\|\sigma_{\Pi_N\hat{p}\Pi_N}^{\hbar}\|_{L^2(R_x\times\R_p)}\to \|p\chi_{R}\|_{L^2(R_x\times\R_p)}$. Indeed:
\end{enumerate}
\begin{align*}
\|\sigma_{\Pi_N\hat{p}\Pi_N}^{\hbar}\|^2_{ L^2}&=2\pi\hbar \|K_{\Pi_N\hat{p}\Pi_N}^{\hbar}\|^2_{L^2}\\
&=2\pi\hbar \,\Tr (\Pi_N\hat{p}\Pi_N)^2\\
&=2\pi\hbar \, \Tr (\Pi_N\hat{p}^2\Pi_N)+B_{\hbar,N}\\
&=2\pi\hbar \sum_{k=1}^{N}2E_k+B_{\hbar,N}=\frac{\pi^3\hbar^3}{2L^2} \sum_{k=1}^{N}k^2+B_{\hbar,N}.\end{align*}
where we used the explicit form of $E_k$~\eqref{eq:eigenvalues}. The term $B_{\hbar,N}$ tends to $0$.  
Hence, $\|\sigma_{\Pi_N\hat{p}\Pi_N}^{\hbar}\|^2_{L^2(R_x\times\R_p)}\to \frac{\pi^3\mu^3}{6L^2}$, and also
$$
\|p\chi_R\|^2_{L^2(R_x\times\R_p)}=\int_{R}p^2dxdp=\int_{-\frac{\pi \mu}{2L}}^{\frac{\pi \mu}{2L}}p^2dp\int_{-L}^L dx= \frac{\pi^3\mu^3}{6L^2}.$$
We conclude that there is strong convergence of the truncated momentum
$$
\lim\limits_{\substack{N\to\infty, \hbar\to0\\\hbar N=\mu}}\int_{\R_x\times\R_p}\left|\sigma_{\Pi_N^{}\hat{p}\Pi_N}^{\hbar}(x,p)-p\chi_{R}(x,p)\right|^2dxdp=0.
$$

\section*{Acknowledgements}
Research supported by the Italian National Group of Mathematical Physics (GNFM-INdAM), by PNRR MUR projects CN00000013-`Italian National Centre on HPC, Big Data and Quantum Computing' and PE0000023-NQSTI,  by Regione Puglia through the project `Research for Innovation' - UNIBA024, by Istituto Nazionale di Fisica Nucleare (INFN) through the project `QUANTUM', and by PRIN 2022 project 2022TEB52W-PE1-`The charm of integrability: from nonlinear waves to random matrices'.

\section*{Declarations}
\subsection*{Funding and/or Conflicts of interests/Competing interests} The authors have no competing interests to declare that are relevant to the content of this article.

\subsection*{Data availability} 
Data sharing not applicable to this article as no datasets were generated or analysed during the current study.


\begin{thebibliography}{10}

\bibitem{Ambrosio11} 
L. Ambrosio, A. Figalli, G. Friesecke, J. Giannoulis, T. Paul,
\emph{Semiclassical limit of quantum dynamics with rough potentials and well-posedness of transport equations with measure initial data},
Comm. Pure Appl. Math. {\bf 64}, no. 9, 1199-1242 (2011).

\bibitem{Bettelheim11}
E. Bettelheim and P. B. Wiegmann,
\emph{Universal Fermi distribution of semiclassical nonequilibrium Fermi states},
Phys. Rev. B {\bf 84}, 085102 (2011).

\bibitem{Bettelheim12}
E. Bettelheim and L. Glazman,
\emph{Quantum Ripples Over a Semiclassical Shock},
Phys. Rev. Lett. {\bf 109}, 260602 (2012).

\bibitem{Bornemann16} F. Bornemann,
\emph{On the Scaling Limits of Determinantal
Point Processes with Kernels Induced
by Sturm-Liouville Operators},
SIGMA {\bf 12}, 083 (2016).

\bibitem{chernoff}
P. R. T. Chernoff, \emph{Product Formulas, Nonlinear Semigroups, and Addition of Unbounded Operators}, Memoirs of the American Mathematical Society Vol. 140 American Mathematical Society, Providence, (1974).

\bibitem{Connes21}
A. Connes, W. D. van Suijlekom,
\emph{Spectral Truncations in Noncommutative Geometry and Operator Systems},
Commun. Math. Phys. {\bf 383}, 2021-2067 (2021).

\bibitem{Cunden18}
F. D. Cunden, F. Mezzadri and N. O'Connell,
\emph{Free fermions and the classical compact groups}, 
J. Stat. Phys. {\bf 171}, 768-801 (2018).

\bibitem{Cunden19}
F. D. Cunden, S. N. Majumdar, N. O'Connell,
\emph{Free fermions and $\alpha$-determinantal processes},
J. Phys. A: Math. Theor. {\bf 52}, 165202 (2019).

\bibitem{Cunden23}
F. D. Cunden, P. Facchi, M. Ligab\`o, 
\emph{The semiclassical limit of a quantum Zeno dynamics}, 
Lett. Math. Phys. {\bf 113}, 114 (2023).


\bibitem{Dean18}
D. S. Dean, P. Le Doussal, S. N. Majumdar and G. Schehr,
 \emph{Wigner function of noninteracting trapped fermions},
 Phys. Rev. A {\bf 97}, 063614 (2018).

\bibitem{Dean19}
D. S. Dean, P. Le Doussal, S. N. Majumdar and G. Schehr, 
\emph{Noninteracting fermions in a trap and random matrix theory}, 
J. Phys. A: Math. Theor. {\bf 52}, 144006 (2019).

\bibitem{DeBruyne21}
B. De Bruyne, D.S. Dean, P. Le Doussal, S.~N. Majumdar, G. Schehr, 
\emph{Wigner function for noninteracting fermions in hard-wall potentials}, 
Phys. Rev. A {\bf104}, 013314 (2021).

\bibitem{Deift99}
P. Deift, 
\emph{Orthogonal polynomials and random matrices: a Riemann-Hilbert approach},
AMS Bookstore, 1999.


\bibitem{Deleporte21}
A. Deleporte, G. Lambert,
\emph{Universality for free fermions and the local Weyl law for semiclassical Schr\"odinger operators},
{J. Eur. Math. Soc. (2024), published online first, DOI 10.4171/JEMS/1447.}

\bibitem{DAndrea14}
F. D'Andrea, F. Lizzi, P. Martinetti,
\emph{Spectral geometry with a cut-off: topological and metric aspects},
Journal of Geometry and Physics {\bf 82}, 18-45 
(2014).

\bibitem{Eisler13}
V. Eisler, 
\emph{Universality in the full counting statistics of trapped Fermions}, 
Phys. Rev. Lett. {\bf 111}, 080402 (2013).

\bibitem{Exner07}
P. Exner, T. Ichinose, H. Neidhardt, V. Zagrebnov, \emph{Zeno product formula revisited}, Integral Equ. Oper. Theory
57, 67 (2007).


\bibitem{Exner21}
P. Exner, T. Ichinose, \emph{Note on a Product Formula Related to Quantum Zeno Dynamics}, Ann. Henri Poincar\'e 22, 1669 (2021).

\bibitem{Facchi10}
P. Facchi and M. Ligab\`o, 
\emph{Quantum Zeno effect and dynamics}, 
J. Math. Phys. {\bf 51}, 022103 (2010).

\bibitem{Facchi08}
P. Facchi, S. Pascazio, \emph{Quantum Zeno dynamics: Mathematical and physical aspects}, J. Phys. A: Math. Theor. 41,
493001 (2008).

\bibitem{Figalli12}
A. Figalli, M. Ligab\`o and T. Paul,
\emph{Semiclassical Limit for Mixed States with Singular and Rough Potentials},
Indiana University Mathematics Journal {\bf 61}, No. 1, 193-222 (2012).


\bibitem{folland}
G. B. Folland: \emph{Harmonic Analysis in Phase Space}, Princeton University Press, (1989).

\bibitem{Pool} 
J. C. T. Pool,
 \emph{Mathematical aspects of the Weyl correspondence},
  J. Math. Phys. 7, 66-76, 1966.

\bibitem{ReedSimon} M. Reed and B. Simon:, Methods of Modern Mathematical Physics, I. Functional Analysis. Academic Press, New York, 1978.


\bibitem{Hernandez-Duenas15}
G. Hernandez-Duenas, A. Uribe,
\emph{Algebras of semiclassical pseudodifferential operators associated with Zoll-type domains in cotangent bundles},
J. Funct. Anal. {\bf 268},  1755-1807 (2015).

 \bibitem{Peres06} 
 J. B. Hough, M. Krishnapur, Y. Peres and 
B. Vir\'ag,
\emph{Determinantal Processes and Independence},
Probability Surveys {\bf 3}, 206-229 (2006).


\bibitem{Lacroix17} 
B. Lacroix-A-Chez-Toine, P. Le Doussal, S.N. Majumdar, G. Schehr,
\emph{Statistics of fermions in a $d$-dimensional box near a hard wall}, 
Europhys. Lett. {\bf 120},  10006 (2017).

\bibitem{Lions93}
P. L. Lions and T. Paul,
\emph{Sur les mesures de Wigner},
Revista Matem\'atica Iberoamericana {\bf 9(3)}, 553-618 (1993).

\bibitem{Lizzi03}
F. Lizzi, P. Vitale,  A. Zampini,
\emph{The Fuzzy Disc},
JHEP {\bf 08}, 057 (2003).




\bibitem{Matolcsi03}
M. Matolcsi, R. Shvidkoy, \emph{Trotter's product formula for projections}, Arch. Math. {\bf 81}, 309 (2003).

\bibitem{Misra}
B. Misra, E. C. G. Sudarshan, \emph{The Zeno's paradox in quantum theory}, J. Math. Phys. 18, 756 (1977).



\bibitem{Nguyen22} Ngoc Nhi Nguyen,
\emph{Fermionic semiclassical Lp estimates},
{J. Funct. Anal. {\bf 286}, 110196, (2024). }

\bibitem{Raimond10} 
J.-M. Raimond, C. Sayrin, S. Gleyzes, I. Dotsenko, M. Brune, S. Haroche, P. Facchi, S. Pascazio,
\emph{Phase space tweezers for tailoring cavity fields by quantum Zeno dynamics},
Phys. Rev. Lett. {\bf 105}, 213601  (2010).

\bibitem{Raimond12} 
J.-M. Raimond, P. Facchi, B. Peaudecerf, S. Pascazio, C. Sayrin, I. Dotsenko, S. Gleyzes, M. Brune, S. Haroche,
\emph{Quantum Zeno dynamics of a field in a cavity},
Phys. Rev. A {\bf 86}, 032120 (2012).


\bibitem{robert}
D. Robert: \emph{Autour De L'Approximation Semi-Classique}, Birkhauser Boston, (1987).

\bibitem{Simon08}
B. Simon,
\emph{The Christoffel-Darboux kernel}, Perspectives in Partial Differential Equations, Harmonic Analysis and Applications.
Proc. Sympos. Pure Math. {\bf 79}, 295-335 (2008).



\bibitem{Torquato08}
S. Torquato, A. Scardicchio and C. E. Zachary,  
\emph{Point processes in arbitrary dimension from fermionic gases, random matrix theory, and number theory}, 
J. Stat. Mech. {\bf P11019} (2008).

\bibitem{Zworski12}
M. Zworski, 
\emph{Semiclassical analysis}, volume 138 of Graduate Studies in Mathematics. American Mathematical Society, Providence, RI, 2012.


\end{thebibliography}
\end{document}